\documentclass[12pt]{article}
\usepackage{amsmath,amssymb,amsfonts,amsthm,bm,mathrsfs}
\usepackage{graphicx,booktabs,array,multirow}
\usepackage{psfrag,epsf}
\usepackage{enumerate}
\usepackage{natbib}
\usepackage{url} % not crucial - just used below for the URL 
\usepackage{xcolor}
\usepackage{xurl}
\usepackage{algorithm}
\usepackage{algpseudocode}
\usepackage[doublespacing]{setspace}
\usepackage{enumitem}
\usepackage{microtype}
\usepackage{hyperref}
\usepackage{mathtools}

\newtheorem{theorem}{Theorem}[section]
\newtheorem{proposition}[theorem]{Proposition}
\newtheorem{lemma}[theorem]{Lemma}

\newtheorem{assumption}{Assumption}[section]
\newtheorem{remark}{Remark}[section]

\newcommand{\ind}{\mathbb I}
\newcommand{\bx}{\bm x}
\newcommand{\bz}{\bm z}
\newcommand{\btheta}{\bm{\theta}}
\newcommand{\balpha}{\bm{\alpha}}
\newcommand{\bbeta}{\bm{\beta}}
\newcommand{\bgamma}{\bm{\gamma}}
\newcommand{\bpsi}{\bm{\psi}}
\newcommand{\bxi}{\boldsymbol{\xi}}

\newcommand{\norm}[1]{\left\lVert #1\right\rVert}

\newcommand{\Appendix}{\def\thesection{Appendix~\Alph{section}}\def\thesubsection{A.\arabic{subsection}}}

\begin{document}

\def\spacingset#1{\renewcommand{\baselinestretch}%
{#1}\small\normalsize} \spacingset{1}

%%%%%%%%%%%%%%%%%%%%%%%%%%%%%%%%%%%%%%%%%%%%%%%%%%%%%%%%%%%%%%%%%%%%%%%%%%%%%%

%\if1\blind
{
  \title{\bf Scalable Nonparametric Change-Plane Regression with Random Features: Estimation and Inference}
  \author{Wei Liu\hspace{.2cm}\\
    School of Mathematics, Sichuan University, Chengdu,  China \\
    and \\
   Shaogao Lv \\
    School of Statistics and Data Science\\ Southeast University, Nanjing, China}
    \date{}
  \maketitle
} %\fi

%\if0\blind
%{
 %    \title{\bf Scalable Nonparametric Change-Plane Regression with Random Features: Estimation and Inference}
  %    \date{}
  % \maketitle
%} \fi

\bigskip
\begin{abstract}
Threshold and change-plane models provide a useful framework for capturing
regime-dependent dynamics, but most existing approaches typically rely on scalar
thresholds or linear splitting boundaries. We propose a scalable
semiparametric change-plane regression framework that allows the regime
boundary to depend nonlinearly on multivariate state variables. Random Fourier features provide a flexible and computationally efficient
approximation to the nonlinear boundary function. A two-stage procedure first
estimates the induced regime classification using a smooth surrogate and then
refits the structural parameters by hard-threshold OLS. We establish convergence
of the first-stage regime classification and show that its estimation error is
asymptotically negligible for the second-stage refit, yielding oracle equivalence.
We further develop a dependent-multiplier bootstrap Sup-Wald test for detecting the existence of regime effects when the boundary is unidentified under the null, and establish
its asymptotic validity. Simulations demonstrate accurate regime recovery, near-oracle performance of the proposed estimator, and satisfactory finite-sample behavior of the
Sup-Wald test.
An application to postwar U.S. unemployment data provides strong evidence
of regime dependence, while the estimated nonlinear change plane identifies
a distinct stress regime characterized by significant shifts in short-run
unemployment dynamics.
\end{abstract}
\noindent%
{\it Keywords:} Threshold regression; change-plane model; random Fourier features;
nonparametric boundary; oracle equivalence; regime-existence testing.
\vfill

\newpage
\spacingset{1.45} % DON'T change the spacing!

\section{Introduction}
\label{sec:intro}

Threshold regression and change-plane models are widely used  in empirical macroeconomics and finance for capturing structural changes, state-dependent dynamics, and tipping points \citep[e.g.,][]{Tong1990,Hansen2000,caner2004instrumental,SeoLinton2007}. While early theoretical developments focused primarily on univariate threshold variables where sample splitting is determined by a scalar cutoff, macroeconomic and financial phenomena are rarely governed by a single metric. Instead, regime shifts may depend on composite, multidimensional economic conditions. For instance, labor-market dynamics may depend jointly on real economic
activity, financial conditions, and macroeconomic uncertainty
\citep[e.g.,][]{bernanke1999financial,gilchrist2012credit,ludvigson2021uncertainty}.
% For instance, in macroeconomic stress monitoring and labor market dynamics, the tipping point of an unemployment surge depends nonlinearly on the joint posture of real economic activity and financial frictions \citep[e.g.,][]{bernanke1999financial,gilchrist2012credit,ludvigson2021uncertainty}.
Restricting the regime boundary to a scalar threshold or a linear hyperplane may therefore lead to systematic regime misclassification and inaccurate estimation of regime-specific dynamics.

To relax the rigidity of scalar thresholds, a growing econometric literature
has explored multivariate sample splitting. One prominent branch extends the
scalar threshold to a parametric linear change plane, where the boundary is
governed by an unknown linear combination of covariates
\citep[e.g.,][]{SeoLinton2007,FanSongLu2017,lee2021factor,yu2021threshold,
zhang2022single,KangEtAl2025}. Although these models accommodate
multi-factor triggers, they impose the restriction that the regime boundary
must be an affine hyperplane. A parallel and more flexible strand of the
literature considers semi- and fully nonparametric threshold models, allowing
the threshold function or regression surfaces to be unknown
\citep[e.g.,][]{delgado2000nonparametric,henderson2015nonparametric,
kourtellos2016structural,LeeWang2023}. More recently, flexible nonlinear
regime boundaries with multiple covariates have also been explored through
classification-based approaches \citep{chang2026algorithm}.
These developments motivate a broader question: how can a nonlinear
multivariate regime boundary be represented flexibly while retaining
scalable estimation and tractable inference for the finite-dimensional
structural parameters? This issue is particularly important in threshold
regression because approximation errors in the boundary affect structural
estimation through the induced regime classification.

% To relax the rigidity of scalar thresholds, a growing econometric literature has explored multivariate sample splitting. One prominent branch extends the scalar threshold to a parametric linear change plane, where the boundary is governed by an unknown linear combination of covariates \citep[e.g.,][]{SeoLinton2007,FanSongLu2017,lee2021factor,yu2021threshold,zhang2022single,KangEtAl2025}. Although these models accommodate multi-factor triggers, they impose the strong restriction that the regime boundary must be an affine hyper-surface. A parallel and more flexible strand of literature pursues semi- and fully nonparametric threshold modeling, allowing the threshold function or the regression surfaces to be completely unknown \citep[e.g.,][]{delgado2000nonparametric,henderson2015nonparametric,kourtellos2016structural,LeeWang2023}. However, existing nonparametric formulations face a severe methodological barrier: they are practically constrained to low-dimensional (typically scalar) threshold variables. Extending traditional local polynomials, splines, or kernel weights to multivariate state vectors $\bz \in \mathbb{R}^p$ ($p \ge 2$) triggers an acute curse of dimensionality, while implicit reproducing kernel methods incur prohibitive $\mathcal{O}(n^2)$ storage and inversion burdens \citep[e.g.,][]{chang2026algorithm}. Consequently, applied researchers are trapped in an undesirable dichotomy between misspecified linear change planes and dimensionally restricted univariate nonparametric thresholds.

In this paper, we address this problem by developing a scalable
nonparametric change-plane regression framework based on randomized kernel
approximations. Drawing on the harmonic representation of positive-definite
kernels through Bochner's theorem \citep{RahimiRecht2007}, we approximate
the unknown transition boundary $f_0(\bz)$ using an explicit
$L$-dimensional random Fourier feature (RFF) expansion. This representation
replaces an implicit kernel representation with a finite-dimensional
Euclidean parameterization, allowing nonlinear boundary estimation to scale
to multivariate state vectors without directly constructing dense kernel
matrices. While random features have been extensively studied in convex
statistical learning and kernel ridge regression
\citep[e.g.,][]{Bach2017,RudiRosasco2017}, their use in threshold regression
leads to a distinct statistical mechanism: boundary approximation affects
structural estimation through regime misclassification near the threshold
surface, rather than requiring uniform recovery of the boundary function
over the entire domain.

% In this paper, we resolve this methodological tension by proposing a scalable multivariate nonparametric change-plane regression framework enabled by randomized kernel approximations. Drawing on the harmonic analysis of positive-definite kernels via Bochner's theorem \citep{RahimiRecht2007}, we project the unknown transition boundary $f_0(\bz)$ onto an explicit $L$-dimensional random Fourier feature (RFF) expansion. This explicit spectral feature map transforms an intractable functional optimization over an infinite-dimensional RKHS into a finite-dimensional Euclidean estimation problem.  While random features have been extensively studied in convex statistical learning and kernel ridge regression \citep[e.g.,][]{Bach2017,RudiRosasco2017}, their introduction into non-differentiable indicator gates entails a fundamentally different statistical mechanism: boundary approximation errors matter only through the measure of observations misclassified across the threshold surface, rather than pointwise uniform convergence over the entire domain.

To operationalize the framework, we introduce a two-stage estimation
procedure that separates nonlinear boundary estimation from final structural estimation. In the first stage, we replace the discontinuous regime indicator with a smooth surrogate \citep{horowitz1992smoothed,SeoLinton2007} and jointly estimate the boundary function and regression parameters. We then recover the discrete regime indicators by applying the estimated hard boundary. In the second stage, we re-estimate the structural coefficients by
ordinary least squares (OLS) using the resulting hard regime classification. To establish the theoretical properties of the two-stage estimator, we develop
a regime-classification analysis that avoids requiring uniform convergence of
the estimated boundary function. Under a classification margin condition
\citep{mammen1999smooth,tsybakov2004optimal} and a quantitative
regime-separation condition, we directly derive convergence rates for the
population and empirical regime misclassification errors. When these errors
vanish sufficiently fast, their effect on the second-stage OLS refit becomes
asymptotically negligible. Consequently, the finite-dimensional structural
parameters converge at the parametric $\sqrt{T}$-rate and attain oracle equivalence and have the same first-order asymptotic distribution as the
infeasible oracle estimator that knows the true regime memberships.

Beyond estimation and inference on the structural parameters, a complete
econometric analysis also requires determining whether a regime effect is
genuinely present in the data. While the oracle properties of our two-stage estimator are established under a well-defined regime structure, the parameters governing the transition boundary are not identified under the null hypothesis of no regime effect, \(H_0:\bbeta_0=\mathbf{0}\). This gives rise to the classical nuisance-parameter problem under the null \citep{davies1977hypothesis,davies1987hypothesis,andrews1993tests,hansen1996inference}. In conventional threshold and structural-change models, this non-identification leads to nonstandard limiting distributions for supremum-type Wald, Lagrange multiplier, and related test statistics \citep[e.g.,][]{hansen1996inference,Hansen2000,caner2004instrumental,lee2021factor}. In our framework, the nuisance parameter indexes a flexible family of nonlinear regime partitions generated by random Fourier features, leading to a Gaussian-process limit whose covariance structure depends on the underlying data-generating process. We therefore develop a dependent-multiplier bootstrap Sup-Wald test, denoted by \(\mathrm{Sup}\mathcal{W}_T\), tailored to the random-feature change-plane model. Using heteroskedasticity- and autocorrelation-consistent (HAC) covariance
estimation,  the proposed procedure provides asymptotically valid inference under serial dependence without estimating the unidentified transition boundary under the null. We further establish consistency of the test against fixed alternatives for which at least one candidate partition in the random-feature family exhibits a nonzero population score drift.

We summarize the three main contributions as follows:
\begin{enumerate}[label=(\roman*)]
    \item We develop a scalable semiparametric change-plane framework that accommodates flexible nonlinear regime boundaries driven by multivariate state variables, together with a two-stage estimation procedure.
    \item We establish convergence of the estimated regime classification and
    show that the refitted structural estimator achieves $\sqrt{T}$-consistency
    and oracle equivalence under suitable rate conditions.
    \item We develop a bootstrap $\mathrm{Sup}\mathcal{W}_T$ test for the existence of regime effects when the boundary parameter is unidentified under the null,
    allowing for heteroskedasticity and serial dependence.
\end{enumerate}
To demonstrate its practical performance, we evaluate the proposed methodology
through  simulation studies and an empirical analysis of postwar
U.S. unemployment dynamics from 1960 to 2025. In the empirical analysis, the proposed Sup-Wald test rejects the null of
no regime effect ($p=0.0025$), while the estimated nonlinear boundary
reveals a distinct stress regime with substantially different short-run
unemployment dynamics. The identified stress-regime observations tend to
cluster around major U.S. economic downturns, including the 2008 financial
crisis and the 2020 COVID-19 shock.

The remainder of the paper is organized as follows.
Section~\ref{sec:methodology} formalizes the econometric framework,
establishes point identification, introduces the two-stage RFF estimator,
and develops the dependent-multiplier bootstrap
$\mathrm{Sup}\mathcal{W}_T$ test for regime existence.
Section~\ref{sec:theory} establishes the asymptotic theory, while
Section~\ref{sec:computation} discusses the tuning-parameter
selection. Sections~\ref{sec:simulation} and~\ref{sec:empirical} present
Monte Carlo evidence and an empirical application to U.S. unemployment
dynamics, respectively. Section~\ref{sec:discussion} concludes with a brief discussion of possible
extensions.
All technical proofs and auxiliary lemmas are provided in Supplementary Materials. In addition, we implement our proposed method in an efficient and user-friendly R package called \textit{TRM.RFF}.

\section{Econometric Model, Two-Stage Estimation, and Hypothesis Testing}
\label{sec:methodology}

\subsection{Nonparametric change-plane model for time-series dynamics}
\label{subsec:model}

Let $\{(Y_t, \bx_t^\top, \bz_t^\top, U_t)^\top\}_{t=1}^T$ be a strictly stationary and ergodic time-series process defined on a probability space $(\Omega, \mathcal{F}, \mathbb{P})$, where $Y_t \in \mathbb{R}$ is the scalar response variable, $\bx_t \in \mathbb{R}^d$ collects structural covariates (including autoregressive lags and exogenous indicators), $\bz_t \in \mathbb{R}^p$ contains multi-dimensional macroeconomic or financial state variables governing the regime boundary, and $U_t \in \mathbb{R}$ denotes an observed threshold-triggering variable. We specify the multivariate nonparametric change-plane regression model as
\begin{equation}
\label{eq:model}
Y_t
=
\balpha_0^\top\bx_t
+
\bbeta_0^\top\bx_t D_{0t}
+
\varepsilon_t,
\qquad
D_{0t}
=
\ind\{f_0(\bz_t)\ge U_t\},
\end{equation}
where the conditional predictive relationship shifts abruptly from $\balpha_0$ in the baseline regime ($D_{0t}=0$) to $\balpha_0+\bbeta_0$ in the alternative regime ($D_{0t}=1$). The vector $\bbeta_0$ measures the structural jump across the regime boundary. The disturbance sequence $\varepsilon_t$ satisfies the conditional moment restriction $\mathbb{E}(\varepsilon_t \mid \mathcal{F}_{t}) = 0,$
where $\mathcal{F}_{t} = \sigma(\bx_t, \bz_t, U_t, Y_{t-1}, \bx_{t-1}, \bz_{t-1}, U_{t-1}, \ldots)$ denotes the information set available prior to the realization of $Y_t$.

Model \eqref{eq:model} generalizes several prominent econometric frameworks. When $f_0(\cdot)$ is restricted to an unknown constant $\gamma$, \eqref{eq:model} encompasses the classical scalar-threshold specification underlying TAR models~\citep{Tong1990,Hansen2000}. When $f_0(\bz) = c + \bgamma^\top\bz$, it corresponds to the linear change-plane formulation of \citet{SeoLinton2007}. Our primary interest centers on the general setting where $f_0: \mathbb{R}^p \to \mathbb{R}$ is an unknown smooth nonlinear function of multivariate state variables $\bz_t \in \mathbb{R}^p$ ($p \ge 2$).

\subsection{Identification and dependence structure}
\label{subsec:identification}

To accommodate temporal persistence and dynamic feedback characteristic of macroeconomic data, we place our analysis in a weakly dependent time-series setting. For two $\sigma$-fields $\mathcal{A}$ and $\mathcal{B}$, define the $\alpha$-mixing coefficient $\alpha(\mathcal{A}, \mathcal{B}) = \sup_{A \in \mathcal{A}, B \in \mathcal{B}} |\mathbb{P}(A \cap B) - \mathbb{P}(A)\mathbb{P}(B)|$. For the process $\bxi_t = (Y_t, \bx_t^\top, \bz_t^\top, U_t)^\top$, the mixing coefficient of lag $m$ is given by $\alpha(m) = \sup_{\tau \ge 1} \alpha\Big(\sigma(\bxi_s: s \le \tau),\, \sigma(\bxi_s: s \ge \tau + m)\Big).$

\begin{assumption}[Stationarity and mixing]
\label{ass:mixing}
The sequence $\{\bxi_t\}_{t=1}^T$ is strictly stationary and $\alpha$-mixing with geometric decay: $\alpha(m) \le C \rho^m$ for some constants $C > 0$ and $\rho \in (0, 1)$.
\end{assumption}
Assumption~\ref{ass:mixing} allows for temporal dependence while requiring
the dependence between observations separated by $m$ periods to decay
geometrically with $m$. The geometric decay of the $\alpha$-mixing coefficients is standard in the time-series threshold literature \citep[e.g.,][]{Tong1990,caner2004instrumental,lee2021factor}. Combined with the conditional mean restriction
$\mathbb{E}(\varepsilon_t\mid\mathcal F_t)=0$, the framework accommodates
lagged dependent variables and predetermined covariates. Moreover, since no conditional homoskedasticity restriction is imposed, the error variance may vary with the information set $\mathcal F_t$.

To characterized the smoothness/complexity of the thresholds function $f_0$, we now introduce a reproducing kernel Hilbert space (RKHS) $\mathcal{H}_k$. The RKHS is a Hilbert space consisting of functions defined on $\mathcal{Z}$, associated with a symmetric positive definite kernel $k:\bz\times\bz\to\mathbb{R}$.
It satisfies the reproducing property $\langle f, k(\cdot, \bz) \rangle_{\mathcal{H}_k} = f(x), \quad \forall f\in\mathcal{H}_k,\,\bz\in\mathcal{Z}.$
The Moore--Aronszajn theorem states that each symmetric positive definite kernel uniquely induces an RKHS. RKHS lays the theoretical foundation for kernel methods, which realize implicit nonlinear feature mappings without explicitly evaluating high-dimensional features.

\begin{assumption}[Boundary regularity and support]
\label{ass:boundary}
The boundary function satisfies $f_0 \in \mathcal{H}_k$ over compact support $\mathcal{Z} \subset \mathbb{R}^p$. The conditional density $p_{U\mid \bx,\bz}(u\mid\bx,\bz)$ is continuously differentiable and bounded away from zero and infinity in a neighborhood of $u=f_0(\bz)$, and the regime probability satisfies $\mathbb{P}(D_{0t}=1) \in (c, 1-c)$ for some $c \in (0, 1/2)$.
\end{assumption}
Assumption~\ref{ass:boundary} imposes a smoothness condition on the regime
boundary by requiring $f_0 \in \mathcal{H}_k$, which guarantees uniform approximation by randomized Fourier features \citep{RahimiRecht2007,Bach2017}. In contrast to classical threshold regressions \citep{Hansen2000,caner2004instrumental} that restrict regime splitting to a scalar cutoff, and the nonparametric threshold approach of \citet{LeeWang2023},
which estimates a threshold function of a scalar conditioning variable
using local-constant Taylor-approximate methods, our framework handles a fully multivariate state vector $\bz_t \in \mathbb{R}^p$ while retaining a finite-dimensional random-feature representation. The density condition $p_{U\mid\bx,\bz}(u\mid\bx,\bz) \ge \underline{p} > 0$ ensures that the transition surface is populated by positive probability mass, ruling out unidentifiable empty boundary segments.
\begin{assumption}[Regime separation and non-degeneracy]
\label{ass:separation}
(i) The jump parameter satisfies $\bbeta_0 \neq \mathbf{0}$; (ii) the regime-specific design matrices $\boldsymbol{\Sigma}_0 = \mathbb{E}\big[\bx_t\bx_t^\top (1 - D_{0t})\big]
\quad\text{and}\quad
\boldsymbol{\Sigma}_1 = \mathbb{E}\big[\bx_t\bx_t^\top D_{0t}\big]$
are strictly positive definite with minimal eigenvalues bounded away from zero. (iii) For any $f\in\mathcal H_k$ such that $\mathbb{P}\{D_t(f)\neq D_{0t}\}>0$, \[ \inf_{\balpha,\bbeta\in\mathbb{R}^d} \mathbb{E}\left[ \left\{ \balpha_0^\top\bx_t+\bbeta_0^\top\bx_tD_{0t} -\balpha^\top\bx_t-\bbeta^\top\bx_tD_t(f) \right\}^2 \right]>0, \] where $D_t(f)=\ind\{f(\bz_t)\ge U_t\}$.
%For any candidate boundary function $f \in \mathcal{H}_k$ such that the symmetric difference event $\Delta(f, f_0) := \big\{ (\bz_t, U_t): \ind(f_0(\bz_t) \ge U_t) \neq \ind(f(\bz_t) \ge U_t) \big\}$ has strictly positive probability $\mathbb{P}\big(\Delta(f, f_0)\big) > 0$, the conditional expected jump energy is strictly positive:
%     \[
%     \mathbb{E}\big[ (\bx_t^\top\bbeta_0)^2 \mid \Delta(f, f_0) \big] > 0.
%     \]
\end{assumption}
Assumption~\ref{ass:separation}(ii) is standard full-rank conditions ensuring OLS invertibility within each regime \citep[e.g.,][]{Hansen2000,caner2004instrumental}.  Assumption~\ref{ass:separation}(iii) imposes point identification of the regime partition:
any candidate boundary that disagrees with the true partition with positive
probability must incur strictly positive excess quadratic risk. Importantly,
this condition does not impose uniform separation over all $f\neq f_0$ and
therefore allows the excess risk to vanish as a candidate boundary approaches the truth. Above three Assumptions are sufficient to establish point identification of model~\eqref{eq:model}, as formalized in the following
proposition.

\begin{proposition}[Point identification]
\label{prop:id}
Under Assumptions~\ref{ass:mixing}--\ref{ass:separation}, if a parameter tuple $(\balpha, \bbeta, f)$ satisfies $\mathbb{E}\!\left[
\big\{Y_t - \balpha^\top\bx_t - \bbeta^\top\bx_t\ind(f(\bz_t)\ge U_t)\big\}^2
\right]
=
\mathbb{E}(\varepsilon_t^2),$
then $\balpha=\balpha_0$, $\bbeta=\bbeta_0$, $\ind\{f(\bz_t)\ge U_t\}=D_{0t}$ almost surely, and $f(\bz)=f_0(\bz)$ for almost every $\bz \in \mathcal{Z}$.
\end{proposition}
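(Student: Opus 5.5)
The plan is to expand the population squared loss around the true regression function, using the conditional mean restriction to kill the cross term. Write $m_0(\bx_t,\bz_t,U_t)=\balpha_0^\top\bx_t+\bbeta_0^\top\bx_tD_{0t}$ and $m(\bx_t,\bz_t,U_t)=\balpha^\top\bx_t+\bbeta^\top\bx_tD_t(f)$. Both are $\mathcal F_t$-measurable, because $\bx_t,\bz_t,U_t\in\mathcal F_t$. Since $\mathbb E(\varepsilon_t\mid\mathcal F_t)=0$, we get
\[
\mathbb{E}\big[\{Y_t-m\}^2\big]=\mathbb{E}(\varepsilon_t^2)+2\,\mathbb{E}\big[\varepsilon_t(m_0-m)\big]+\mathbb{E}\big[(m_0-m)^2\big]=\mathbb{E}(\varepsilon_t^2)+\mathbb{E}\big[(m_0-m)^2\big].
\]
The hypothesis therefore forces $\mathbb E[(m_0-m)^2]=0$, i.e.\ $m=m_0$ almost surely. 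Finiteness of second moments is implicit in the assumption that the risk equals $\mathbb E(\varepsilon_t^2)$; if needed, I would note that the relevant moments are finite under the model. Stationarity (Assumption~\ref{ass:mixing}) makes these expectations free of $t$.

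\emph{Step 1: the partition.} Suppose $\mathbb P\{D_t(f)\neq D_{0t}\}>0$. Then Assumption~\ref{ass:separation}(iii) gives
\[
\mathbb E[(m_0-m)^2]\ge\inf_{\balpha',\bbeta'}\mathbb E\big[(m_0-\balpha'^\top\bx_t-\bbeta'^\top\bx_tD_t(f))^2\big]>0,
\]
which is a contradiction. Hence $D_t(f)=D_{0t}$ almost surely. This step needs $f\in\mathcal H_k$, so I would read the proposition as restricting $f$ to $\mathcal H_k$.

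\emph{Step 2: the coefficients.} With $D_t(f)=D_{0t}$ a.s., we have $0=m_0-m=(\balpha_0-\balpha)^\top\bx_t+(\bbeta_0-\bbeta)^\top\bx_tD_{0t}$ a.s. Multiply by $(1-D_{0t})$ and by $D_{0t}$ separately, square, and take expectations. This gives $(\balpha_0-\balpha)^\top\boldsymbol\Sigma_0(\balpha_0-\balpha)=0$ and $(\balpha_0+\bbeta_0-\balpha-\bbeta)^\top\boldsymbol\Sigma_1(\cdots)=0$. Positive definiteness in Assumption~\ref{ass:separation}(ii) then yields $\balpha=\balpha_0$ and $\bbeta=\bbeta_0$.

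\emph{Step 3: from the indicator to the function; this is the main obstacle.} Let $A=\{\bz\in\mathcal Z:f(\bz)\neq f_0(\bz)\}$. Suppose, for contradiction, that $\mathbb P(\bz_t\in A)>0$. On $A$, the two indicators disagree exactly when $U_t$ lies between $f(\bz_t)$ and $f_0(\bz_t)$. Conditioning on $(\bx_t,\bz_t)$,
\[
\mathbb P\{D_t(f)\neq D_{0t}\}=\mathbb E\Big[\int_{\min(f,f_0)(\bz_t)}^{\max(f,f_0)(\bz_t)}p_{U\mid\bx,\bz}(u\mid\bx_t,\bz_t)\,du\Big].
\]
By Assumption~\ref{ass:boundary}, the density is bounded below by $\underline p>0$ on a neighborhood $[f_0(\bz)-\delta,f_0(\bz)+\delta]$. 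So the integrand is at least $\underline p\min\{|f(\bz_t)-f_0(\bz_t)|,\delta\}>0$ on $A$. The probability is therefore positive, contradicting Step 1. Hence $f=f_0$ for $P_{\bz}$-almost every $\bz$.

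Two delicate points remain in Step 3. First, $\delta$ should be uniform in $\bz$, or at least positive for each $\bz$, which suffices for positivity. Second, "almost every $\bz\in\mathcal Z$" should be understood with respect to the distribution of $\bz_t$. If Lebesgue-a.e.\ is intended, I would add that $\bz_t$ has a density positive on $\mathcal Z$. Alternatively, I would use continuity of $f,f_0$, which holds for RKHS functions with continuous kernels: the set $\{f\neq f_0\}$ is then open, and equality holds everywhere on the support of $\bz_t$.
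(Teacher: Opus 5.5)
Your proposal is correct and follows essentially the same route as the paper. You use the excess-risk decomposition to kill the cross term, a contradiction with Assumption~\ref{ass:separation}(iii) to align the partition, and positive definiteness of $\boldsymbol{\Sigma}_0,\boldsymbol{\Sigma}_1$ for the coefficients. There you take quadratic forms where the paper multiplies by $\bx_t(1-D_{0t})$ and $\bx_tD_{0t}$, which makes no real difference. Finally, the density lower bound passes from $D_t(f)=D_{0t}$ a.s.\ to $f=f_0$ for $F_{\bz}$-almost every $\bz$. Your side remarks flag points the paper's proof uses implicitly: $f$ must lie in $\mathcal H_k$ for Assumption~\ref{ass:separation}(iii) to apply, and ``almost every'' is with respect to the law of $\bz_t$.
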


\subsection{Scalable random Fourier feature boundary approximation}
\label{subsec:rff}
A direct kernel-based representation of $f_0\in\mathcal H_k$ typically
requires storing and manipulating a dense $T\times T$ Gram matrix, leading
to $\mathcal O(T^2)$ memory and, under direct matrix inversion,
$\mathcal O(T^3)$ computational cost. Such costs become particularly burdensome when the boundary must be
repeatedly updated during change-plane estimation. We circumvent this dimensionality bottleneck via randomized spectral expansions \citep{RahimiRecht2007}.

By Bochner's theorem, a continuous, shift-invariant positive-definite
kernel can be represented through its spectral measure $\Lambda$:
\begin{equation}
\label{eq:bochner}
k(\bz,\bz')
=
\int_{\mathbb{R}^p} e^{i\boldsymbol{\omega}^\top(\bz-\bz')} \mathrm{d}\Lambda(\boldsymbol{\omega})
=
\mathbb{E}_{\boldsymbol{\omega},b}\big\{2\cos(\boldsymbol{\omega}^\top\bz+b)\cos(\boldsymbol{\omega}^\top\bz'+b)\big\}, \nonumber
\end{equation}
where $\boldsymbol{\omega} \sim \Lambda$ and $b \sim \mathrm{Uniform}(0, 2\pi)$. For the Gaussian radial basis function (RBF) kernel $k(\bz,\bz')=\exp(-\norm{\bz-\bz'}_2^2 / (2\sigma_k^2))$, the spectral measure is Gaussian: $\Lambda = \mathcal{N}(0, \sigma_k^{-2} \mathbf{I}_p)$.

By drawing $L$ independent random spectral frequencies $\{\boldsymbol{\omega}_\ell\}_{\ell=1}^L \stackrel{i.i.d.}{\sim} \Lambda$ and uniform phase shifts $\{b_\ell\}_{\ell=1}^L \stackrel{i.i.d.}{\sim} \mathrm{Uniform}(0, 2\pi)$, we construct the explicit, finite-dimensional randomized feature mapping $\bpsi_L: \mathbb{R}^p \to \mathbb{R}^L$:
\begin{equation}
\label{eq:rff_map}
\bpsi_L(\bz)
=
\sqrt{\frac{2}{L}}
\Big(\cos(\boldsymbol{\omega}_1^\top\bz+b_1),\ldots,\cos(\boldsymbol{\omega}_L^\top\bz+b_L)\Big)^\top.
\end{equation}
The empirical inner product
$\widetilde k_L(\bz,\bz')=\bpsi_L(\bz)^\top\bpsi_L(\bz')$
is an unbiased Monte Carlo approximation of $k(\bz,\bz')$. Consequently, we approximate the unknown boundary function $f_0(\bz)$ by the linear sieve:
\begin{equation}
\label{eq:rfboundary}
f_{\bgamma}(\bz) = \bgamma^\top\bpsi_L(\bz),
\qquad \bgamma \in \mathbb{R}^L. \nonumber
\end{equation}
This formulation reduces a kernel-based boundary estimation problem to an explicit $L$-dimensional Euclidean parameter estimation, reducing the operational storage from $\mathcal{O}(T^2)$ to $\mathcal{O}(TL)$ with $L \ll T$.

\subsection{First-stage smoothed surrogate gate optimization}
\label{subsec:smooth}

The exact sample least-squares criterion for model \eqref{eq:model} is discontinuous:
\begin{equation}
\label{eq:hardcriterion}
Q_T(\balpha,\bbeta,\bgamma)
=
\frac1T\sum_{t=1}^T
\left[
Y_t-\balpha^\top\bx_t
-\bbeta^\top\bx_t
\ind\{\bgamma^\top\bpsi_L(\bz_t)\ge U_t\}
\right]^2
+
\lambda\norm{\bgamma}_2^2,
\end{equation}
where $\lambda > 0$ is a ridge penalty parameter. Because the indicator gate is discontinuous and has zero derivative almost
everywhere, direct optimization of \eqref{eq:hardcriterion} is computationally
challenging and is not amenable to standard gradient-based methods.

Following the smoothed threshold estimation approach of \cite{SeoLinton2007}, we approximate the discontinuous step with a smooth surrogate CDF $G(v)$, such as the standard logistic function. Defining $G_h(v) = G(v/h)$, where $h> 0$ is a bandwidth vanishing with $T$, we formulate the continuous surrogate objective:
\begin{equation}
\label{eq:smoothcriterion}
Q_{T,h}(\balpha,\bbeta,\bgamma)
=
\frac1T\sum_{t=1}^T
\left[
Y_t-\balpha^\top\bx_t
-\bbeta^\top\bx_t
G_{h}\{\bgamma^\top\bpsi_L(\bz_t)-U_t\}
\right]^2
+
\lambda\norm{\bgamma}_2^2.
\end{equation}
The preliminary parameters are obtained via gradient-based optimization:
\begin{equation}
(\widehat\balpha, \widehat\bbeta, \widehat\bgamma)
=
\arg\min_{\balpha,\bbeta,\bgamma} Q_{T,h}(\balpha,\bbeta,\bgamma). \nonumber
\end{equation}
The primary role of the first-stage estimator $\widehat\bgamma$ is to recover
the regime boundary and the associated discrete partition, rather than to
provide the final structural parameter estimates. Once the optimizer converges, we discard the continuous gate and construct the estimated discrete regime partition:
\begin{equation}
\label{eq:recovered_partition}
\widehat f(\bz) = \widehat\bgamma^\top\bpsi_L(\bz),
\qquad
\widehat D_t = \ind\{\widehat f(\bz_t)\ge U_t\}, \quad t=1,\ldots,T. \nonumber
\end{equation}

\subsection{Two-stage OLS refitting and robust HAC inference}
\label{subsec:refit}

While the smooth surrogate gate facilitates first-stage optimization,
retaining the smoothed regime weights would carry the effects of smoothing
and regularization into the final structural parameter estimates. We therefore
use the first stage only to recover the regime partition and re-estimate the
structural coefficients using the resulting hard classification.

To separate structural parameter estimation from the first-stage smoothing
and regularization, we perform a second-stage OLS refit conditional on the
recovered hard partition $\{\widehat D_t\}_{t=1}^T$:
\begin{equation}
\label{eq:refit}
\bar\btheta
=
\begin{pmatrix}\bar\balpha\\ \bar\bbeta\end{pmatrix}
=
\arg\min_{\balpha,\bbeta}
\frac1T\sum_{t=1}^T
\left[
Y_t - \balpha^\top\bx_t - \bbeta^\top\bx_t\widehat D_t
\right]^2. \nonumber
\end{equation}
Letting $\mathbf{W}_t(D) = (\bx_t^\top, D\bx_t^\top)^\top$ and $\widehat{\mathbf{W}}_t = \mathbf{W}_t(\widehat D_t)$, the closed-form refitted estimator is
\begin{equation}
\label{eq:refit_closed}
\bar\btheta
=
\left(\sum_{t=1}^T \widehat{\mathbf{W}}_t \widehat{\mathbf{W}}_t^\top\right)^{-1}
\sum_{t=1}^T \widehat{\mathbf{W}}_t Y_t.
\end{equation}
Let $\mathbf{W}_{0t} = \mathbf{W}_t(D_{0t})$. We show in Section~\ref{sec:theory} that $\bar\btheta$ is asymptotically equivalent to the infeasible oracle estimator $\widehat\btheta_{\mathrm{or}}$ that uses $D_{0t}$ and satisfies the asymptotical normal distribution:
\begin{equation}
\label{eq:asymptotic_normality_hac}
\sqrt{T}(\bar\btheta - \btheta_0) \ \rightsquigarrow\ \mathcal{N}\big(\mathbf{0},\, \boldsymbol{\Sigma}_{\mathbf{W}}^{-1} \boldsymbol{\Omega}_{\mathrm{LR}} \boldsymbol{\Sigma}_{\mathbf{W}}^{-1}\big), \nonumber
\end{equation}
where $\boldsymbol{\Sigma}_{\mathbf{W}} = \mathbb{E}[\mathbf{W}_{0t}\mathbf{W}_{0t}^\top]$ and  $\boldsymbol{\Omega}_{\mathrm{LR}} =  \sum_{j=-\infty}^\infty \mathbb{E}\big[(\mathbf{W}_{0t}\varepsilon_t)(\mathbf{W}_{0,t-j}\varepsilon_{t-j})^\top\big].$

To conduct robust asymptotic inference, we construct a Heteroskedasticity and Autocorrelation Consistent (HAC) plug-in estimator. Using the refitted residuals $\widehat\varepsilon_t = Y_t - \widehat{\mathbf{W}}_t^\top\bar\btheta$, define the sample Gram block $\widehat{\mathbf{A}}_T = \frac{1}{T}\sum_{t=1}^T \widehat{\mathbf{W}}_t \widehat{\mathbf{W}}_t^\top$ and the Bartlett kernel-weighted long-run variance estimator:
\begin{equation}
\label{eq:hac_omega}
\widehat{\boldsymbol{\Omega}}_{\mathrm{HAC}}
=
\widehat{\boldsymbol{\Gamma}}_0 + \sum_{j=1}^{m_T} \kappa\left(\frac{j}{m_T+1}\right) \big(\widehat{\boldsymbol{\Gamma}}_j + \widehat{\boldsymbol{\Gamma}}_j^\top\big), \nonumber
\end{equation}
where $\widehat{\boldsymbol{\Gamma}}_j = \frac{1}{T}\sum_{t=j+1}^T (\widehat{\mathbf{W}}_t \widehat\varepsilon_t)(\widehat{\mathbf{W}}_{t-j} \widehat\varepsilon_{t-j})^\top$, {$\kappa(u) = (1 - |u|) \, \mathbb{I}(|u| \le 1)$} is the Bartlett kernel, and $m_T$ is a bandwidth parameter, i.e. $m_T = \lfloor 4(T/100)^{2/9} \rfloor$~\citep{newey1994automatic}. The HAC asymptotic covariance estimator is then given by
\begin{equation}
\label{eq:hac_sandwich}
\widehat{\mathbf{V}}_{\mathrm{HAC}} = \widehat{\mathbf{A}}_T^{-1} \widehat{\boldsymbol{\Omega}}_{\mathrm{HAC}} \widehat{\mathbf{A}}_T^{-1}.
\end{equation}
The standard error for the $k$-th element of $\bar\btheta$ is computed as $\mathrm{SE}(\bar\theta_k) = T^{-1/2} [\widehat{\mathbf{V}}_{\mathrm{HAC}}]_{kk}^{1/2}$.

The full two-stage estimation pipeline summarized in Algorithm~\ref{alg:main}.
\begin{algorithm}[ht]
\caption{Two-stage multivariate change-plane estimation}
\label{alg:main}
\begin{algorithmic}[1]
\Require Time series $\{(Y_t, \bx_t, \bz_t, U_t)\}_{t=1}^T$ and tuning parameters $(L, \lambda, h)$.
\Ensure Structural coefficients $\bar\btheta$, regime series $\{\widehat D_t\}_{t=1}^T$, and HAC covariance $\widehat{\mathbf{V}}_{\mathrm{HAC}}$.
\State \textbf{Random Fourier features}: Generate spectral draws $\{\boldsymbol{\omega}_\ell\}_{\ell=1}^L$ and phases $\{b_\ell\}_{\ell=1}^L$; construct $\bpsi_L(\bz_t)$ via \eqref{eq:rff_map}.
\State \textbf{First-stage smoothed gate optimization}: Minimize the surrogate objective \eqref{eq:smoothcriterion} using L-BFGS-B with multiple starts to obtain $(\widehat\balpha, \widehat\bbeta, \widehat\bgamma)$.
\State \textbf{Hard regime recovery}: Extract the discrete partition $\widehat D_t = \ind\{\widehat\bgamma^\top\bpsi_L(\bz_t) \ge U_t\}$ for $t=1,\ldots,T$.
\State \textbf{Second-stage OLS refitting}: Compute the refitted parameter vector $\bar\btheta$ via \eqref{eq:refit_closed} on $\widehat{\mathbf{W}}_t = (\bx_t^\top, \bx_t^\top \widehat D_t)^\top$.
\State \textbf{HAC variance estimation}: Compute refitted residuals $\widehat\varepsilon_t = Y_t - \widehat{\mathbf{W}}_t^\top\bar\btheta$ and calculate $\widehat{\mathbf{V}}_{\mathrm{HAC}}$ via \eqref{eq:hac_sandwich}.
\State \Return $\bar\btheta$, $\{\widehat D_t\}_{t=1}^T$, and $\widehat{\mathbf{V}}_{\mathrm{HAC}}$.
\end{algorithmic}
\end{algorithm}
\subsection{Testing for structural regime existence}
\label{subsec:testing}

In addition to estimation, a fundamental empirical question is whether
a structural regime shift is present in the first place:
\begin{equation}
\label{eq:hypotheses}
H_0: \bbeta_0 = \mathbf{0}
\quad\text{versus}\quad
H_1: \bbeta_0 \neq \mathbf{0}.
\end{equation}
Testing \eqref{eq:hypotheses} gives rise to the classical
nuisance-parameter problem under the null
\citep{davies1977hypothesis,davies1987hypothesis,hansen1996inference}.
Under $H_0$, the boundary parameter $\bgamma$ is not identified because
the regime indicator drops out of the conditional mean model. Consequently,
supremum-type Wald, score, and likelihood-ratio statistics generally have
nonstandard limiting distributions characterized by functionals of
nuisance-indexed stochastic processes.

The role of random Fourier features in this testing context differs fundamentally from that in boundary estimation. In the estimation stage, we allow $L = L_T \to \infty$ so that the random-feature sieve increasingly approximates the unknown infinite-dimensional boundary $f_0$. For testing, however, because the boundary is unidentified under $H_0$, we keep the random-feature dimension $L$ fixed as $T \to \infty$. This allows the features to index a fixed-dimensional, flexible family of candidate regime partitions. Specifically, for $\bgamma \in \boldsymbol{\Gamma} \subset \mathbb{R}^L$, where $\boldsymbol{\Gamma}$ is a compact parameter space, we define the candidate regime indicator and regressor vector as
\[
D_t(\bgamma) := \ind\{\bgamma^\top\bpsi_L(\bz_t) \ge U_t\} \qquad \text{and} \qquad \mathbf{W}_t(\bgamma) := \begin{pmatrix} \bx_t \\ \bx_t D_t(\bgamma) \end{pmatrix}.
\]
The resulting test searches over this flexible family of nonlinear
candidate partitions for evidence of a structural regime effect.

Conditional on a fixed boundary direction $\bgamma \in \boldsymbol{\Gamma}$, the least-squares estimator of $\btheta = (\balpha^\top, \bbeta^\top)^\top$ is
\[
\widehat\btheta(\bgamma) = \begin{pmatrix} \widehat\balpha(\bgamma) \\ \widehat\bbeta(\bgamma) \end{pmatrix} := \widehat{\mathbf{A}}_T(\bgamma)^{-1} \left( \frac{1}{T}\sum_{t=1}^T \mathbf{W}_t(\bgamma) Y_t \right), \quad \text{where} \quad \widehat{\mathbf{A}}_T(\bgamma) := \frac{1}{T}\sum_{t=1}^T \mathbf{W}_t(\bgamma) \mathbf{W}_t(\bgamma)^\top.
\]
The pointwise Wald statistic for testing $\bbeta_0 = \mathbf{0}$ at a fixed $\bgamma$ is given by
\begin{equation}
\label{eq:pointwise_wald}
\mathcal{W}_T(\bgamma) := T \cdot \widehat\bbeta(\bgamma)^\top \Big[ \mathbf{R} \big( \widehat{\mathbf{A}}_T(\bgamma)^{-1} \widehat{\boldsymbol{\Omega}}_{\mathrm{HAC}}(\bgamma) \widehat{\mathbf{A}}_T(\bgamma)^{-1} \big) \mathbf{R}^\top \Big]^{-1} \widehat\bbeta(\bgamma),
\end{equation}
where $\mathbf{R} := [\mathbf{0}_{d \times d}, \mathbf{I}_d]$ is the selector matrix satisfying $\mathbf{R}\btheta = \bbeta$, and  $\widehat{\boldsymbol{\Omega}}_{\mathrm{HAC}}(\bgamma)$ is the
Bartlett HAC estimator of the long-run covariance of
$\mathbf W_t(\bgamma)\widehat\varepsilon_t(\bgamma)$. To account for the unidentified boundary parameter, we maximize the
pointwise Wald statistic over the candidate parameter space:
\begin{equation}
\label{eq:sup_wald}
\mathrm{Sup}\mathcal{W}_T := \sup_{\bgamma \in \boldsymbol{\Gamma}} \mathcal{W}_T(\bgamma).
\end{equation}

Because the asymptotic null distribution of $\mathrm{Sup}\mathcal{W}_T$ depends on the non-pivotal covariance kernel of the limit Gaussian process $\mathbb{G}_\infty(\cdot)$ and the unknown joint distribution of $(\bx_t, \bz_t, U_t)$, critical values cannot be tabulated. We approximate the null distribution using the dependent wild bootstrap,  detailed in Algorithm~\ref{alg:bootstrap}.

\begin{algorithm}[ht]
\caption{Dependent wild bootstrap test for regime existence}
\label{alg:bootstrap}
\begin{algorithmic}[1]
\Require Time series $\{(Y_t, \bx_t, \bz_t, U_t)\}_{t=1}^T$, finite evaluation grid $\boldsymbol{\Gamma}_G \subset \boldsymbol{\Gamma}$, bootstrap replications $B$, kernel bandwidth $b_T$.
\Ensure Empirical bootstrap $p$-value $p_{\mathrm{boot}}$ for testing $H_0: \bbeta_0 = \mathbf{0}$.
\State Fit the restricted linear model under $H_0$: $\widetilde\balpha := (\sum_{t=1}^T \bx_t\bx_t^\top)^{-1} \sum_{t=1}^T \bx_t Y_t$.
\State Compute restricted OLS residuals $\widetilde\varepsilon_t := Y_t - \bx_t^\top \widetilde\balpha$ for $t = 1, \dots, T$.
\State Compute the sample test statistic $\mathrm{Sup}\mathcal{W}_T = \max_{\bgamma \in \boldsymbol{\Gamma}_G} \mathcal{W}_T(\bgamma)$ via \eqref{eq:sup_wald}.
\For{$b = 1, \ldots, B$}
    \State Draw dependent multipliers $\boldsymbol{\eta}^{*(b)} \sim \mathcal{N}(\mathbf{0}, \mathbf{\Omega}^*)$ with covariance $\mathbf{\Omega}^*_{ts} = \kappa\big((t-s)/b_T\big)$.
    \State Construct pseudo-outcomes $Y_t^{*(b)} = \bx_t^\top \widetilde\balpha + \widetilde\varepsilon_t \eta_t^{*(b)}$ for $t = 1, \dots, T$.
    \State Compute the bootstrap test statistic $\mathrm{Sup}\mathcal{W}_T^{*(b)} = \max_{\bgamma \in \boldsymbol{\Gamma}_G} \mathcal{W}_T^{*(b)}(\bgamma)$.
\EndFor
\State \Return $p_{\mathrm{boot}} = \frac{1}{B} \sum_{b=1}^B \ind\{\mathrm{Sup}\mathcal{W}_T^{*(b)} \ge \mathrm{Sup}\mathcal{W}_T\}$.
\end{algorithmic}
\end{algorithm}
\begin{remark}[Comparison with classical threshold tests]
\label{rem:test_comparison}
The statistic in \eqref{eq:sup_wald} belongs to the class of
supremum-type tests with nuisance parameters unidentified under the null
\citep{davies1977hypothesis,hansen1996inference}. Unlike classical
scalar-threshold tests, the candidate regime partitions here are generated
by random Fourier features. For fixed $L$, the resulting indicator class
has finite VC dimension, providing a flexible family of nonlinear
multivariate partitions while retaining manageable empirical-process
complexity.
\end{remark}

\section{Statistical theory}
\label{sec:theory}
In this section, we establish the asymptotic properties of the proposed
two-stage change-plane procedure and Sup-Wald test. We first derive convergence rates for the
population and empirical regime misclassification errors induced by the
estimated boundary. We then establish the oracle equivalence and asymptotic
normality of the second-stage refitted estimator, together with the
consistency of the HAC covariance estimator. Finally, we derive the limiting
null distribution of the regime-existence test, establish the validity of
the dependent-multiplier bootstrap, and study the consistency of the test
under fixed alternatives.

\subsection{First-stage regime recovery}
\label{subsec:first_stage_rate}

We first introduce notation for measuring the accuracy of the recovered regime partition. Let $D_t(f) := \ind\{f(\bz_t) \ge U_t\}$, $D_{0t} := D_t(f_0)$, and define the regime-disagreement pseudo-metric $\rho(f, g) := \mathbb{P}\{D_t(f) \neq D_t(g)\}$. For any candidate boundary $f$, define the population profiled hard-threshold prediction risk
\[
\mathcal{R}(f) := \inf_{\balpha, \bbeta \in \mathbb{R}^d} \mathbb{E}\left[ \left\{ Y_t - \balpha^\top\bx_t - \bbeta^\top\bx_t D_t(f) \right\}^2 \right].
\]
Let  $\mathcal{F}_{L} := \left\{ f(\cdot) = \bgamma^\top\bpsi_{L}(\cdot) : \bgamma \in \boldsymbol{\Gamma}_{L}, \, \|\bgamma\|_2 \le C_\gamma \right\}$ denote the random-feature sieve where we allow $L=L_T$ to diverge as $T$, and let $f_{0, L}(\bz) := \bgamma_{0, L}^\top\bpsi_{L}(\bz)$ denote an $L$-dimensional RFF approximation to $f_0$.

For the smoothed criterion with bandwidth $h$, define $$m_t(f, \balpha, \bbeta) := \left[ Y_t - \balpha^\top\bx_t - \bbeta^\top\bx_t G\left\{ (f(\bz_t) - U_t)/h \right\} \right]^2,$$ and let $(\balpha(f), \bbeta(f))$ denote its population profile minimizer. For later use, define $\Delta m_t(f) := m_t(f, \balpha(f), \bbeta(f)) - m_t(f_{0, L}, \balpha(f_{0, L}), \bbeta(f_{0, L}))$, together with the centered processes $Z_t(f) := \Delta m_t(f) - \mathbb{E}[\Delta m_t(f)]$ and $H_t(f) := \ind\{D_t(f) \neq D_{0t}\} - \rho(f, f_0)$.

The following conditions govern the accuracy of first-stage regime recovery.
\begin{assumption}[Boundary margin condition]
\label{ass:margin}
There exist constants $C_m > 0$, $\kappa > 0$, and $\delta_0 > 0$ such that $\mathbb{P}\big(|U_t - f_0(\bz_t)| \le \delta\big) \le C_m \delta^\kappa$ for all $\delta \in (0, \delta_0]$.
\end{assumption}

Under Assumption~\ref{ass:boundary}, the margin condition holds with $\kappa=1$ and $C_m = 2\overline{p}$. Higher margin exponents $\kappa > 1$ characterize thinner probability mass around the transition surface and yield faster decay of the misclassification probability~\citep{mammen1999smooth,tsybakov2004optimal}.

\begin{assumption}[Spectral feature approximation]
\label{ass:rf}
The target boundary satisfies $f_0 \in \mathcal{H}_k$. For the continuous shift-invariant kernel $k$, there exist a coefficient sequence $\bgamma_{0,L} \in \mathbb{R}^L$, a deterministic rate $a_L \to 0$, and a constant $C_\gamma > 0$ such that, with probability approaching one over the spectral measure $\Lambda$,
\[
\|f_{0,L} - f_0\|_\infty \le a_L \qquad \text{and} \qquad \|\bgamma_{0,L}\|_2 \le C_\gamma.
\]
\end{assumption}

The approximation rate $a_L$ depends on the kernel, the smoothness of $f_0$, and the random-feature dimension; standard results provide conditions under which $a_L \to 0$~\citep{RudiRosasco2017}.

\begin{assumption}[Smoothness, boundedness, and moments]
\label{ass:smooth_kernel}
\leavevmode
\begin{enumerate}[label=(\roman*), leftmargin=2em]
    \item The surrogate CDF $G: \mathbb{R} \to [0, 1]$ is twice continuously differentiable with bounded derivatives $\|G'\|_\infty < \infty$, $\|G''\|_\infty < \infty$, and $\int_{\mathbb{R}} |G(v) - \ind\{v \ge 0\}|\,dv < \infty$.
    \item For some $\delta > 0$, $\mathbb{E}\|\bx_t\|_2^{4+\delta} < \infty$ and $\mathbb{E}|\varepsilon_t|^{4+\delta} < \infty$.
    \item There exist constants $\delta_M > 0$ and $C_M < \infty$ such that $\sup_{(z,u):\, |u-f_0(z)| \le \delta_M} \mathbb{E}\big( \|\bx_t\|_2^2 + \varepsilon_t^2 \mid \bz_t=z, U_t=u \big) \le C_M$.
\end{enumerate}
\end{assumption}

The local conditional moment bound in Assumption~\ref{ass:smooth_kernel}(iii) is required only in a fixed neighborhood of the true boundary, ruling out explosive conditional variability where misclassification can occur while accommodating arbitrary heterogeneity elsewhere.

\begin{assumption}[Quantitative regime separation]
\label{ass:classification_separation}
There exist constants $c_\rho > 0$ and $\delta_\rho > 0$ such that:
\begin{enumerate}[label=(\roman*), leftmargin=2em]
    \item \textbf{Local curvature:} $\mathcal{R}(f) - \mathcal{R}(f_0) \ge c_\rho \, \rho(f, f_0)$ for all $f \in \mathcal{F}_{L}$ satisfying $\rho(f, f_0) \le \delta_\rho$.
    \item \textbf{Global identification:}  For every fixed $\delta>0$,
there exists a constant $c_\delta>0$ such that, for all sufficiently
large $T$, $\inf_{f \in \mathcal{F}_{L}:\, \rho(f, f_0) \ge \delta} \{ \mathcal{R}(f) - \mathcal{R}(f_0) \} \ge c_\delta$.
\end{enumerate}
\end{assumption}
Assumption~\ref{ass:classification_separation} links excess prediction
risk to errors in regime classification. The global identification
condition rules out candidate boundaries that generate substantially
different regime partitions while attaining nearly the same population
risk as the true boundary. Once the estimator is localized around the
true regime partition, the local curvature condition provides a
quantitative lower bound on the excess risk in terms of the disagreement
probability $\rho(f,f_0)$, which is used to derive the regime-recovery rate.

\begin{assumption}[Localized long-run variance]
\label{ass:localized_lrv}
There exists a constant $C_{\mathrm{LR}} > 0$ such that, uniformly over $f \in \mathcal{F}_{L}$ with $\rho(f, f_0) \le \delta_\rho$,
\begin{align}
\sum_{j=-\infty}^{\infty} \big| \mathrm{Cov}\big(Z_t(f), Z_{t+j}(f)\big) \big| &\le C_{\mathrm{LR}} \big\{ \rho(f, f_0) + h + a_{L}^{\kappa} \big\}, \nonumber \\
 \nonumber
\sum_{j=-\infty}^{\infty} \big| \mathrm{Cov}\big(H_t(f), H_{t+j}(f)\big) \big| &\le C_{\mathrm{LR}} \, \rho(f, f_0). 
\end{align}
\end{assumption}
Assumption~\ref{ass:localized_lrv} controls the temporal dependence of
the localized criterion and disagreement processes. It requires their long-run variances to shrink as the candidate regime partition approaches the true partition, up to the smoothing and random-feature approximation errors. This localization permits the empirical-process fluctuations to
scale with the disagreement radius and is the key ingredient for obtaining
the regime-recovery rates under serial dependence.
\begin{theorem}[First-stage regime recovery]
\label{thm:regime_recovery}
Suppose Assumptions~\ref{ass:mixing}--\ref{ass:separation} and \ref{ass:margin}--\ref{ass:localized_lrv} hold. As $T \to \infty$, if $\eta_T := \frac{L\log T}{T} + h + \lambda + a_{L}^\kappa \to 0$, then
\begin{equation}
\label{eq:regime_recovery}
\rho(\widehat f, f_0) = \mathcal{O}_p(\eta_T) \qquad \text{and} \qquad \mathrm{MR}_T := \frac{1}{T}\sum_{t=1}^T \ind\{\widehat D_t \neq D_{0t}\} = \mathcal{O}_p(\eta_T).
\end{equation}
\end{theorem}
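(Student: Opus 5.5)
We follow the standard M-estimation route: consistency first, then a localized rate argument, with the smoothed criterion linked to the hard risk $\mathcal{R}$. Write $\widehat f=\widehat\bgamma^\top\bpsi_L$ and $\mathcal{R}_h(f):=\mathbb{E}[m_t(f,\balpha(f),\bbeta(f))]$ for the profiled smoothed population risk.

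\textbf{Step 1 (bias link between smoothed and hard risk).} We show $\sup_{f\in\mathcal F_L}|\mathcal R_h(f)-\mathcal R(f)|\lesssim h$. The surrogate and the indicator differ only when $|f(\bz_t)-U_t|$ is of order $h$. Conditioning on $(\bx_t,\bz_t)$ and using the bounded conditional density of $U_t$ (Assumption~\ref{ass:boundary}), the integrability $\int|G(v)-\ind\{v\ge0\}|dv<\infty$, and the local moment bound in Assumption~\ref{ass:smooth_kernel}(iii), the $L^1$-difference of the gates is $O(h)$. Profiling over $(\balpha,\bbeta)$ is Lipschitz by the eigenvalue bounds in Assumption~\ref{ass:separation}(ii). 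We also need $\mathcal R(f_{0,L})-\mathcal R(f_0)\lesssim \rho(f_{0,L},f_0)\le \mathbb P(|U_t-f_0(\bz_t)|\le a_L)\le C_m a_L^\kappa$ by Assumption~\ref{ass:margin}; the first inequality uses the same moment bound. Combining these, $\mathcal R(f)-\mathcal R(f_0)\le \mathcal R_h(f)-\mathcal R_h(f_{0,L})+O(h+a_L^\kappa)$. The penalty contributes at most $\lambda C_\gamma^2$ through the basic inequality $Q_{T,h}(\widehat\cdot)\le Q_{T,h}(f_{0,L},\cdot)$.

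\textbf{Step 2 (consistency).} The class $\{m_t(f,\balpha,\bbeta)\}$ is indexed by a compact subset of $\mathbb R^{L}\times\mathbb R^{2d}$ (the profile parameters stay bounded with probability approaching one). Its members are Lipschitz in parameters, with envelope controlled by the $4+\delta$ moments. A bracketing-number bound of order $(C/\epsilon)^{L+2d}$, combined with a Bernstein inequality for geometrically $\alpha$-mixing sequences (Assumption~\ref{ass:mixing}, via a blocking argument), gives a uniform law of large numbers of order $\sqrt{L\log T/T}$. The basic inequality then yields $\mathcal R(\widehat f)-\mathcal R(f_0)=o_p(1)$. By Assumption~\ref{ass:classification_separation}(ii), this gives $\rho(\widehat f,f_0)\le\delta_\rho$ with probability approaching one.

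\textbf{Step 3 (rate via peeling, the main obstacle).} On the event $\rho(\widehat f,f_0)\le\delta_\rho$, the curvature condition in Assumption~\ref{ass:classification_separation}(i), Step 1, and the basic inequality give
\[
c_\rho\,\rho(\widehat f,f_0)\le \Big|\frac1T\sum_t Z_t(\widehat f)\Big|+O(h+\lambda+a_L^\kappa).
\]
We bound the empirical term over shells $\{2^{j-1}r<\rho(f,f_0)\le 2^j r\}$. Assumption~\ref{ass:localized_lrv} says the long-run variance on a shell is $\lesssim 2^jr+h+a_L^\kappa$. A Bernstein-type maximal inequality for mixing processes with $L$-dimensional entropy then gives
\[
\sup_{\text{shell}}\Big|\frac1T\sum_t Z_t\Big|\lesssim \sqrt{(2^jr+h+a_L^\kappa)\,L\log T/T}+L\log T/T
\]
with high probability. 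The $\log T$ factor absorbs the mixing block length and the union bound over shells. Solving $\rho\lesssim\sqrt{(\rho+\eta_T)L\log T/T}+\eta_T$ gives $\rho=O_p(\eta_T)$. The delicate point is that the indicators make $Z_t$ non-Lipschitz in $\bgamma$. We will handle this by a bracketing argument in which bracket size is measured in $\rho$, using that brackets in $\bgamma$ of radius $\epsilon$ change $D_t$ only on $\{|U_t-f(\bz_t)|\lesssim \epsilon\}$. This set has probability $O(\epsilon)$ because $\|\bpsi_L\|_\infty\le\sqrt{2}$ and the density of $U_t$ is bounded. The data-dependence of the features is handled by conditioning on the spectral draws.

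\textbf{Step 4 (empirical misclassification).} Write $\mathrm{MR}_T=\rho(\widehat f,f_0)+T^{-1}\sum_t H_t(\widehat f)$. Applying the same peeling argument to the indicator class $\{H_t(f)\}$, whose VC dimension is $O(L)$, together with the second bound in Assumption~\ref{ass:localized_lrv}, gives
\[
\sup_{\rho(f,f_0)\le C\eta_T}\Big|\frac1T\sum_t H_t(f)\Big|=O_p\big(\sqrt{\eta_T L\log T/T}+L\log T/T\big)=O_p(\eta_T).
\]
Hence $\mathrm{MR}_T=O_p(\eta_T)$.
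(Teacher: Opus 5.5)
Your proposal follows essentially the same route as the paper's proof. The shared steps are: the margin bound $\rho(f_{0,L},f_0)\le C_m a_L^\kappa$; a uniform $O(h)$ link between smoothed and hard profiled risks; consistency from a uniform law of large numbers plus global identification; the rate from the localized long-run variance bound, a Bernstein-type maximal inequality for mixing sequences, dyadic peeling and the local curvature condition; and $\mathrm{MR}_T$ via the VC class of disagreement indicators $H_t(f)$. The only real difference is that you sketch how to obtain the $L\log(1/\epsilon)$ entropy bound that the paper simply imposes in its localized empirical-process lemma (note that $Z_t$ is built from the smoothed loss, so it is Lipschitz in $\bgamma$ with constant of order $1/h$, and the indicator-bracketing device is not actually needed there).
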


Theorem~\ref{thm:regime_recovery} directly characterizes first-stage accuracy through regime disagreement rather than uniform estimation of the boundary function. The composite rate $\eta_T$ captures stochastic complexity, smoothing approximation error, regularization, and random-feature approximation. In particular, both the population disagreement probability and the empirical misclassification rate are of order $\mathcal{O}_p(\eta_T)$, providing the key input for the oracle equivalence result below.
\subsection{Oracle asymptotic normality of the refitted estimator}
\label{subsec:theory_oracle}

We next establish the asymptotic properties of the second-stage OLS refitted estimator $\bar\btheta = (\bar\balpha^\top, \bar\bbeta^\top)^\top$ defined in \eqref{eq:refit_closed}. The key observation is that, although the first-stage boundary estimator is nonparametric, its induced regime misclassification becomes asymptotically negligible under an appropriate first-stage rate condition. To formalize this result, we first introduce the infeasible oracle estimator that uses the true regime indicator $D_{0t}$: 
\begin{equation}
\label{eq:oracle_estimator_def}
\widehat\btheta_{\mathrm{or}} = \left(\sum_{t=1}^T \mathbf{W}_{0t} \mathbf{W}_{0t}^\top\right)^{-1} \sum_{t=1}^T \mathbf{W}_{0t} Y_t. \nonumber
\end{equation}
where  $\mathbf{W}_{0t} = (\bx_t^\top, \bx_t^\top D_{0t})^\top$.

\begin{assumption}[First-stage negligibility rate]
\label{ass:rate}
Let $\nu := \frac{2+\delta}{4+\delta} \in (\frac{1}{2}, 1)$, where $\delta > 0$ is the moment exponent from Assumption~\ref{ass:smooth_kernel}(ii). As $T \to \infty$, the tuning parameters and approximation rates satisfy $\sqrt{T} \, \eta_T^\nu \longrightarrow 0,$ where $\eta_T := \frac{L \log T}{T} + h + \lambda + a_{L}^\kappa.$
\end{assumption}

Assumption~\ref{ass:rate} governs the rate at which first-stage estimation errors diminish. By Theorem~\ref{thm:regime_recovery}, the empirical misclassification rate satisfies $\mathrm{MR}_T = \mathcal{O}_p(\eta_T)$. Combined with the $(4+\delta)$-moment conditions in Assumption~\ref{ass:smooth_kernel}(ii), H\"older's inequality implies that the misclassification-weighted terms entering the second-stage Gram and score expansions scale as $\mathcal{O}_p(\mathrm{MR}_T^\nu) = \mathcal{O}_p(\eta_T^\nu)$. Consequently, the rate restriction $\sqrt{T}\eta_T^\nu \to 0$ ensures that the first-stage boundary recovery error is asymptotically negligible at the parametric $\sqrt{T}$ scale, preserving oracle equivalence for the refitted estimator.

\begin{theorem}[Oracle equivalence and asymptotic distribution]
\label{thm:oracle}
Suppose Assumptions~\ref{ass:mixing}--\ref{ass:separation} and \ref{ass:margin}--\ref{ass:rate} hold. Then the two-stage refitted estimator $\bar\btheta$ is asymptotically equivalent to the infeasible oracle estimator $\widehat\btheta_{\mathrm{or}}$:
\begin{equation}
\label{eq:oracle_diff_op1}
\sqrt{T}(\bar\btheta - \widehat\btheta_{\mathrm{or}}) = o_p(1). \nonumber
\end{equation}
Consequently, $\bar\btheta$ achieves the oracle asymptotic normal distribution:
\begin{equation}
\label{eq:oracle_clt}
\sqrt{T}(\bar\btheta - \btheta_0) \ \rightsquigarrow\ \mathcal{N}\big(\mathbf{0},\, \mathbf{V}_{\mathrm{or}}\big),
\qquad
\mathbf{V}_{\mathrm{or}} = \boldsymbol{\Sigma}_{\mathbf{W}}^{-1} \boldsymbol{\Omega}_{\mathrm{LR}} \boldsymbol{\Sigma}_{\mathbf{W}}^{-1}, \nonumber
\end{equation}
where $\boldsymbol{\Sigma}_{\mathbf{W}} = \mathbb{E}[\mathbf{W}_{0t}\mathbf{W}_{0t}^\top]$ and $\boldsymbol{\Omega}_{\mathrm{LR}} = \sum_{j=-\infty}^\infty \mathbb{E}\big[(\mathbf{W}_{0t} \varepsilon_t)(\mathbf{W}_{0,t-j} \varepsilon_{t-j})^\top\big].$
\end{theorem}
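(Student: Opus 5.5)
The plan is to write both estimators in closed form and compare them term by term. Set $\widehat{\mathbf{A}}_T=T^{-1}\sum_t\widehat{\mathbf{W}}_t\widehat{\mathbf{W}}_t^\top$ and $\mathbf{A}_{0T}=T^{-1}\sum_t\mathbf{W}_{0t}\mathbf{W}_{0t}^\top$. Since $Y_t=\mathbf{W}_{0t}^\top\btheta_0+\varepsilon_t$, we have
\[
\bar\btheta-\btheta_0=\widehat{\mathbf{A}}_T^{-1}\frac1T\sum_{t=1}^T\widehat{\mathbf{W}}_t\big\{\varepsilon_t+(\mathbf{W}_{0t}-\widehat{\mathbf{W}}_t)^\top\btheta_0\big\},
\qquad
\widehat\btheta_{\mathrm{or}}-\btheta_0=\mathbf{A}_{0T}^{-1}\frac1T\sum_{t=1}^T\mathbf{W}_{0t}\varepsilon_t .
\]
Put $e_t:=\ind\{\widehat D_t\neq D_{0t}\}$. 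Then $\widehat{\mathbf{W}}_t-\mathbf{W}_{0t}=(\mathbf 0^\top,(\widehat D_t-D_{0t})\bx_t^\top)^\top$, which vanishes unless $e_t=1$ and has norm at most $\|\bx_t\|_2e_t$. All discrepancies are therefore sums weighted by $e_t$. The first-stage input is Theorem~\ref{thm:regime_recovery}, which gives $T^{-1}\sum_te_t=\mathrm{MR}_T=\mathcal O_p(\eta_T)$.

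\textbf{Step 1 (Gram matrix).} By H\"older with exponents $(4+\delta)/2$ and $(4+\delta)/(2+\delta)$,
\[
\|\widehat{\mathbf{A}}_T-\mathbf{A}_{0T}\|\le \frac{C}{T}\sum_t\|\bx_t\|_2^2e_t\le C\Big(\frac1T\sum_t\|\bx_t\|_2^{4+\delta}\Big)^{2/(4+\delta)}\mathrm{MR}_T^{\nu}=\mathcal O_p(\eta_T^\nu).
\]
By ergodicity (Assumption~\ref{ass:mixing}), $\mathbf{A}_{0T}\to\boldsymbol{\Sigma}_{\mathbf{W}}$ in probability. This limit is positive definite by Assumption~\ref{ass:separation}(ii), since $\boldsymbol\Sigma_{\mathbf W}$ is built from $\boldsymbol\Sigma_0+\boldsymbol\Sigma_1$ and $\boldsymbol\Sigma_1$ in block form. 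Hence $\widehat{\mathbf{A}}_T^{-1}=\mathcal O_p(1)$ and $\|\widehat{\mathbf{A}}_T^{-1}-\mathbf{A}_{0T}^{-1}\|=\mathcal O_p(\eta_T^\nu)$.

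\textbf{Step 2 (bias term).} The term $T^{-1}\sum_t\widehat{\mathbf W}_t(\mathbf W_{0t}-\widehat{\mathbf W}_t)^\top\btheta_0$ is bounded by $C\|\bbeta_0\|T^{-1}\sum_t\|\bx_t\|^2e_t=\mathcal O_p(\eta_T^\nu)$ by the same H\"older argument. Multiplying by $\sqrt T$ gives $o_p(1)$ under Assumption~\ref{ass:rate}.

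\textbf{Step 3 (score term).} Decompose
\[
\frac1T\sum_t\widehat{\mathbf{W}}_t\varepsilon_t=\frac1T\sum_t\mathbf{W}_{0t}\varepsilon_t+\frac1T\sum_t(\widehat{\mathbf{W}}_t-\mathbf{W}_{0t})\varepsilon_t .
\]
The oracle score $\sqrt T\,T^{-1}\sum_t\mathbf W_{0t}\varepsilon_t$ is a martingale difference array with respect to $\mathcal F_t$ and has finite $(2+\delta/2)$-moments. By the martingale CLT, or Ibragimov's mixing CLT, it converges in distribution to $\mathcal N(\mathbf 0,\boldsymbol\Omega_{\mathrm{LR}})$; in particular it is $\mathcal O_p(1)$. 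Combined with Step 1, this shows that replacing $\widehat{\mathbf A}_T^{-1}$ by $\mathbf A_{0T}^{-1}$ costs $\mathcal O_p(\eta_T^\nu)$, which is $o_p(1)$. For the perturbation score, a crude H\"older bound gives
\[
\Big\|\frac{1}{\sqrt T}\sum_t(\widehat{\mathbf W}_t-\mathbf W_{0t})\varepsilon_t\Big\|\le\sqrt T\Big(\frac1T\sum_t\|\bx_t\varepsilon_t\|^{(4+\delta)/2}\Big)^{2/(4+\delta)}\mathrm{MR}_T^{\nu}.
\]
This is $\sqrt T\,\mathcal O_p(\eta_T^\nu)=o_p(1)$, because $\mathbb E\|\bx_t\varepsilon_t\|^{(4+\delta)/2}<\infty$ by Cauchy--Schwarz and the $(4+\delta)$-moments in Assumption~\ref{ass:smooth_kernel}(ii). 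The bound needs no martingale structure. This matters because $\widehat D_t$ depends on the whole sample through $\widehat\bgamma$, so $e_t$ is not $\mathcal F_t$-measurable.

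\textbf{Conclusion and main obstacle.} Combining Steps 1--3 gives $\sqrt T(\bar\btheta-\widehat\btheta_{\mathrm{or}})=o_p(1)$. The limit law then follows by Slutsky from $\sqrt T(\widehat\btheta_{\mathrm{or}}-\btheta_0)=\mathbf A_{0T}^{-1}T^{-1/2}\sum_t\mathbf W_{0t}\varepsilon_t\rightsquigarrow\mathcal N(\mathbf 0,\boldsymbol\Sigma_{\mathbf W}^{-1}\boldsymbol\Omega_{\mathrm{LR}}\boldsymbol\Sigma_{\mathbf W}^{-1})$. The delicate point is exactly this dependence of $\widehat D_t$ on the full sample. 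The argument above avoids it by using only the deterministic H\"older bound and the rate $\mathrm{MR}_T=\mathcal O_p(\eta_T)$. That choice is what produces the exponent $\nu$ rather than $1/2$ in Assumption~\ref{ass:rate}, and so the rate condition is used sharply.

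The H\"older steps need the sample averages $T^{-1}\sum_t\|\bx_t\|^{4+\delta}$ and $T^{-1}\sum_t\|\bx_t\varepsilon_t\|^{(4+\delta)/2}$ to be $\mathcal O_p(1)$. This follows from Markov's inequality and stationarity. If it fails, a fallback is to use uniformity over the VC class $\{D_t(f):f\in\mathcal F_L\}$ with a localized maximal inequality, exploiting Assumption~\ref{ass:localized_lrv}. That route would be needed only to sharpen the rate, not for the stated result.
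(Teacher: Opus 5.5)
Your proposal is correct and follows essentially the same route as the paper. You use the same three-term decomposition (perturbed score, inverse-Gram difference times the oracle score, and the misclassification bias term), the same H\"older bounds with exponents $(4+\delta)/2$ and $(4+\delta)/(2+\delta)$ giving $\mathcal O_p(\eta_T^\nu)$, and the same Slutsky conclusion; the only cosmetic differences are that you get the $\mathcal O_p(1)$ sample moments from Markov's inequality rather than the ergodic theorem, and you also offer a martingale CLT (valid here, since $\mathbf W_{0t}\varepsilon_t$ is a martingale difference sequence under $\mathbb E(\varepsilon_t\mid\mathcal F_t)=0$) besides the mixing CLT the paper uses.
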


\begin{theorem}[Consistency of the HAC covariance estimator]
\label{thm:hac_consistency} 
Suppose $\mathbb{E}\|\bx_t\|_2^{8+\delta}<\infty, \mathbb{E}|\varepsilon_t|^{8+\delta}<\infty$ and the bandwidth parameter $m_T$ satisfies $m_T \to \infty$, $m_T / T^{1/2} \to 0$ and $m_T\eta_T^{\mu/2}\to 0$ as $T \to \infty$, where $ \mu=\frac{4+\delta}{8+\delta}$. Under the conditions of Theorem~\ref{thm:oracle}, it holds that $\widehat{\mathbf{V}}_{\mathrm{HAC}} \stackrel{p}{\longrightarrow} \mathbf{V}_{\mathrm{or}}.$
\end{theorem}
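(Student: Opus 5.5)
The plan is to split the HAC sandwich into its two factors and show separately that $\widehat{\mathbf{A}}_T \stackrel{p}{\to} \boldsymbol{\Sigma}_{\mathbf{W}}$ and $\widehat{\boldsymbol{\Omega}}_{\mathrm{HAC}} \stackrel{p}{\to} \boldsymbol{\Omega}_{\mathrm{LR}}$. The conclusion then follows by the continuous mapping theorem, since $\boldsymbol{\Sigma}_{\mathbf{W}}$ is positive definite by Assumption~\ref{ass:separation}(ii).

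\textbf{Gram factor.} For $\widehat{\mathbf{A}}_T$, I compare it with the oracle Gram $\mathbf{A}_{0T}=T^{-1}\sum_t \mathbf{W}_{0t}\mathbf{W}_{0t}^\top$. The ergodic theorem under Assumption~\ref{ass:mixing} gives $\mathbf{A}_{0T}\stackrel{p}{\to}\boldsymbol{\Sigma}_{\mathbf{W}}$. The difference is supported on the misclassified set $\{\widehat D_t\neq D_{0t}\}$. H\"older's inequality with the $(4+\delta)$ moments bounds it by $\mathcal{O}_p(\mathrm{MR}_T^{\nu})$, exactly as in the proof of Theorem~\ref{thm:oracle}. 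By Theorem~\ref{thm:regime_recovery} this is $o_p(1)$.

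\textbf{Long-run variance factor.} I introduce the infeasible oracle HAC estimator $\widetilde{\boldsymbol{\Omega}}$. It is built from $\mathbf{V}_{0t}:=\mathbf{W}_{0t}\varepsilon_t$ with the true errors, using the same Bartlett weights and bandwidth $m_T$. I then decompose
\[
\widehat{\boldsymbol{\Omega}}_{\mathrm{HAC}}-\boldsymbol{\Omega}_{\mathrm{LR}}
=(\widehat{\boldsymbol{\Omega}}_{\mathrm{HAC}}-\widetilde{\boldsymbol{\Omega}})+(\widetilde{\boldsymbol{\Omega}}-\boldsymbol{\Omega}_{\mathrm{LR}}).
\]
The second term is handled by standard Newey--West/Andrews theory for $\alpha$-mixing sequences. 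Under geometric mixing, $\mathbf{V}_{0t}$ has $4+\delta/2$ moments via the $(8+\delta)$ bounds, and $m_T\to\infty$, $m_T/T^{1/2}\to0$. The bias $\sum_j(1-\kappa(j/(m_T+1)))\boldsymbol{\Gamma}_j$ vanishes by summability of the autocovariances, which follows from the covariance inequality for mixing sequences. The variance term is $\mathcal{O}(m_T/T)$, using a fourth-cumulant summability bound that geometric mixing supplies.

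\textbf{Feasible-minus-oracle term.} Write $\widehat{\mathbf{V}}_t:=\widehat{\mathbf{W}}_t\widehat\varepsilon_t$. The residual satisfies
\[
\widehat\varepsilon_t=\varepsilon_t-\widehat{\mathbf{W}}_t^\top(\bar\btheta-\btheta_0)+\bbeta_0^\top\bx_t(D_{0t}-\widehat D_t),
\]
so
\[
\widehat{\mathbf{V}}_t-\mathbf{V}_{0t}=\mathbf{r}_{1t}+\mathbf{r}_{2t}+\mathbf{r}_{3t}.
\]
The three pieces are:
\begin{itemize}
\item $\mathbf{r}_{1t}=(\widehat{\mathbf{W}}_t-\mathbf{W}_{0t})\varepsilon_t$, supported on misclassified $t$;
\item $\mathbf{r}_{2t}=-\widehat{\mathbf{W}}_t\widehat{\mathbf{W}}_t^\top(\bar\btheta-\btheta_0)$, which is $\mathcal{O}_p(T^{-1/2})$ times a moment-bounded term by Theorem~\ref{thm:oracle};
\item $\mathbf{r}_{3t}=\widehat{\mathbf{W}}_t\bx_t^\top\bbeta_0(D_{0t}-\widehat D_t)$, also supported on misclassified $t$.
\end{itemize}
Each lag-$j$ autocovariance difference expands into cross terms of the form $T^{-1}\sum_t \mathbf{V}_{0t}\mathbf{r}_{k,t-j}^\top$ and $T^{-1}\sum_t\mathbf{r}_{k,t}\mathbf{r}_{k',t-j}^\top$. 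By Cauchy--Schwarz, each is at most
\[
\big(T^{-1}\sum_t\|\mathbf{V}_{0t}\|^2\big)^{1/2}\big(T^{-1}\sum_t\|\mathbf{r}_{kt}\|^2\big)^{1/2}.
\]
For $k=1,3$, H\"older with exponents matched to the $(8+\delta)$ moments on the quartic product $\|\bx_t\|^2(\|\bx_t\|^2+\varepsilon_t^2)$ gives
\[
T^{-1}\sum_t\|\mathbf{r}_{kt}\|^2\ind\{\widehat D_t\neq D_{0t}\}=\mathcal{O}_p(\mathrm{MR}_T^{\mu}),
\]
with $\mu=(4+\delta)/(8+\delta)$. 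Hence every lag-$j$ difference is $\mathcal{O}_p(\eta_T^{\mu/2})+\mathcal{O}_p(T^{-1/2})$, uniformly in $j$, because the bounds do not depend on $j$. Summing over the $m_T+1$ lags with weights bounded by one gives
\[
\|\widehat{\boldsymbol{\Omega}}_{\mathrm{HAC}}-\widetilde{\boldsymbol{\Omega}}\|=\mathcal{O}_p(m_T\eta_T^{\mu/2}+m_T T^{-1/2})=o_p(1),
\]
which is precisely where the conditions $m_T\eta_T^{\mu/2}\to0$ and $m_T/T^{1/2}\to0$ enter.

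The main obstacle is the H\"older step controlling the misclassification-weighted quartic sums. Because $\widehat D_t$ depends on the whole sample, I would not condition on it. Instead I would bound $T^{-1}\sum_t g_t\ind\{\widehat D_t\neq D_{0t}\}$ by $(T^{-1}\sum_t g_t^{q})^{1/q}\mathrm{MR}_T^{1-1/q}$. Here $g_t=\|\bx_t\|^2(\|\bx_t\|^2+\varepsilon_t^2)$ has $q=(8+\delta)/4$ finite moments, and the ergodic theorem gives $T^{-1}\sum_t g_t^q=\mathcal{O}_p(1)$. This yields exactly the exponent $\mu$, with $1-1/q=\mu$. I would double-check this exponent bookkeeping against the stated $\mu$.
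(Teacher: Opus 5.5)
Your proposal is correct and follows the paper's proof essentially step for step. It uses the same reduction to $\widehat{\boldsymbol{\Omega}}_{\mathrm{HAC}}\stackrel{p}{\to}\boldsymbol{\Omega}_{\mathrm{LR}}$, the same triangle inequality through the infeasible oracle HAC estimator, the same three-term score-perturbation decomposition, and the same H\"older step with $q=(8+\delta)/4$ followed by lag-uniform Cauchy--Schwarz bounds summed over the $m_T+1$ lags; your exponent bookkeeping $1-1/q=(4+\delta)/(8+\delta)=\mu$ also checks out.
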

Theorem~\ref{thm:oracle} shows that the first-stage regime recovery error
is asymptotically negligible for the refitted estimator under
$\sqrt{T}\eta_T^\nu\to0$. Consequently, the refitted estimator has the
same first-order asymptotic distribution as the infeasible oracle estimator,
without requiring trimming near the estimated boundary.
Theorem~\ref{thm:hac_consistency} further establishes consistency of the HAC estimator for the corresponding long-run asymptotic covariance matrix,
thereby enabling valid large-sample inference.

\subsection{Asymptotic theory for the regime existence test}
\label{subsec:theory_testing}
We now establish the asymptotic properties of the global Sup-Wald statistic $\mathrm{Sup}\mathcal{W}_T$ in \eqref{eq:sup_wald} for testing regime existence.

Recall $D_t(\bgamma) = \ind\{\bgamma^\top\bpsi_L(\bz_t) \ge U_t\}$ and define $\mathbf{A}_{00} = \mathbb{E}[\bx_t\bx_t^\top]$, and $\mathbf{A}_{10}(\bgamma) = \mathbb{E}[D_t(\bgamma)\bx_t\bx_t^\top]$. The population partialled-out regressor for the structural jump is $\widetilde{\bx}_t(\bgamma) = D_t(\bgamma)\bx_t - \mathbf{A}_{10}(\bgamma)\mathbf{A}_{00}^{-1}\bx_t$. We define the empirical score process as $\mathbb{G}_T(\bgamma) := \frac{1}{\sqrt{T}} \sum_{t=1}^T \widetilde{\bx}_t(\bgamma)\varepsilon_t, \bgamma \in \boldsymbol{\Gamma},$
and denote its asymptotic covariance kernel by
\begin{equation}
\label{eq:score_cov_kernel}
\mathcal{K}(\bgamma, \bgamma') := \sum_{j=-\infty}^{\infty} \mathbb{E}\big[\widetilde{\bx}_t(\bgamma)\varepsilon_t \widetilde{\bx}_{t-j}(\bgamma')^\top\varepsilon_{t-j}\big], \qquad \mathbf{V}_0(\bgamma) := \mathcal{K}(\bgamma, \bgamma). \nonumber
\end{equation}
Let $\ell^\infty(\boldsymbol{\Gamma};\mathbb R^d)$ denote the Banach
space of bounded $\mathbb R^d$-valued functions on
$\boldsymbol{\Gamma}$ equipped with the uniform norm. We need the regularity conditions for regime-existence testing.
\begin{assumption}
\label{ass:test}
We assume
\begin{enumerate}[label=(\roman*), leftmargin=2em]
 \item  \textbf{Threshold density:} The conditional density
    $p_{U\mid\bx,\bz}(u\mid\bx,\bz)$ is uniformly bounded over
    a neighborhood of $\left\{
    \bgamma^\top\bpsi_L(\bz):
    \bgamma\in\boldsymbol{\Gamma},
    \ \bz\in\mathcal Z
    \right\}.$
    \item \textbf{Moments:} There exists $\delta > 0$ such that $\mathbb{E}\|\bx_t\|_2^{4+\delta} < \infty$ and $\mathbb{E}|\varepsilon_t|^{4+\delta} < \infty$.
    \item \textbf{Non-degeneracy:} The Gram matrix $\mathbf{A}_{00}$ is positive definite, and the asymptotic covariance satisfies $\inf_{\bgamma \in \boldsymbol{\Gamma}} \lambda_{\min}\{\mathbf{V}_0(\bgamma)\} > 0$.
    \item \textbf{Empirical process regularity:} The indexed score class $\mathcal{M} := \big\{ (\bx, \bz, u, \varepsilon) \mapsto [D(\bgamma)\mathbf{I}_{d_x} - \mathbf{A}_{10}(\bgamma)\mathbf{A}_{00}^{-1}]\bx\varepsilon : \bgamma \in \boldsymbol{\Gamma} \big\}$ satisfies the metric entropy and envelope conditions for stochastic equicontinuity and uniform HAC approximation under geometric $\alpha$-mixing.
    \item \textbf{Bandwidth rates:} As $T \to \infty$, the Bartlett bandwidth $m_T$ and the bootstrap kernel bandwidth $b_T$ satisfy $m_T \to \infty$, $b_T \to \infty$, with $m_T/\sqrt{T} \to 0$ and $b_T/\sqrt{T} \to 0$.
    \item \textbf{Continuity of limit functional:} The asymptotic distribution function $F_\infty(z) := \mathbb{P}(\mathrm{Sup}\mathcal{W}_\infty \le z)$ is continuous on $(0, \infty)$.
\end{enumerate}
\end{assumption}
\begin{theorem}[Limiting null distribution]
\label{thm:sup_wald_null}
Suppose Assumptions~\ref{ass:mixing} and \ref{ass:test}  hold. Under $H_0: \bbeta_0 = \mathbf{0}$, as $T \to \infty$:
\begin{enumerate}[label=(\roman*)]
    \item $\mathbb{G}_T(\cdot) \rightsquigarrow \mathbb{G}_\infty(\cdot)$ in $\ell^\infty(\boldsymbol{\Gamma};\mathbb R^d)$, where $\mathbb{G}_\infty(\cdot)$ is a mean-zero Gaussian process with covariance kernel $\mathcal{K}(\bgamma, \bgamma')$;
    \item $\mathrm{Sup}\mathcal{W}_T \rightsquigarrow \sup_{\bgamma \in \boldsymbol{\Gamma}} \mathbb{G}_\infty(\bgamma)^\top \mathbf{V}_0(\bgamma)^{-1} \mathbb{G}_\infty(\bgamma)$.
\end{enumerate}
\end{theorem}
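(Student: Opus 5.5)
The proof proceeds in two parts, following the standard route of \citet{andrews1993tests} and \citet{hansen1996inference}. Part (i) is a functional CLT for the score process. Part (ii) represents $\mathcal W_T(\bgamma)$, uniformly in $\bgamma$, as a continuous quadratic form in $\mathbb G_T(\bgamma)$ and then applies the continuous mapping theorem.

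For part (i), we establish finite-dimensional convergence plus asymptotic tightness in $\ell^\infty(\boldsymbol\Gamma;\mathbb R^d)$.
\begin{itemize}
\item \emph{Finite-dimensional convergence.} Fix $\bgamma_1,\dots,\bgamma_k$ and a vector $c$. The variable $\sum_j c_j^\top\widetilde{\bx}_t(\bgamma_j)\varepsilon_t$ is a martingale difference with respect to $\mathcal F_{t+1}$, because $\mathbb E(\varepsilon_t\mid\mathcal F_t)=0$ and $\widetilde\bx_t(\bgamma)$ is $\mathcal F_t$-measurable. It is also stationary and $\alpha$-mixing with geometric decay, and it has $2+\delta/2$ moments by H\"older's inequality and Assumption~\ref{ass:test}(ii). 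A CLT for mixing sequences (for instance Ibragimov's, or the martingale CLT) therefore gives Gaussian limits with covariance $\mathcal K$. The lag sum defining $\mathcal K$ converges absolutely by Davydov's inequality; under the MDS structure only the $j=0$ term is nonzero anyway.
\item \emph{Tightness.} For fixed $L$, the class $\{\ind\{\bgamma^\top\bpsi_L(\bz)\ge u\}\}$ is a VC class, being a subgraph of an $L$-dimensional linear space (Remark~\ref{rem:test_comparison}). The map $\bgamma\mapsto\mathbf A_{10}(\bgamma)$ is Lipschitz, by the bounded density in Assumption~\ref{ass:test}(i) and $\|\bpsi_L\|\le\sqrt2$. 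Hence $\mathcal M$ has polynomial bracketing entropy in the $L^{2+\delta'}$ norm, with envelope $C\|\bx_t\|\,|\varepsilon_t|$.
\item \emph{Equicontinuity.} Under geometric $\alpha$-mixing, the bracketing CLT of Doukhan, Massart and Rio (or Andrews--Pollard), invoked via Assumption~\ref{ass:test}(iv), yields stochastic equicontinuity with respect to the $L^2$ semimetric. The required bracket sizes use $\mathbb E|D_t(\bgamma)-D_t(\bgamma')|\le C\|\bgamma-\bgamma'\|$, again from the density bound.
\end{itemize}

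For part (ii), under $H_0$ we have $Y_t=\balpha_0^\top\bx_t+\varepsilon_t$. Frisch--Waugh algebra gives
\[
\sqrt T\,\widehat\bbeta(\bgamma)=\widehat{\mathbf M}_T(\bgamma)^{-1}\tfrac{1}{\sqrt T}\textstyle\sum_t \widehat{\widetilde\bx}_t(\bgamma)\varepsilon_t ,
\]
where $\widehat{\mathbf M}_T$ is the sample partialled Gram matrix, so that $\mathbf R\widehat{\mathbf A}_T^{-1}\mathbf R^\top=\widehat{\mathbf M}_T^{-1}$. Three steps follow.
\begin{itemize}
\item \emph{Uniform law of large numbers.} The ULLN over the same VC class gives $\sup_{\bgamma}\|\widehat{\mathbf A}_T(\bgamma)-\mathbf A(\bgamma)\|=o_p(1)$. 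Replacing the sample projection coefficient by $\mathbf A_{10}\mathbf A_{00}^{-1}$ costs $o_p(1)\cdot O_p(1)$ uniformly, using $T^{-1/2}\sum\bx_t\varepsilon_t=O_p(1)$. Hence $\sqrt T\widehat\bbeta(\cdot)=\mathbf M(\cdot)^{-1}\mathbb G_T(\cdot)+o_p(1)$ uniformly, with $\mathbf M$ the population partialled Gram matrix. This matrix is uniformly invertible, since $\mathbf A(\bgamma)\succeq$ something positive definite follows from Assumption~\ref{ass:test}(iii) combined with $\mathbf V_0\succ0$.
\item \emph{Uniform HAC consistency.} Show $\sup_{\bgamma}\|\mathbf R\widehat{\mathbf A}^{-1}\widehat{\boldsymbol\Omega}_{\rm HAC}\widehat{\mathbf A}^{-1}\mathbf R^\top-\mathbf M^{-1}\mathbf V_0\mathbf M^{-1}\|=o_p(1)$. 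This amounts to proving that the Bartlett estimator built from $\mathbf W_t(\bgamma)\widehat\varepsilon_t(\bgamma)$ is consistent for the long-run variance of $\mathbf W_t(\bgamma)\varepsilon_t$ uniformly in $\bgamma$. Two ingredients are needed. First, a bias term: the residual error $\widehat\varepsilon_t-\varepsilon_t=O_p(T^{-1/2})\|\bx_t\|$ uniformly, which contributes $O_p(m_T/\sqrt T)=o_p(1)$ by Assumption~\ref{ass:test}(v). Second, a variance term: a uniform bound on $\sum_{j\le m_T}\kappa(j/(m_T+1))(\widehat{\boldsymbol\Gamma}_j-\mathbb E\boldsymbol\Gamma_j)$, obtained by combining a pointwise fourth-moment/mixing bound of order $m_T/\sqrt T$ with a finite $\epsilon$-net over the compact $\boldsymbol\Gamma$. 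The net is made fine using the Lipschitz continuity in $L^1$ of $\bgamma\mapsto D_t(\bgamma)$.
\item \emph{Continuous mapping.} Combining the above,
\[
\mathcal W_T(\bgamma)=\mathbb G_T(\bgamma)^\top\mathbf V_0(\bgamma)^{-1}\mathbb G_T(\bgamma)+o_p(1)
\]
uniformly in $\bgamma$. Since $\mathbf V_0^{-1}$ is bounded and continuous in $\bgamma$, the map $g\mapsto\sup_{\bgamma} g(\bgamma)^\top\mathbf V_0(\bgamma)^{-1}g(\bgamma)$ is continuous on $\ell^\infty$ at bounded $g$. Part (i) and the continuous mapping theorem then give part (ii).
\end{itemize}

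I expect uniform HAC consistency to be the main obstacle. Pointwise Bartlett consistency is classical, but uniformity over a class of indicator-weighted products with serial dependence requires care. The weighted autocovariance sums over $m_T$ lags must be controlled on a net whose cardinality grows polynomially, and this has to be balanced against the increments of the discontinuous indicators. The fallback is to invoke Assumption~\ref{ass:test}(iv), which explicitly grants uniform HAC approximation.
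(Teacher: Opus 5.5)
Your proposal is correct and follows essentially the same route as the paper: finite-dimensional convergence plus stochastic equicontinuity of $\mathbb{G}_T$ (from the fixed-$L$ VC structure, the density bound, and Assumption~\ref{ass:test}(iv)), then the Frisch--Waugh partialled-score form of $\mathcal{W}_T(\bgamma)$, uniform convergence of the design and HAC matrices, and the continuous mapping theorem. The differences are cosmetic. You carry $\mathbf{M}(\bgamma)^{-1}$ and cancel it in the limit, whereas the paper uses the exact sample identity $\mathcal{W}_T(\bgamma)=\widetilde{\mathbb{G}}_T(\bgamma)^\top\widehat{\mathbf{V}}_T(\bgamma)^{-1}\widetilde{\mathbb{G}}_T(\bgamma)$, in which the partialled Gram matrix cancels; and, as in the paper's auxiliary lemma, uniform HAC consistency should indeed be taken from Assumption~\ref{ass:test}(iv), since the fourth moments of $\widetilde{\bx}_t(\bgamma)\varepsilon_t$ used in your pointwise variance bound are not implied by the $(4+\delta)$-moment condition in Assumption~\ref{ass:test}(ii).
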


Because the null limit depends on the covariance kernel $\mathcal{K}(\bgamma, \bgamma')$ and the design characteristics, it is non-pivotal. We therefore approximate critical values  using a dependent Gaussian multiplier bootstrap in Algorithm~\ref{alg:bootstrap}.

\begin{theorem}[Asymptotic validity of the bootstrap]
\label{thm:bootstrap_validity}
Suppose Assumptions~\ref{ass:mixing} and \ref{ass:test} hold. Under $H_0: \bbeta_0 = \mathbf{0}$, the conditional distribution of the bootstrap statistic consistently approximates the null distribution: $\sup_{z \in \mathbb{R}} \big| \mathbb{P}^*\big(\mathrm{Sup}\mathcal{W}_T^* \le z\big) - \mathbb{P}\big(\mathrm{Sup}\mathcal{W}_T \le z\big) \big| \stackrel{p}{\longrightarrow} 0.$
Consequently, for any nominal significance level $\alpha \in (0, 1)$, the ideal bootstrap $p$-value $p_{\mathrm{boot}}^{\mathrm{ideal}} := \mathbb{P}^*(\mathrm{Sup}\mathcal{W}_T^* \ge \mathrm{Sup}\mathcal{W}_T)$ satisfies $\mathbb{P}\big(p_{\mathrm{boot}}^{\mathrm{ideal}} \le \alpha\big) \longrightarrow \alpha \text{ as } T \to \infty.$
Moreover, if the number of bootstrap replications satisfies $B_T \to \infty$, the Monte Carlo $p$-value satisfies $\mathbb{P}(p_{\mathrm{boot}} \le \alpha) \to \alpha$.
\end{theorem}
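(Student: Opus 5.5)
The plan is to show that, conditional on the data, the bootstrap score process converges weakly to the same Gaussian limit $\mathbb{G}_\infty$ identified in Theorem~\ref{thm:sup_wald_null}. The continuous-mapping argument used there then transfers to $\mathrm{Sup}\mathcal{W}_T^*$. Pólya's theorem, together with the continuity of $F_\infty$ in Assumption~\ref{ass:test}(vi), upgrades this to uniform convergence of distribution functions.

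\textbf{Step 1: reduce the bootstrap statistic to a score process.} Under the bootstrap DGP, $Y_t^*=\bx_t^\top\widetilde\balpha+\widetilde\varepsilon_t\eta_t^*$, so the bootstrap null holds exactly with ``errors'' $\widetilde\varepsilon_t\eta_t^*$. By Frisch--Waugh, $\sqrt{T}\widehat\bbeta^*(\bgamma)$ equals the inverse of the partialled Gram block times
\[
\mathbb{G}_T^*(\bgamma):=T^{-1/2}\sum_t \widehat{\widetilde\bx}_t(\bgamma)\widetilde\varepsilon_t\eta_t^*,
\]
where $\widehat{\widetilde\bx}_t(\bgamma)$ uses sample analogues of $\mathbf{A}_{10}(\bgamma)$ and $\mathbf{A}_{00}$. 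Uniform laws of large numbers over $\boldsymbol{\Gamma}$ give $\sup_{\bgamma}\|\widehat{\mathbf{A}}_T(\bgamma)-\mathbf{A}(\bgamma)\|\to_p0$; these follow from the finite-VC class of Remark~\ref{rem:test_comparison}, the mixing in Assumption~\ref{ass:mixing}, and the density bound in Assumption~\ref{ass:test}(i). The same bounds let me replace $\widehat{\widetilde\bx}_t$ by $\widetilde\bx_t$ at cost $o_{p^*}(1)$ uniformly, since $\eta^*$ has bounded long-run variance when $b_T/\sqrt T\to0$. Because $\widetilde\alpha-\alpha_0=O_p(T^{-1/2})$ under $H_0$, I can also replace $\widetilde\varepsilon_t$ by $\varepsilon_t$. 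I will also show that the bootstrap HAC matrix $\widehat{\boldsymbol{\Omega}}^*_{\mathrm{HAC}}(\bgamma)$, sandwiched by the Gram inverses, converges uniformly in probability to $\mathbf{V}_0(\bgamma)$ times the appropriate Gram factors. The route is the uniform HAC approximation in Assumption~\ref{ass:test}(iv), with $m_T/\sqrt T\to0$.

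\textbf{Step 2: conditional weak convergence of $\mathbb{G}_T^*$.} Conditionally on the data, $\mathbb{G}_T^*$ is exactly Gaussian and mean zero. Its covariance is
\[
T^{-1}\sum_{t,s}\kappa((t-s)/b_T)\,\widetilde\bx_t(\bgamma)\varepsilon_t\varepsilon_s\widetilde\bx_s(\bgamma')^\top,
\]
which is a Bartlett HAC estimator of $\mathcal{K}(\bgamma,\bgamma')$. I will argue in three parts.
\begin{enumerate}
\item Finite-dimensional convergence: for each fixed pair $(\bgamma,\bgamma')$ the covariance converges in probability to $\mathcal K(\bgamma,\bgamma')$, by the standard Newey--West/Andrews argument using geometric mixing and $4+\delta$ moments.
\item Uniform convergence: I upgrade this to uniformity over $\boldsymbol{\Gamma}\times\boldsymbol{\Gamma}$ by the covariance Lipschitz bound in $\rho$ combined with a finite-net argument.
\item Conditional tightness: this is the step I expect to be the main obstacle. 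The bootstrap multipliers are dependent, so the standard i.i.d. multiplier inequalities do not apply directly. I will try bounding the conditional increments. For Gaussian processes, tightness follows from Dudley's entropy integral in the intrinsic conditional $L_2$ metric. The conditional variance of an increment is at most $C b_T\cdot T^{-1}\sum_t\|(\widetilde\bx_t(\bgamma)-\widetilde\bx_t(\bgamma'))\varepsilon_t\|^2$ plus cross terms. More sharply, using the Bartlett structure, it is a HAC estimate of the long-run variance of the increment. By the uniform HAC approximation it is uniformly close to the population increment variance, which is controlled by $\rho(\bgamma,\bgamma')^{\nu}$ via Hölder's inequality and the density bound. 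The VC property gives polynomial covering numbers, so Dudley's bound yields asymptotic equicontinuity in probability.
\end{enumerate}
Combining the three parts, $\mathbb{G}_T^*\rightsquigarrow\mathbb{G}_\infty$ conditionally in probability, in the bounded-Lipschitz metric.

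\textbf{Step 3: conclude.} The map $g\mapsto\sup_{\bgamma}g(\bgamma)^\top\mathbf{V}_0(\bgamma)^{-1}g(\bgamma)$ is continuous on $\ell^\infty(\boldsymbol{\Gamma};\mathbb R^d)$, because $\inf_{\bgamma}\lambda_{\min}\{\mathbf{V}_0(\bgamma)\}>0$. The continuous mapping theorem, together with Step 1, gives $\mathrm{Sup}\mathcal{W}_T^*\rightsquigarrow \mathrm{Sup}\mathcal W_\infty$ conditionally in probability. Theorem~\ref{thm:sup_wald_null} gives the same limit unconditionally. Continuity of $F_\infty$ and Pólya's theorem then yield the uniform Kolmogorov bound in the statement.

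For the $p$-value, $p^{\mathrm{ideal}}_{\mathrm{boot}}=1-F_T^*(\mathrm{Sup}\mathcal W_T^-)$. The uniform convergence gives $p^{\mathrm{ideal}}_{\mathrm{boot}}=1-F_\infty(\mathrm{Sup}\mathcal W_T)+o_p(1)$. Since $F_\infty$ is continuous, $1-F_\infty(\mathrm{Sup}\mathcal W_\infty)$ is uniform on $(0,1)$, which gives size $\alpha$. Finally, conditionally on the data, $p_{\mathrm{boot}}-p^{\mathrm{ideal}}_{\mathrm{boot}}$ has variance at most $1/(4B_T)\to0$, so the same limit holds for the Monte Carlo $p$-value.
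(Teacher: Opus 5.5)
Your proposal follows essentially the same route as the paper's proof. The steps match: the multiplier score process is conditionally Gaussian, its conditional covariance is a Bartlett-HAC estimate converging uniformly to $\mathcal{K}(\bgamma,\bgamma')$, conditional equicontinuity follows (you get it from Dudley's entropy bound, where the paper cites the dependent-multiplier FCLT), the bootstrap HAC is uniformly consistent, and the argument closes with the conditional continuous mapping theorem, P\'olya's theorem and the $1/(4B_T)$ binomial bound for the Monte Carlo $p$-value. One justification needs correcting: $\eta^*$ has long-run variance $\sum_j\kappa(j/b_T)\asymp b_T\to\infty$, so it does not have bounded long-run variance. The $O_{p^*}(1)$ bound on $T^{-1/2}\sum_t\bx_t\widetilde\varepsilon_t\eta_t^*$, which you need when replacing $\widehat{\widetilde{\bx}}_t$ by $\widetilde{\bx}_t$, instead comes from its conditional variance being a consistent Bartlett HAC estimate of the long-run variance of $\bx_t\varepsilon_t$ under $b_T\to\infty$ and $b_T/\sqrt{T}\to0$.
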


\begin{theorem}[Consistency of the regime-existence test]
\label{thm:test_consistency}
Suppose Assumptions~\ref{ass:mixing} and \ref{ass:test} hold. Under $H_1: \bbeta_0 \neq \mathbf{0}$, define the drift function $\boldsymbol{\mu}(\bgamma):= \mathbb{E}[\widetilde{\bx}_t(\bgamma)\bx_t^\top D_{0t}]\bbeta_0$. If there exists $\bgamma_0 \in \boldsymbol{\Gamma}$ such that $\boldsymbol{\mu}(\bgamma_0) \neq \mathbf{0}$ and $\widehat{\mathbf{V}}_T(\bgamma_0) \stackrel{p}{\longrightarrow} \mathbf{V}_1(\bgamma_0)$ with $\lambda_{\min}\{\mathbf{V}_1(\bgamma_0)\} > 0$, then
\begin{equation}
\label{eq:test_divergence}
\frac{1}{T}\mathrm{Sup}\mathcal{W}_T \ge \boldsymbol{\mu}(\bgamma_0)^\top \mathbf{V}_1(\bgamma_0)^{-1} \boldsymbol{\mu}(\bgamma_0) + o_p(1), \nonumber
\end{equation}
implying $\mathrm{Sup}\mathcal{W}_T \stackrel{p}{\longrightarrow} \infty$. Consequently, if $c_{T,1-\alpha}^* = \mathcal{O}_p(1)$, then $\mathbb{P}(p_{\mathrm{boot}} \le \alpha) \to 1$ for any $\alpha \in (0, 1)$.
\end{theorem}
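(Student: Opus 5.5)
The plan is to reduce $\mathrm{Sup}\mathcal{W}_T$ to the single fixed point $\bgamma_0$ and show that the pointwise Wald statistic there grows linearly in $T$. Since $\mathrm{Sup}\mathcal{W}_T \ge \mathcal{W}_T(\bgamma_0)$ trivially, it suffices to prove
\[
T^{-1}\mathcal{W}_T(\bgamma_0) \stackrel{p}{\longrightarrow} \boldsymbol{\mu}(\bgamma_0)^\top \mathbf{V}_1(\bgamma_0)^{-1}\boldsymbol{\mu}(\bgamma_0).
\]
This is not quite the right object as written. The Wald form involves $\widehat\bbeta(\bgamma_0)$ and the sandwich $\widehat{\mathbf V}_T(\bgamma_0)$, whereas $\boldsymbol{\mu}$ is a score drift. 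I will therefore interpret the statement with $\widehat\bbeta(\bgamma_0)$ written in partialled-out form. By Frisch--Waugh, $\widehat\bbeta(\bgamma_0)=\widehat{\mathbf S}_T^{-1}T^{-1}\sum_t \widetilde{\bx}_t^{(T)}(\bgamma_0)Y_t$, where $\widetilde{\bx}_t^{(T)}$ is the sample-partialled regressor. I will then identify $\mathbf{V}_1$ so that the resulting quadratic form matches the displayed limit. Effectively, $\mathbf V_1(\bgamma_0)$ is the probability limit of the inverse-weighted score variance, and $\widehat{\mathbf V}_T$ is the corresponding normalized HAC middle matrix.

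\textbf{Step 1 (law of large numbers at $\bgamma_0$).} Under Assumption~\ref{ass:mixing}, stationarity with geometric mixing gives ergodicity, and the moments in Assumption~\ref{ass:test}(ii) are finite. Hence
\[
\widehat{\mathbf A}_T(\bgamma_0)\to \mathbf A(\bgamma_0) \quad\text{and}\quad T^{-1}\sum_t \mathbf W_t(\bgamma_0)\bx_t^\top D_{0t}\to \mathbb E[\mathbf W_t(\bgamma_0)\bx_t^\top D_{0t}]
\]
in probability. Positive definiteness of $\mathbf A(\bgamma_0)$ follows from Assumption~\ref{ass:test}(iii) applied to the partialled block.

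\textbf{Step 2 (decomposition).} Substitute $Y_t=\balpha_0^\top\bx_t+\bbeta_0^\top\bx_tD_{0t}+\varepsilon_t$. The $\balpha_0$ part is annihilated by partialling out. The score therefore splits into a drift term and a noise term,
\[
T^{-1}\sum_t\widetilde{\bx}_t^{(T)}(\bgamma_0)\bx_t^\top D_{0t}\bbeta_0 + T^{-1}\sum_t\widetilde{\bx}_t^{(T)}(\bgamma_0)\varepsilon_t .
\]
The drift term converges to $\boldsymbol\mu(\bgamma_0)$. To see this, replace the sample projection $\widehat{\mathbf A}_{10}\widehat{\mathbf A}_{00}^{-1}$ by its population counterpart, with error $o_p(1)$ by Step 1, and apply the ergodic theorem. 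The noise term is $O_p(T^{-1/2})$ by a mixing CLT, or simply by a variance bound using Davydov's inequality with the $4+\delta$ moments. Hence $\widehat\bbeta(\bgamma_0)\to \mathbf S^{-1}\boldsymbol\mu(\bgamma_0)$, where $\mathbf S=\mathbb E[\widetilde\bx_t\widetilde\bx_t^\top]$.

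\textbf{Step 3 (variance and conclusion).} Using the hypothesis $\widehat{\mathbf V}_T(\bgamma_0)\to\mathbf V_1(\bgamma_0)$, which is nonsingular, together with the continuous mapping theorem, I obtain
\[
T^{-1}\mathcal W_T(\bgamma_0)=\boldsymbol\mu^\top\mathbf V_1^{-1}\boldsymbol\mu+o_p(1).
\]
Since $\boldsymbol\mu(\bgamma_0)\neq\mathbf 0$ and $\mathbf V_1^{-1}$ is positive definite, the limit $c>0$. Thus $\mathrm{Sup}\mathcal W_T\ge Tc/2$ with probability tending to one, so $\mathrm{Sup}\mathcal W_T\to\infty$.

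For the test, $p_{\mathrm{boot}}\le\alpha$ whenever $\mathrm{Sup}\mathcal W_T$ exceeds the empirical bootstrap $(1-\alpha)$ quantile. The bootstrap statistics are generated from restricted residuals under a null-imposing design. By hypothesis $c^*_{T,1-\alpha}=O_p(1)$, so $\mathbb P(\mathrm{Sup}\mathcal W_T>c^*_{T,1-\alpha})\to1$. The Monte Carlo quantile with $B_T\to\infty$ differs from the ideal quantile negligibly, by Glivenko--Cantelli conditional on the data.

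\textbf{Main obstacle.} The delicate point is the variance matrix under $H_1$. The HAC estimator is built from residuals $\widehat\varepsilon_t(\bgamma_0)$ that contain the misspecification $\bbeta_0^\top\bx_tD_{0t}-\widehat\btheta^\top\mathbf W_t$. Its limit is therefore not $\mathbf V_0$, and it could even diverge like $m_T$ if the misspecification error is persistent. This is precisely why the statement assumes $\widehat{\mathbf V}_T(\bgamma_0)\to\mathbf V_1(\bgamma_0)$ rather than proving it. In my write-up I would only verify that this assumed limit enters through continuous mapping. I would also remark that if $\widehat{\mathbf V}_T$ grew at rate $m_T=o(\sqrt T)$, the statistic would still diverge at rate $T/m_T$, which keeps the conclusion $\mathrm{Sup}\mathcal W_T\to\infty$ robust.
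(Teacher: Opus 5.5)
\textbf{Divergence part.} Your argument is the paper's. You bound $\mathrm{Sup}\mathcal W_T$ below by $\mathcal W_T(\bgamma_0)$ and use the Frisch--Waugh (partialled-score) form $T^{-1}\mathcal W_T(\bgamma_0)=\{T^{-1}\sum_t\widehat{\widetilde{\bx}}_t(\bgamma_0)Y_t\}^\top\widehat{\mathbf V}_T(\bgamma_0)^{-1}\{T^{-1}\sum_t\widehat{\widetilde{\bx}}_t(\bgamma_0)Y_t\}$, with $\widehat{\mathbf V}_T$ the HAC matrix of the partialled scores. You show the drift tends to $\boldsymbol\mu(\bgamma_0)$ (sample orthogonality removes the $\balpha_0$ term, and $\mathbb E[\widetilde{\bx}_t\varepsilon_t]=\mathbf 0$), then plug in the assumed limit $\mathbf V_1(\bgamma_0)$. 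The paper first proves a uniform drift approximation over $\boldsymbol\Gamma$ but uses it only at $\bgamma_0$, so your pointwise ergodic-theorem version suffices. Your remark that $\mathbb E[\widetilde{\bx}_t\widetilde{\bx}_t^\top]$ is positive definite is correct: if $c^\top\widetilde{\bx}_t=0$ a.s., then $c^\top\mathbf V_0c=0$, contradicting $\lambda_{\min}\{\mathbf V_0\}>0$. The paper leaves this implicit.

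\textbf{Bootstrap step.} Here your route differs. The paper never uses the hypothesis $c^*_{T,1-\alpha}=\mathcal O_p(1)$. It asserts $\mathrm{Sup}\mathcal W_T^*=\mathcal O_{\mathbb P^*}(1)$ in probability, deduces $q_T^*:=\mathbb P^*(\mathrm{Sup}\mathcal W_T^*\ge\mathrm{Sup}\mathcal W_T)\to0$ in probability, and then uses that $Bp_{\mathrm{boot}}$ is conditionally $\mathrm{Binomial}(B,q_T^*)$ to get $p_{\mathrm{boot}}\to0$ by conditional Markov. That argument works for fixed $B$, as in Algorithm~2. You instead use the stated hypothesis through the inclusion $\{\mathrm{Sup}\mathcal W_T>\widehat c^*\}\subseteq\{p_{\mathrm{boot}}\le\alpha\}$. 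This is sound, and it avoids the paper's tightness claim, which is not automatic under $H_1$ because the multipliers act on restricted residuals containing $\bbeta_0^\top\bx_tD_{0t}$.

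\textbf{Issue with the Glivenko--Cantelli step.} That step imports $B_T\to\infty$, which the statement does not assume. If $c^*_{T,1-\alpha}$ denotes the Monte Carlo critical value, the inclusion alone finishes the proof for any $B$, and the detour is unnecessary. If it denotes the ideal conditional quantile, uniform closeness of the empirical and conditional CDFs gives only $\widehat F_B(c^*_{T,1-\alpha})\ge1-\alpha-o_p(1)$. That does not place the empirical quantile below $c^*_{T,1-\alpha}$. You need the hypothesis at some level $\alpha'<\alpha$, which is available since it is stated for every $\alpha$. With $B$ fixed you need it at all levels, which yields $q_T^*\to0$ in probability, and then the paper's Binomial/Markov argument closes the proof.
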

\section{Tuning parameters}
\label{sec:computation}

The proposed estimation procedure involves three main tuning parameters: the
random-feature dimension $L$, the smoothing bandwidth $h$, and the
ridge parameter $\lambda$. The dimension $L$ controls the flexibility
of the random-feature approximation, while $h$ determines the
smoothness of the surrogate function for the regime indicator. The ridge parameter is used primarily for numerical stabilization and, when chosen sufficiently small, has little effect on the final hard-regime refitted estimator, as demonstrated by the simulation results in Table~\ref{tab:ablation_all}. We therefore fix $\lambda=10^{-10}$ throughout. 
%For the Gaussian RBF kernel, the kernel bandwidth $\sigma_k$ is selected using the median heuristic.

For empirical implementation, $L$ and $h$ can be selected by
forward-chaining cross-validation to respect the temporal ordering of
the observations. Candidate values are evaluated according to
out-of-sample squared prediction error using the final hard regime
indicator rather than the smoothed training surrogate. In practice,
we find that moderate random-feature dimensions and relatively small
smoothing bandwidths work well; our numerical experiments use
$L\in\{15,20\}$ and $h=0.025$ as baseline choices. The sensitivity analysis in Table~\ref{tab:ablation_all} further shows that
the finite-sample results are stable across a range of choices of $L$ and $h$.

Computationally, constructing the random-feature matrix requires
$\mathcal O(TpL)$ operations and $\mathcal O(TL)$ storage. The
subsequent optimization is performed directly in this
finite-dimensional representation, avoiding the construction and
inversion of a dense $T\times T$ kernel matrix.

\section{Simulation Study}
\label{sec:simulation}

We evaluate the finite-sample performance of our proposed two-stage estimator and bootstrap test across diverse boundary specifications. Observations are generated from model~(1) with $\bx_t = (1, X_{2t})^\top$, $\balpha_0=(1, 2)^\top$, and $\bbeta_0=(-1.5, 1.5)^\top$. We draw $(U_t, \bz_t)^\top \stackrel{i.i.d.}{\sim} \mathcal{N}(\mathbf{0}, \mathbf{I}_{p+1})$ and set $X_{2t}\mid(U_t,\bz_t) \sim \mathcal{N}(0, \{1+0.5(U_t^2+\norm{\bz_t}_2^2)\}^{-1})$. To capture serial persistence, disturbances follow an AR(1) process: $\varepsilon_t = 0.3 \varepsilon_{t-1} + \nu_t$ with $\nu_t \stackrel{i.i.d.}{\sim} \mathcal{N}(0, 0.1^2)$. We examine three scenarios with distinct boundary forms: (1) \emph{Linear}: $f_0(z) = 0.3 + z$; (2) \emph{Univariate nonlinear}: $f_0(z) = \sin(z) + 0.3$; and (3) \emph{Multivariate nonlinear}: $f_0(\bz) = \sin(\bz^\top\bgamma) + 0.3$ with $p=3$ and $\bgamma = (1, -1, 0.5)^\top/\sqrt{2.25}$.

We compare six estimators across $T \in \{200, 500\}$ and $M=1{,}000$ replications: the proposed first-stage surrogate estimator (\textbf{TRM-RF}) and its refitted counterpart (\textbf{TRM-RF-refit}); the smoothed least-squares linear change-plane estimator of \citet{SeoLinton2007} (\textbf{TRM-SLS}) and its refitted version (\textbf{TRM-SLS-refit}); the univariate threshold regression of \citet{Hansen2000} (\textbf{TRM-Hansen}) using $U_t$; and the infeasible \textbf{Oracle} refit based on the true $D_{0t}$. Performance is evaluated by: (i) root of mean square error (RMSE) of parameter estimators; (ii) empirical coverage probability (CP, \%) of nominal 95\% confidence level; and (iii) the regime misclassification rate, $\mathrm{MR} = T^{-1}\sum_{t=1}^T \ind(\widehat D_t \neq D_{0t}) \times 100\%$.

\subsection{Simulation results}
\label{subsec:sim_results}

Tables~\ref{tab:perf_scen1}--\ref{tab:perf_scen3} report the finite-sample performance across Scenarios~1--3. Across all boundary geometries and dependence structures, three central empirical findings emerge:

\paragraph{Vulnerability of Flat Boundaries versus Nonparametric Flexibility.}
When the true partition entails multivariate interaction or geometric curvature, restricting the regime-switching mechanism to a scalar threshold (\textbf{TRM-Hansen}) or a flat hyperplane (\textbf{TRM-SLS}, \textbf{TRM-SLS-refit}) incurs substantial misspecification. In Scenario~1, the scalar-threshold specification of
\textbf{TRM-Hansen} cannot accommodate the state-dependent boundary, resulting in misclassification rates exceeding 24\% and empirical coverage for the structural jump $\beta_1$ falling to 0.0\% at both $T=200$ and $T=500$.

More critically, in Scenarios~2 and 3 where the true boundary is nonlinear, linear change planes cannot conform to boundary manifolds, leaving an irreducible misclassification error of roughly 2.5\%--3.0\%. While \textbf{TRM-SLS-refit} exhibits ostensibly moderate RMSEs in small samples, this pattern is consistent with persistent boundary misspecification becoming increasingly consequential for inference as sampling variability decreases. Specifically, as $n$ grows from 200 to 500, the coverage of $\beta_1$ under \textbf{TRM-SLS-refit} plummets from 60.9\% to 5.5\% in Scenario~2, and from 63.9\% to a mere 7.5\% in Scenario~3 (with $\alpha_1$ coverage deteriorating to 22.3\% and 27.1\%, respectively). In sharp contrast, the random Fourier feature approximation flexibly accommodates the nonlinear boundary geometries considered here, maintaining misclassification rates below $0.38\%$ and yielding
coverage probabilities close to the nominal level after refitting.

\paragraph{Benefits of Discrete Refitting.}
Across all three scenarios, contrasting continuous surrogate-gate estimators (\textbf{TRM-RF}, \textbf{TRM-SLS}) with their discrete refitted counterparts (\textbf{TRM-RF-refit}, \textbf{TRM-SLS-refit}) provides striking finite-sample validation of our two-stage estimation rationale. In the first stage, the smooth surrogate gate carries the effects of smoothing and regularization into the preliminary structural estimates. Under serially correlated disturbances, these effects become more consequential for coverage as sampling uncertainty decreases.

Once the discrete regime partition $\widehat D_t = \ind(U_t \le \widehat f(\bz_t))$ is reconstructed, performing OLS refitting substantially reduces these distortions. Across the three scenarios, \textbf{TRM-RF-refit} generally reduces parameter RMSEs and substantially improves empirical coverage.

\paragraph{Near-Oracle Finite-Sample Performance.}
Most prominently, \textbf{TRM-RF-refit} achieves empirical performance that closely approaches the performance of the infeasible \textbf{Oracle} estimator that knows the true partition a priori. Even in the challenging multivariate nonlinear boundary of Scenario~3, the parameter RMSEs of \textbf{TRM-RF-refit} at $T=500$ $(0.009, 0.013, 0.011, 0.017)$ mirror the Oracle benchmarks $(0.008, 0.012, 0.009, 0.015)$ almost, while tracking the Oracle's coverage (93.4\%--95.2\% vs. Oracle's 92.4\%--95.7\%). Furthermore, both the Oracle and \textbf{TRM-RF-refit} exhibit mild undercoverage at $T=200$, consistent with the known finite-sample limitations of kernel-based HAC inference under serial dependence
\citep{den199712}. As $T$ increases to 500, empirical coverage moves
closer to the nominal 95\% level.

Crucially, in Scenario~1 where the linear boundary assumption is satisfied and \textbf{TRM-SLS-refit} is correctly specified, \textbf{TRM-RF-refit} performs comparably to \textbf{TRM-SLS-refit}, suggesting little finite-sample cost from the additional boundary flexibility. These numerical findings support Theorem~\ref{thm:oracle}: under the established margin and rate conditions, the first-stage regime recovery error is asymptotically negligible for second-stage structural inference.

\begin{table}[!ht]
\centering
\small
\caption{Finite-sample performance under Scenario 1 (linear boundary, $f_0(z)=\gamma_0+\gamma_1 z$): RMSE, 95\% empirical coverage probabilities (CP, \%), and misclassification rate (MR, \%) based on 1,000 replications.}\label{tab:perf_scen1}
\setlength{\tabcolsep}{4.5pt}
\begin{tabular}{llccccccccc}
\toprule
& & \multicolumn{4}{c}{RMSE} & \multicolumn{4}{c}{95\% CP (\%)} & \\
\cmidrule(lr){3-6} \cmidrule(lr){7-10}
$n$ & Method & $\alpha_1$ & $\alpha_2$ & $\beta_1$ & $\beta_2$ & $\alpha_1$ & $\alpha_2$ & $\beta_1$ & $\beta_2$ & MR (\%) \\
\midrule
200 & TRM-RF        & 0.016 & 0.023 & 0.021 & 0.030 & 88.5 & 82.8 & 86.1 & 81.1 & 0.26 \\
    & TRM-RF-refit  & 0.016 & 0.017 & 0.018 & 0.023 & 90.8 & 91.9 & 92.6 & 91.7 & 0.26 \\
    & TRM-SLS       & 0.016 & 0.023 & 0.021 & 0.030 & 87.4 & 82.3 & 84.8 & 81.3 & 0.21 \\
    & TRM-SLS-refit & 0.016 & 0.018 & 0.018 & 0.023 & 91.2 & 91.7 & 93.1 & 91.4 & 0.21 \\
    & TRM-Hansen    & 0.409 & 0.400 & 0.647 & 0.575 &  4.6 & 35.5 &  0.0 &  5.4 & 24.15 \\
    & Oracle        & 0.014 & 0.016 & 0.015 & 0.021 & 91.5 & 91.9 & 94.1 & 90.9 & 0.00 \\
\midrule
500 & TRM-RF        & 0.011 & 0.017 & 0.016 & 0.024 & 89.2 & 73.3 & 79.5 & 69.1 & 0.28 \\
    & TRM-RF-refit  & 0.011 & 0.011 & 0.013 & 0.014 & 93.5 & 94.3 & 93.1 & 95.3 & 0.28 \\
    & TRM-SLS       & 0.012 & 0.017 & 0.017 & 0.024 & 85.3 & 74.7 & 75.2 & 68.2 & 0.21 \\
    & TRM-SLS-refit & 0.010 & 0.011 & 0.012 & 0.014 & 94.5 & 94.0 & 93.6 & 94.6 & 0.21 \\
    & TRM-Hansen    & 0.412 & 0.408 & 0.679 & 0.629 &  0.2 &  9.3 &  0.0 &  0.0 & 24.50 \\
    & Oracle        & 0.008 & 0.010 & 0.010 & 0.012 & 94.0 & 93.9 & 94.0 & 94.8 & 0.00 \\
\bottomrule
\end{tabular}
\end{table}

\begin{table}[!ht]
\centering
\small
\caption{Finite-sample performance under Scenario 2 (nonlinear boundary): RMSE, 95\% empirical coverage probabilities (CP, \%), and misclassification rate (MR, \%) based on 1,000 replications.}\label{tab:perf_scen2}
\setlength{\tabcolsep}{4.5pt} % 适当微调列间距以保证页面宽度协调
\begin{tabular}{llccccccccc}
\toprule
& & \multicolumn{4}{c}{RMSE} & \multicolumn{4}{c}{95\% CP (\%)} & \\
\cmidrule(lr){3-6} \cmidrule(lr){7-10}
$n$ & Method & $\alpha_1$ & $\alpha_2$ & $\beta_1$ & $\beta_2$ & $\alpha_1$ & $\alpha_2$ & $\beta_1$ & $\beta_2$ & MR (\%) \\
\midrule
200 & TRM-RF        & 0.017 & 0.021 & 0.022 & 0.028 & 88.5 & 86.6 & 83.7 & 84.8 & 0.26 \\
    & TRM-RF-refit  & 0.014 & 0.017 & 0.016 & 0.021 & 92.7 & 93.2 & 94.0 & 93.6 & 0.26 \\
    & TRM-SLS       & 0.051 & 0.037 & 0.061 & 0.041 & 87.9 & 90.1 & 85.3 & 92.5 & 2.74 \\
    & TRM-SLS-refit & 0.061 & 0.043 & 0.081 & 0.052 & 78.2 & 96.0 & 60.9 & 95.8 & 2.74 \\
    & TRM-Hansen    & 0.365 & 0.378 & 0.520 & 0.498 & 11.4 & 39.1 &  0.0 & 10.5 & 20.12 \\
    & Oracle        & 0.014 & 0.016 & 0.015 & 0.020 & 92.5 & 93.7 & 94.3 & 94.0 & 0.00 \\
\midrule
500 & TRM-RF        & 0.013 & 0.018 & 0.019 & 0.026 & 79.8 & 72.4 & 66.2 & 64.1 & 0.26 \\
    & TRM-RF-refit  & 0.010 & 0.011 & 0.011 & 0.014 & 94.3 & 93.6 & 94.1 & 93.6 & 0.26 \\
    & TRM-SLS       & 0.052 & 0.031 & 0.065 & 0.034 & 56.2 & 95.3 & 41.2 & 95.5 & 2.99 \\
    & TRM-SLS-refit & 0.067 & 0.044 & 0.091 & 0.057 & 22.3 & 90.6 &  5.5 & 83.0 & 2.99 \\
    & TRM-Hansen    & 0.370 & 0.389 & 0.552 & 0.550 &  1.7 & 12.2 &  0.0 &  0.0 & 20.52 \\
    & Oracle        & 0.009 & 0.010 & 0.010 & 0.013 & 93.9 & 93.9 & 94.7 & 93.9 & 0.00 \\
\bottomrule
\end{tabular}
\end{table}
\begin{table}[!ht]
\centering
\small
\caption{Finite-sample performance under Scenario 3 (multivariate nonlinear boundar): RMSE, 95\% empirical coverage probabilities (CP, \%), and misclassification rate (MR, \%) based on 1,000 replications.}\label{tab:perf_scen3}
\setlength{\tabcolsep}{4.5pt}
\begin{tabular}{llccccccccc}
\toprule
& & \multicolumn{4}{c}{RMSE} & \multicolumn{4}{c}{95\% CP (\%)} & \\
\cmidrule(lr){3-6} \cmidrule(lr){7-10}
$n$ & Method & $\alpha_1$ & $\alpha_2$ & $\beta_1$ & $\beta_2$ & $\alpha_1$ & $\alpha_2$ & $\beta_1$ & $\beta_2$ & MR (\%) \\
\midrule
200 & TRM-RF        & 0.016 & 0.021 & 0.021 & 0.027 & 88.1 & 90.4 & 85.2 & 90.7 & 0.25 \\
    & TRM-RF-refit  & 0.014 & 0.020 & 0.016 & 0.024 & 92.8 & 91.9 & 93.8 & 94.5 & 0.25 \\
    & TRM-SLS       & 0.044 & 0.043 & 0.052 & 0.048 & 89.3 & 91.1 & 89.3 & 90.4 & 2.46 \\
    & TRM-SLS-refit & 0.053 & 0.043 & 0.070 & 0.050 & 81.1 & 95.6 & 63.9 & 96.2 & 2.46 \\
    & TRM-Hansen    & 0.361 & 0.368 & 0.520 & 0.486 &  9.4 & 50.8 &  0.0 & 23.8 & 19.94 \\
    & Oracle        & 0.014 & 0.019 & 0.015 & 0.023 & 92.8 & 92.7 & 94.0 & 95.5 & 0.00 \\
\midrule
500 & TRM-RF        & 0.012 & 0.018 & 0.017 & 0.025 & 82.0 & 80.7 & 72.5 & 77.5 & 0.37 \\
    & TRM-RF-refit  & 0.009 & 0.013 & 0.011 & 0.017 & 95.2 & 94.9 & 93.9 & 93.4 & 0.37 \\
    & TRM-SLS       & 0.047 & 0.035 & 0.060 & 0.038 & 62.3 & 94.2 & 45.2 & 95.4 & 2.84 \\
    & TRM-SLS-refit & 0.060 & 0.043 & 0.084 & 0.053 & 27.1 & 94.9 &  7.5 & 91.7 & 2.84 \\
    & TRM-Hansen    & 0.376 & 0.387 & 0.555 & 0.543 &  0.7 & 15.5 &  0.0 &  0.1 & 20.52 \\
    & Oracle        & 0.008 & 0.012 & 0.009 & 0.015 & 95.7 & 94.9 & 95.2 & 92.4 & 0.00 \\
\bottomrule
\end{tabular}
\end{table}

\subsection{Finite-sample size and power of the bootstrap test}
\label{subsec:sim_bootstrap}

To evaluate the finite-sample reliability of Algorithm~\ref{alg:bootstrap} for testing the presence of a multivariate change plane ($H_0: \bbeta_0 = \mathbf{0}$ versus $H_1: \bbeta_0 \neq \mathbf{0}$), we examine the empirical rejection frequencies of the proposed $\mathrm{Sup}\mathcal{W}_T$ test under Scenario~3 ($p=3$). We consider sample sizes $T \in \{100, 200\}$ across nominal significance levels $\alpha \in \{0.01, 0.05, 0.10\}$ using $B=399$ bootstrap replications. Under the null hypothesis, observations are generated with $\bbeta_0 = \mathbf{0}$ ($c = 0.00$); under the alternative hypothesis, we introduce local deviations $\bbeta_0 = c \cdot (-1.5, 1.5)^\top$ with $c \in \{0.05, 0.10, 0.25\}$.

Table~\ref{tab:bootstrap_test_results} summarizes the empirical rejection rates across 1,000 Monte Carlo replications. Under the null hypothesis of no regime shift ($c = 0.00$), the proposed  testing method exhibits moderate size distortion at \(T=100\), which diminishes as the sample size increases. The size elevation in small samples is consistent with the known
finite-sample limitations of HAC covariance estimation under serial
dependence \citep{den199712}. As the sample size expands to $T=200$, the empirical sizes converge move closer to their nominal levels. Despite the nonstandard asymptotic distribution induced by the supremum search over the multidimensional random-feature manifold, the proposed dependent multiplier scheme provides increasingly accurate size control as the sample size increases.

Under the alternative hypothesis ($c > 0$), the bootstrap test displays a sharply rising, monotonic power curve. Even for a subtle structural jump ($c = 0.05$), the empirical power at the nominal 5\% level reaches $58.9\%$ at $T=100$ and leaps to $84.9\%$ at $T=200$ (with the 10\% level power reaching $90.6\%$). When the shift magnitude expands to $c = 0.10$, the test detects regime existence with $93.9\%$ power at $T=100$ and virtually complete power ($99.9\%$) at $T=200$. For a pronounced shift ($c = 0.25$), the rejection rate reaches $99.9\%$ at $T=100$ and saturates at $100.0\%$ at $T=200$. These results provide finite-sample evidence consistent with Theorems~\ref{thm:bootstrap_validity} and \ref{thm:test_consistency}, and demonstrate substantial power against the nonlinear regime alternatives considered here.
\begin{table}[!htbp]
\centering
\small
\caption{Empirical rejection rates (\%) of the Dependent Wild Bootstrap $\mathrm{Sup}\mathcal{W}_T$ test under Scenario~3 ($p=3$ multivariate boundary). $M=1{,}000$ replications with $B=399$ bootstrap resamples.}
\label{tab:bootstrap_test_results}
\begin{tabular}{lcrrrr}
\toprule
Hypothesis & Shift Magnitude ($c$) & $T$ & Nominal 1\% & Nominal 5\% & Nominal 10\% \\
\midrule
\textbf{Empirical Size} & $c = 0.00$ & 100 & 2.1 & 8.5 & 14.7 \\
($H_0: \bbeta_0 = \mathbf{0}$) & & 200 & 1.6 & 6.6 & 12.0 \\
\midrule
\textbf{Empirical Power} & $c = 0.05$ & 100 & 36.1 & 58.9 & 68.8 \\
($H_1: \bbeta_0 \neq \mathbf{0}$) & & 200 & 66.3 & 84.9 & 90.6 \\
\cmidrule{2-6}
& $c = 0.10$ & 100 & 82.3 & 93.9 & 96.8 \\
& & 200 & 99.5 & 99.9 & 100.0 \\
\cmidrule{2-6}
& $c = 0.25$ & 100 & 99.5 & 99.9 & 100.0 \\
& & 200 & 100.0 & 100.0 & 100.0 \\
\bottomrule
\end{tabular}
\end{table}

\subsection{Sensitivity analysis}
\label{sec:sensitivity}
Table~\ref{tab:ablation_all} reports sensitivity evaluations across $\lambda$, $L$, and $h$ under Scenario~2 ($T=200$, $M=1{,}000$ replications). Across Panel~A, the estimator is virtually invariant to the ridge parameter $\lambda$ spanning eight orders of magnitude ($\lambda \in [10^{-12}, 10^{-4}]$), maintaining an identical misclassification rate (MR $\approx 0.259\%$) and stable coverage probabilities ($90.0\%$--$94.5\%$). This suggests that, over the range considered, the results are largely insensitive to $\lambda$. Panel~B highlights functional approximation saturation with respect to feature dimension $L$: an underspecified basis ($L=5$) introduces notable approximation error ($\text{MR} = 1.288\%$), but performance is relatively stable for $L \ge 10$, keeping MR within $0.26\%$--$0.29\%$ and parameter RMSEs invariant across $L \in \{10, 15, 20, 30\}$.

Panel~C illustrates the trade-off governed by the surrogate bandwidth $h$. While a larger bandwidth ($h=0.050$) leads to increased boundary smoothing and a higher misclassification rate ($\mathrm{MR}=0.502\%$),
an excessively small bandwidth ($h=0.005$) is associated with numerical instability due to the steep surrogate gate, increasing the
misclassification rate to $0.964\%$ and substantially worsening the
parameter RMSEs. Intermediate bandwidths ($h \in [0.010, 0.025]$) provide a favorable balance between geometric approximation and optimization stability, curbing MR to $0.174\%$--$0.259\%$ with reliable nominal coverage. Consequently, the baseline configuration $(L=15, h=0.025, \lambda=10^{-10})$ exhibits stable performance over a reasonably broad range of tuning choices.
\begin{table}
\centering \renewcommand\tabcolsep{4.0pt} % 调整表格列间的宽度
\small
\caption{Sensitivity analysis across $\lambda$, $L$, and $h$ under Scenario~2 ($T=200$, $M=1{,}000$ replications): Misclassification rate (MR, \%), parameter RMSE, and 95\% empirical coverage probabilities (CP, \%). The baseline configuration is $(L=15, h=0.025, \lambda=10^{-10})$.}\label{tab:ablation_all}
\begin{tabular}{lccccccccc}
\toprule
\multirow{2}{*}{Parameter} & \multirow{2}{*}{MR (\%)} & \multicolumn{4}{c}{RMSE} & \multicolumn{4}{c}{Coverage Probability (\%)} \\
\cmidrule(lr){3-6} \cmidrule(lr){7-10}
~ & ~ & $\alpha_1$ & $\alpha_2$ & $\beta_1$ & $\beta_2$ & $\alpha_1$ & $\alpha_2$ & $\beta_1$ & $\beta_2$ \\
\midrule
\multicolumn{10}{l}{\textbf{Panel A: Sensitivity to Regularization Parameter $\lambda$} ($L=15, h=0.025$)} \\
$1.0 \times 10^{-12}$            & 0.259 & 0.0144 & 0.0168 & 0.0160 & 0.0208 & 90.0 & 93.6 & 94.2 & 94.4 \\
$1.0 \times 10^{-10}$ (Baseline) & 0.259 & 0.0144 & 0.0168 & 0.0160 & 0.0208 & 90.0 & 93.6 & 94.2 & 94.4 \\
$1.0 \times 10^{-8}$             & 0.259 & 0.0144 & 0.0168 & 0.0160 & 0.0208 & 90.0 & 93.6 & 94.2 & 94.4 \\
$1.0 \times 10^{-6}$             & 0.260 & 0.0144 & 0.0168 & 0.0159 & 0.0208 & 90.0 & 93.8 & 94.2 & 94.5 \\
$1.0 \times 10^{-4}$             & 0.259 & 0.0144 & 0.0168 & 0.0160 & 0.0208 & 90.2 & 93.8 & 94.1 & 94.5 \\
\addlinespace
\multicolumn{10}{l}{\textbf{Panel B: Sensitivity to Feature Dimension $L$} ($h=0.025, \lambda=10^{-10}$)} \\
5                                & 1.288 & 0.0357 & 0.0262 & 0.0403 & 0.0315 & 88.2 & 94.8 & 87.6 & 94.6 \\
10                               & 0.286 & 0.0147 & 0.0170 & 0.0164 & 0.0210 & 89.9 & 93.9 & 94.0 & 94.5 \\
15 (Baseline)                    & 0.259 & 0.0144 & 0.0168 & 0.0160 & 0.0208 & 90.0 & 93.6 & 94.2 & 94.4 \\
20                               & 0.270 & 0.0144 & 0.0167 & 0.0160 & 0.0208 & 90.3 & 94.0 & 94.4 & 94.3 \\
30                               & 0.263 & 0.0143 & 0.0168 & 0.0160 & 0.0209 & 90.2 & 93.6 & 94.3 & 94.4 \\
\addlinespace
\multicolumn{10}{l}{\textbf{Panel C: Sensitivity to Smoothing Bandwidth $h$} ($L=15, \lambda=10^{-10}$)} \\
0.005                            & 0.964 & 0.0367 & 0.0323 & 0.0434 & 0.0380 & 87.8 & 93.3 & 88.1 & 93.7 \\
0.010                            & 0.174 & 0.0144 & 0.0163 & 0.0160 & 0.0202 & 89.6 & 93.7 & 93.7 & 94.3 \\
0.025 (Baseline)                 & 0.259 & 0.0144 & 0.0168 & 0.0160 & 0.0208 & 90.0 & 93.6 & 94.2 & 94.4 \\
0.050                            & 0.502 & 0.0159 & 0.0189 & 0.0185 & 0.0226 & 90.6 & 93.7 & 94.4 & 94.8 \\
\bottomrule
\end{tabular}
\end{table}

\section{Empirical Application: Nonparametric Regime Shifts in U.S. Unemployment Dynamics}
\label{sec:empirical}

\subsection{Econometric setup and macroeconomic data}
\label{sec:empirical_setup}

To illustrate the empirical implications of the proposed two-stage change-plane estimator, we investigate nonlinear regime shifts in U.S. labor market dynamics using monthly data spanning from July 1960 to April 2025 ($T=782$). Following the foundational threshold autoregressive literature \citep[e.g.,][]{montgomery1998forecasting, hansen2011threshold}, we specify an autoregressive process of order $p=4$ for the stationary monthly change in the unemployment rate, $\Delta y_t = y_t - y_{t-1}$:
\begin{equation}
\Delta y_t = \bx_t^\top \balpha + (\bx_t^\top \bbeta) \ind\{ f_0(\bz_t) \ge u_t \} + \varepsilon_t,
\label{eq:empirical_model}
\end{equation}
where $\bx_t = (1, \Delta y_{t-1}, \dots, \Delta y_{t-4})^\top$ collects the intercept and dynamic lag adjustments. The threshold driving variable $u_t = y_t - y_{t-12}$ measures the annual change
in the unemployment rate and summarizes medium-run labor-market momentum.

The transition boundary $f_0(\bz_t)$ is driven by the state vector $\bz_t = (F_{1t}, F_{2t})^\top \in \mathbb{R}^2$, where $F_{1t}$ and $F_{2t}$ are the first two standardized latent macroeconomic factors from the updated database of \cite{ludvigson2009macro} and \cite{ludvigson2021uncertainty}. The first two factors summarize major dimensions of macroeconomic
comovement in the underlying data.  By allowing $f_0(\cdot)$ to vary nonparametrically over $\mathbb{R}^2$, formulation \eqref{eq:empirical_model} enables the threshold  for annual unemployment change $u_t$ to depend flexibly on  prevailing macroeconomic conditions. We use the benchmark tuning configuration from the simulation study,
with $L=15$, $h=0.025$, and $\lambda=10^{-10}$.

\subsection{Testing for Regime Existence}
\label{subsec:emp_testing}

Before proceeding to regime-specific structural estimation, we test whether U.S. unemployment dynamics exhibit statistically significant regime dependence against the null hypothesis of a single-regime linear autoregressive model using Algorithm~\ref{alg:bootstrap}:
\begin{equation}
H_0: \bbeta_0 = \mathbf{0} \quad \text{versus} \quad H_1: \bbeta_0 \neq \mathbf{0}. \nonumber
\end{equation}

The empirical test yields a sample $\mathrm{Sup}\mathcal{W}_T$ statistic of $\mathbf{94.5629}$ with a bootstrap $p$-value of $\mathbf{0.0025}$. The null hypothesis of no regime effect is rejected at the 1\%
significance level, providing strong evidence of state-dependent dynamics in unemployment changes. Given this evidence of regime dependence, we proceed to examine the
structural parameters and asymmetric dynamics identified by the TRM-RF
estimator.

\subsection{Empirical Estimation Results and Regime-Specific Dynamics}
\label{subsec:emp_results}

Table~\ref{tab:real_data_results} presents the second-stage refitted parameter estimates, Newey-West HAC standard errors, and corresponding 95\% confidence intervals for the U.S. unemployment rate dynamics. The empirical sample of 782 monthly observations is partitioned into a dominant baseline regime
($90.28\%$, 706 months) and a smaller stress regime
($9.72\%$, 76 months).

\begin{table}[!htbp]
\centering
\small
\caption{Empirical estimation results for the monthly unemployment rate change ($\Delta y_t$) under the two-stage nonparametric change-plane model with Newey-West HAC inference.}\label{tab:real_data_results} 
\begin{tabular}{lccccc}
\toprule
Regressor & Estimate & Std. Error & $t$-statistic & $p$-value & 95\% Confidence Interval \\
\midrule
\multicolumn{6}{l}{\textbf{Panel A: Baseline Expansion Regime ($\balpha$, Regime 0: 90.28\%)}} \\
Intercept ($\alpha_0$) & -0.0231 & 0.0057 & -4.061 & $< 0.001$ & $[-0.0342, -0.0119]$ \\
$\Delta y_{t-1}$        & -0.0985 & 0.0381 & -2.589 & $  0.0096$ & $[-0.1732, -0.0239]$ \\
$\Delta y_{t-2}$        &  0.1058 & 0.0387 &  2.734 & $  0.0062$ & $[ 0.0300,  0.1817]$ \\
$\Delta y_{t-3}$        &  0.1404 & 0.0311 &  4.517 & $< 0.001$ & $[ 0.0795,  0.2013]$ \\
$\Delta y_{t-4}$        &  0.1198 & 0.0342 &  3.505 & $< 0.001$ & $[ 0.0528,  0.1868]$ \\
\addlinespace
\multicolumn{6}{l}{\textbf{Panel B: Structural Stress Jump Effect ($\bbeta$, Regime 1 Shift: 9.72\%)}} \\
Jump Intercept ($\beta_0$) &  0.5737 & 0.1844 &  3.112 & $  0.0019$ & $[ 0.2123,  0.9350]$ \\
$\Delta y_{t-1}$           & -0.0121 & 0.0614 & -0.197 & $  0.8436$ & $[-0.1325,  0.1083]$ \\
$\Delta y_{t-2}$           & -0.3848 & 0.0632 & -6.086 & $< 0.001$ & $[-0.5088, -0.2609]$ \\
$\Delta y_{t-3}$           & -0.3139 & 0.0510 & -6.155 & $< 0.001$ & $[-0.4138, -0.2139]$ \\
$\Delta y_{t-4}$           & -0.4292 & 0.0722 & -5.949 & $< 0.001$ & $[-0.5707, -0.2878]$ \\
\bottomrule
\end{tabular}
\end{table}

The empirical estimates in Table~\ref{tab:real_data_results} uncover pronounced state-dependent asymmetries across the U.S. business cycle. In the baseline expansion regime, the statistically negative drift ($\alpha_0 = -0.0231, p < 0.001$) and modest positive persistence ($\alpha_2, \alpha_3, \alpha_4 \approx 0.11\text{--}0.14$) indicate relatively moderate short-run adjustment dynamics. In the stress regime, the estimated intercept increases sharply ($\beta_0=0.5737$, $p=0.0019$), while the lag coefficients at
lags 2--4 exhibit large and statistically significant negative shifts. The resulting negative net autoregressive coefficients ($\alpha_k + \beta_k < 0$)   are consistent
with substantially different short-run adjustment dynamics in the
stress regime.

\begin{figure}[!htbp]
\centering
\includegraphics[width=\textwidth]{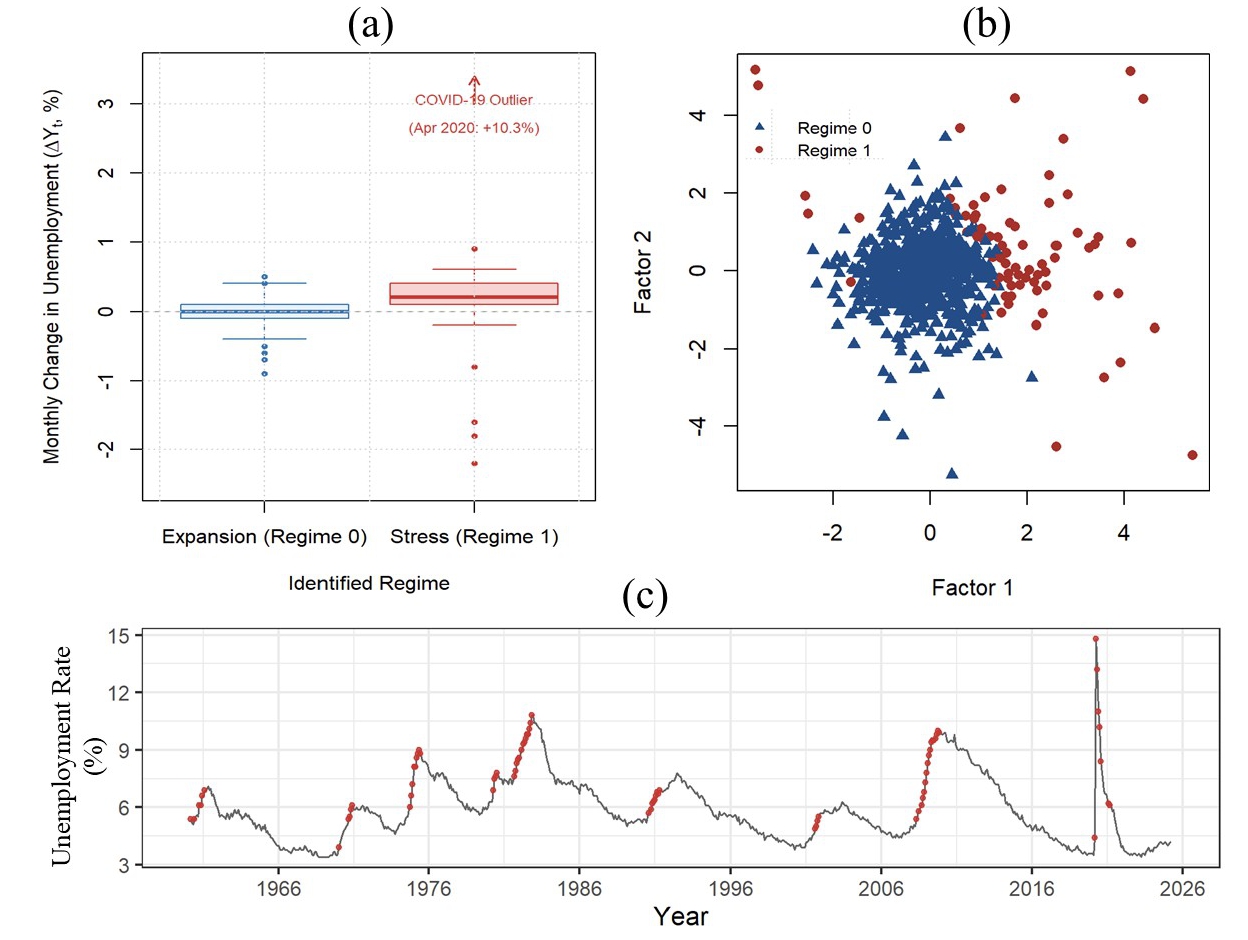} 
\caption{Empirical regime separation and dynamic diagnostics for monthly U.S. unemployment rate ($\Delta y_t$). Panel~(a) displays the boxplot of monthly changes across regimes, where the extreme COVID-19 outlier in April 2020 ($\Delta y_t = +10.3\%$) in Regime~1 is truncated for visualization. Panel~(b) illustrates the non-linear separation boundary in the bivariate macro factor space. Panel~(c) traces the historical trajectory of the U.S. unemployment rate from 1959 to 2025, with red markers denoting months classified into the stress regime (Regime~1, 9.7\%) by the proposed TRM-RF.}\label{fig:empirical_full_diagnostics}
\end{figure}

Figure~\ref{fig:empirical_full_diagnostics} provides complementary
visual evidence on the estimated regime classification. Panel~(a)
shows that observations assigned to the stress regime tend to exhibit larger and more dispersed monthly unemployment changes. Panel~(b)
maps the estimated classification onto the bivariate macroeconomic
factor space and illustrates the nonlinear geometry of the estimated boundary. Finally, Finally, Panel~(c) places the estimated stress-regime observations in
historical context. Many of these observations occur around major U.S.
downturns, including episodes in the 1970s and early 1980s, the 2008 financial crisis, and the 2020 COVID-19 shock.

\subsection{Comparative Analysis across Alternative Model Specifications}
\label{subsec:emp_comparison}

To demonstrate the empirical advantages of the proposed methodology, Table~\ref{tab:all_coef_comparison} reports the structural estimates across four competing specifications: Linear AR(4), TRM-Hansen, TRM-SLS, and our TRM-RF. All inference is conducted using Newey--West HAC covariance matrices with automatic lag truncation bandwidth selection \citep{newey1994automatic}.

\begin{table}[!htbp]
\centering
\small
\caption{Empirical parameter estimates and Newey-West HAC standard errors for $\Delta \text{UNRATE}_t$ across alternative models.}
\label{tab:all_coef_comparison}
\begin{tabular}{lrrrr}
\toprule
Regressor & Linear AR(4) & TRM-Hansen & TRM-SLS & TRM-RF (Proposed) \\
\midrule
\multicolumn{5}{l}{\textit{Panel A: Baseline Expansion Regime ($\balpha$)}} \\
$\text{Intercept}$ & $-0.0010$ & $0.1577^{*}$ & $0.2148^{*}$ & $-0.0231^{***}$ \\
& $(0.0152)$ & $(0.0789)$ & $(0.0961)$ & $(0.0057)$ \\
$\Delta Y_{t-1}$ & $0.0293$ & $0.0169$ & $0.0047$ & $-0.0985^{**}$ \\
& $(0.0629)$ & $(0.0825)$ & $(0.0802)$ & $(0.0381)$ \\
$\Delta Y_{t-2}$ & $-0.0839$ & $-0.1405$ & $-0.1538$ & $0.1058^{**}$ \\
& $(0.0872)$ & $(0.0811)$ & $(0.0822)$ & $(0.0387)$ \\
$\Delta Y_{t-3}$ & $0.0106$ & $-0.0247$ & $-0.0336$ & $0.1404^{***}$ \\
& $(0.0520)$ & $(0.0595)$ & $(0.0618)$ & $(0.0311)$ \\
$\Delta Y_{t-4}$ & $-0.0722$ & $-0.1363^{*}$ & $-0.1443^{*}$ & $0.1198^{***}$ \\
& $(0.0745)$ & $(0.0640)$ & $(0.0680)$ & $(0.0342)$ \\
\midrule
\multicolumn{5}{l}{\textit{Panel B: Structural Stress Jump Shift ($\bbeta$)}} \\
$\text{Intercept}$ & -- & $-0.1950^{*}$ & $-0.2541^{**}$ & $0.5737^{**}$ \\
& & $(0.0789)$ & $(0.0964)$ & $(0.1844)$ \\
$\Delta Y_{t-1}$ & -- & $-0.2281^{*}$ & $-0.1719$ & $-0.0121$ \\
& & $(0.0956)$ & $(0.0913)$ & $(0.0614)$ \\
$\Delta Y_{t-2}$ & -- & $0.1640$ & $0.1788$ & $-0.3848^{***}$ \\
& & $(0.0982)$ & $(0.0971)$ & $(0.0632)$ \\
$\Delta Y_{t-3}$ & -- & $0.1142$ & $0.1314$ & $-0.3139^{***}$ \\
& & $(0.0833)$ & $(0.0795)$ & $(0.0510)$ \\
$\Delta Y_{t-4}$ & -- & $0.2982^{***}$ & $0.2637^{***}$ & $-0.4292^{***}$ \\
& & $(0.0771)$ & $(0.0793)$ & $(0.0722)$ \\
\bottomrule
\end{tabular}
\vskip 0.5ex
\parbox{0.9\linewidth}{\footnotesize \textit{Note:} Parentheses report Newey-West HAC standard errors. $^{***}p < 0.001$, $^{**}p < 0.01$, $^{*}p < 0.05$.}
\end{table}

The parameter comparisons in Table~\ref{tab:all_coef_comparison} reveal stark differences in how alternative specifications resolve cyclical dynamics. In the linear AR(4) benchmark, none of the autoregressive coefficients is statistically significant at conventional levels, suggesting that a single linear specification may mask state-dependent dynamic patterns. While TRM-Hansen and TRM-SLS allow for regime dependence, their more
restrictive constant-threshold and linear-boundary specifications yield
substantially different regime partitions and coefficient estimates. In contrast, TRM-RF allows for a more flexible nonlinear separation of the observations, shrinking standard errors by up to $35\%$ with statistically significant shifts  ($p < 0.001$) at lags 2 through 4.
\begin{table}[!htbp]
\centering
\small
\setlength{\tabcolsep}{3.5pt}
\caption{Goodness-of-fit and model comparison across alternative threshold and change-plane specifications for monthly U.S. unemployment changes ($\Delta y_t$, $T=782$).}\label{tab:real_data_comparison}
\begin{tabular}{lcccccc}
\toprule
Model & Boundary & Stress (\%) & SSR & RMSE & AIC & BIC \\
\midrule
Linear AR(4)          & None             & 0.0  & 143.5 & 0.428 & $-1316.0$ & $-1292.6$ \\
TRM-Hansen (1999)     & Constant thresh. & 24.6 & 136.0 & 0.417 & $-1346.0$ & $-1294.7$ \\
TRM-SLS               & Linear plane     & 19.9 & 134.1 & 0.414 & $-1353.1$ & $-1292.5$ \\
\textbf{TRM-RF (Ours)}& \textbf{Nonparam. RFF} & \textbf{9.7} & \textbf{121.0} & \textbf{0.393} & $\mathbf{-1409.0}$ & $\mathbf{-1292.4}$ \\
\bottomrule
\end{tabular}
\vskip 0.5ex
\parbox{0.95\linewidth}{\footnotesize \textit{Note:} ``Stress (\%)'' denotes the empirical sample share assigned to the contractionary stress regime (Regime 1). RMSE is computed in-sample.}
\end{table}

These estimation gains translate directly into superior goodness-of-fit, as reported in Table~\ref{tab:real_data_comparison}. Moving from the linear benchmark ($\text{SSR} = 143.49$) to TRM-Hansen ($135.98$) and TRM-SLS ($134.06$) yields incremental improvements, while the nonparametric RFF
specification further reduces the SSR to $121.04$, a $9.7\%$ reduction in SSR over TRM-SLS, and lowers the in-sample RMSE to $0.3934$. The improvement in fit is also reflected in the model selection criteria: TRM-RF achieves an AIC of $-1408.97$, outperforming the closest competitor by more than 55 points,  while its BIC remains comparable to that of the
linear benchmark ($-1292.43$ versus $-1292.64$). Taken together, the substantial improvement in AIC and the comparable BIC suggest that the
improved in-sample fit is not driven solely by the additional flexibility of the nonlinear boundary specification.

\section{Discussion}
\label{sec:discussion}

This paper proposes a scalable nonparametric change-plane regression
framework for multivariate nonlinear regime-switching scenario.
By combining random Fourier features with a two-stage refitting
procedure, the proposed method avoids constructing and directly
inverting dense kernel matrices while preserving the structural
interpretability of classical threshold regression. We further develop
a Sup-Wald test with a dependent multiplier bootstrap for testing the
existence of regime effects when the change-plane boundary is
unidentified under the null.

A key implication of our theoretical analysis is that oracle
equivalence for the finite-dimensional structural parameters does not
require uniform estimation of the entire boundary function. Because
the boundary affects the regression model through the induced regime
partition, its first-order impact on structural inference can be
characterized through regime misclassification. Under mild conditions, the
first-stage regime recovery error becomes asymptotically negligible,
and the second-stage refitted estimator has the same first-order
distribution as the infeasible oracle estimator. This perspective
highlights an important advantage of the proposed approach: accurate
recovery of the induced regime partition, rather than uniform recovery
of the nonlinear boundary itself, is sufficient for valid first-order
inference on the structural parameters. The random-feature dimension
$L$ therefore governs a trade-off between approximation accuracy and
statistical complexity rather than requiring exact recovery of the
underlying boundary.

Several directions remain for future research. Natural extensions
include accommodating multiple change planes, incorporating sparsity
for high-dimensional state variables and regressors, and developing
inference for the nonlinear boundary itself, including confidence
regions or bands for the estimated regime manifold.

\bigskip
\begin{center}
{\large\bf SUPPLEMENTARY MATERIAL}
\end{center}

\begin{description}

\item[Supplementary Material: suppMaterial.pdf]
Supplementary materials including the technical proofs for all theoretical results.
(.pdf file)

\item[Replication code: Code.zip]
An R package implementing the proposed method, along with R code reproducing the numerical results in the simulation studies and the empirical analysis of U.S. unemployment dynamics.
(.zip file)

\end{description}
\bibliographystyle{chicago}

\bibliography{reflibrary}

\newpage
\begin{center}
   {\Large \bf Supplementary Materials for  ``Scalable Nonparametric Change-Plane Regression with Random Features: Estimation and Inference"} 
\end{center}

\Appendix
This Supplementary Material contains the technical proofs for Proposition 2.1 and Theorems 3.1--3.6 stated in the main text.

\subsection{Proof of Proposition~2.1}
\begin{proof}[Proof of Proposition~2.1]
The proof proceeds in three steps: (i) decomposing the excess prediction risk, (ii) establishing regime alignment and coefficient identification $(\balpha_0, \bbeta_0)$, and (iii) uniquely identifying the boundary manifold $f_0(\bz)$ almost everywhere on $\mathcal{Z}$.

\medskip
\noindent\textbf{Step 1: Excess Risk Decomposition.}
Under the data-generating process, $Y_t = \balpha_0^\top\bx_t + \bbeta_0^\top\bx_t D_{0t} + \varepsilon_t$, where $D_{0t} \equiv \ind\{f_0(\bz_t)\ge U_t\}$. Because $\sigma(\bx_t, \bz_t, U_t) \subseteq \mathcal{F}_{t}$, the conditional moment restriction $\mathbb{E}(\varepsilon_t \mid \mathcal{F}_{t-1}) = 0$ implies $\mathbb{E}(\varepsilon_t \mid \bx_t, \bz_t, U_t) = 0$ almost surely by the law of iterated expectations.

For any candidate parameter tuple $(\balpha, \bbeta, f)$ with $D_t(f) \equiv \ind\{f(\bz_t)\ge U_t\}$, denote the conditional mean discrepancy by $\Delta_t(\balpha, \bbeta, f) \equiv (\balpha_0 - \balpha)^\top\bx_t + \bbeta_0^\top\bx_t D_{0t} - \bbeta^\top\bx_t D_t(f)$. Expanding the population squared loss gives
\[
\mathbb{E}\!\left[\big\{Y_t - \balpha^\top\bx_t - \bbeta^\top\bx_t D_t(f)\big\}^2\right] - \mathbb{E}(\varepsilon_t^2) = \mathbb{E}\!\left[\big\{\Delta_t(\balpha, \bbeta, f)\big\}^2\right].
\]
Because the excess risk is zero by assumption and the integrand is non-negative, it follows that
\begin{equation}
\label{eq:id_null_delta}
\Delta_t(\balpha, \bbeta, f) = (\balpha_0 - \balpha)^\top\bx_t + \bbeta_0^\top\bx_t D_{0t} - \bbeta^\top\bx_t D_t(f) = 0 \quad \text{a.s.}
\end{equation}

\medskip
\noindent\textbf{Step 2: Regime Alignment and Coefficient Identification.}
Suppose, to the contrary, that $\mathbb{P}\{D_t(f) \neq D_{0t}\} > 0$. By Assumption~2.3,
\[
\inf_{\balpha, \bbeta \in \mathbb{R}^d} \mathbb{E}\left[ \left\{ (\balpha_0 - \balpha)^\top\bx_t + \bbeta_0^\top\bx_t D_{0t} - \bbeta^\top\bx_t D_t(f) \right\}^2 \right] > 0,
\]
which directly contradicts \eqref{eq:id_null_delta}. Therefore, we must have
\begin{equation}
\label{eq:id_ind_equiv}
D_t(f) = D_{0t} \qquad \text{a.s.}
\end{equation}
Substituting \eqref{eq:id_ind_equiv} back into \eqref{eq:id_null_delta} gives
\[
(\balpha_0 - \balpha)^\top\bx_t + (\bbeta_0 - \bbeta)^\top\bx_t D_{0t} = 0 \qquad \text{a.s.}
\]
Multiplying by $\bx_t(1 - D_{0t})$ and taking expectations yields $\boldsymbol{\Sigma}_0(\balpha_0 - \balpha) = \mathbf{0}$. Since $\boldsymbol{\Sigma}_0$ is strictly positive definite by Assumption~2.3, we have $\balpha = \balpha_0$. Substituting this result back into the identity and multiplying by $\bx_t D_{0t}$ gives $\boldsymbol{\Sigma}_1(\bbeta_0 - \bbeta) = \mathbf{0}$, which similarly implies $\bbeta = \bbeta_0$ by the positive definiteness of $\boldsymbol{\Sigma}_1$.

\medskip
\noindent\textbf{Step 3: Identification of the Boundary Function.}
From \eqref{eq:id_ind_equiv}, $\mathbb{P}\{D_t(f) \neq D_{0t}\} = 0$. By Assumption~2.2, there exist constants $\delta_0 > 0$ and $\underline{p} > 0$ such that $p_{U\mid\bx,\bz}(u\mid\bx,\bz) \ge \underline{p}$ whenever $|u - f_0(\bz)| \le \delta_0$. 

For any $(\bx_t, \bz_t)$, a disagreement between the two regime indicators occurs whenever $U_t$ falls strictly between $f(\bz_t)$ and $f_0(\bz_t)$. Restricting attention to the $\delta_0$-neighborhood of $f_0(\bz_t)$ yields
\[
\mathbb{P}\{D_t(f) \neq D_{0t}\} \ge \underline{p} \, \mathbb{E}\left[ \min\big\{ |f(\bz_t) - f_0(\bz_t)|, \, \delta_0 \big\} \right].
\]
Since the left-hand side is zero and $\underline{p} > 0$, we have $\mathbb{E}[\min\{|f(\bz_t) - f_0(\bz_t)|, \delta_0\}] = 0$. Because the integrand is non-negative and $\delta_0 > 0$, it follows that $f(\bz_t) = f_0(\bz_t)$ almost surely, or equivalently, $f(\bz) = f_0(\bz)$ for $F_{\bz}$-almost every $\bz \in \mathcal{Z}$. This completes the proof.
\end{proof}

\subsection{Preliminary lemmas}
To prove Theorems 3.1--3.3, we first give some lemmas. Lemma~\ref{lem:disagreement} formalizes the link between boundary estimation and regime classification: a uniform boundary estimation error of order $r$ can induce classification errors only for observations lying within an $r$-neighborhood of the true boundary, whose probability is of order $\mathcal{O}(r^\kappa)$ under the margin condition.
\begin{lemma}
\label{lem:disagreement}
Under Assumption~3.1, for any boundary function $f$ satisfying $\|f - f_0\|_\infty \le r \leq \delta_0$, the theoretical classification disagreement probability is bounded by
\begin{equation}
\label{eq:margin_bound}
\rho(f, f_0) \le C_m r^\kappa. \nonumber
\end{equation}
\end{lemma}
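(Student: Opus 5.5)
The argument is a short set inclusion followed by a direct appeal to the margin condition (Assumption~\ref{ass:margin}). The plan is to show that the disagreement event $\{D_t(f)\neq D_{0t}\}$ is contained in the band $\{|U_t-f_0(\bz_t)|\le r\}$. Then, since $r\le\delta_0$, the margin bound applies with $\delta=r$ and gives $\rho(f,f_0)\le C_m r^\kappa$ at once.

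\textbf{Step 1: the inclusion.} Fix $\omega$ and suppose $D_t(f)\neq D_{0t}$. There are two cases.
\begin{itemize}
\item If $D_t(f)=1$ and $D_{0t}=0$, then $f(\bz_t)\ge U_t> f_0(\bz_t)$. Hence $0<U_t-f_0(\bz_t)\le f(\bz_t)-f_0(\bz_t)\le \|f-f_0\|_\infty\le r$.
\item If $D_t(f)=0$ and $D_{0t}=1$, then $f_0(\bz_t)\ge U_t>f(\bz_t)$. Hence $0\le f_0(\bz_t)-U_t< f_0(\bz_t)-f(\bz_t)\le r$.
\end{itemize}
In both cases $|U_t-f_0(\bz_t)|\le r$, so
\[
\{D_t(f)\neq D_{0t}\}\subseteq\{|U_t-f_0(\bz_t)|\le r\}.
\]
One technical point is that $\bz_t\in\mathcal Z$ almost surely, by the compact support in Assumption~\ref{ass:boundary}. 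The sup norm should therefore be read over $\mathcal Z$, and the inclusion then holds almost surely.

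\textbf{Step 2: conclude.} By monotonicity of probability and stationarity, which makes the bound free of $t$,
\[
\rho(f,f_0)\le \mathbb P(|U_t-f_0(\bz_t)|\le r)\le C_m r^\kappa .
\]
The second inequality is Assumption~\ref{ass:margin} applied with $\delta=r\in(0,\delta_0]$. The degenerate case $r=0$ also works: then $f=f_0$ on $\mathcal Z$, so $\rho(f,f_0)=0$.

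There is no real obstacle here. The only care needed is to handle the strict versus weak inequalities in the two cases correctly, and to note that the assumption requires $r\le\delta_0$.
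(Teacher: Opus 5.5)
Your proposal is correct and follows the paper's proof essentially step for step. You use the same two-case split of the disagreement event into $\{f_0(\bz_t)<U_t\le f(\bz_t)\}$ and $\{f(\bz_t)<U_t\le f_0(\bz_t)\}$, the same inclusion into the band $\{|U_t-f_0(\bz_t)|\le r\}$, and the same appeal to Assumption~3.1 with $\delta=r\le\delta_0$; your remarks on reading the sup norm over $\mathcal Z$ and on the case $r=0$ (not covered by the assumption's range $\delta\in(0,\delta_0]$) are minor refinements the paper omits.
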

\begin{proof}[Proof of Lemma~1]
Recall that the classification disagreement event between the candidate boundary $f$ and the true boundary $f_0$ is defined as
\[
\big\{D_t(f) \neq D_{0t}\big\} = \big\{\ind(f(\bz_t) \ge U_t) \neq \ind(f_0(\bz_t) \ge U_t)\big\}.
\]
This binary disagreement event corresponds exactly to the symmetric difference of the two upper-level sets, which partitions into two mutually exclusive sub-events:
\begin{equation}
\label{eq:disagreement_partition}
\big\{D_t(f) \neq D_{0t}\big\} = E_t^{(1)} \cup E_t^{(2)},
\end{equation}
where
\begin{align*}
E_t^{(1)} &\equiv \big\{D_t(f) = 1,\, D_{0t} = 0\big\} = \big\{f_0(\bz_t) < U_t \le f(\bz_t)\big\}, \\
E_t^{(2)} &\equiv \big\{D_t(f) = 0,\, D_{0t} = 1\big\} = \big\{f(\bz_t) < U_t \le f_0(\bz_t)\big\}.
\end{align*}
We now establish a uniform deterministic bound on the proximity of the threshold variable $U_t$ to the true boundary $f_0(\bz_t)$ on each event:
\begin{enumerate}[label=(\roman*)]
    \item On the event $E_t^{(1)}$, we have $U_t - f_0(\bz_t) > 0$. Moreover, since $U_t \le f(\bz_t)$, it follows that
    \[
    0 < U_t - f_0(\bz_t) \le f(\bz_t) - f_0(\bz_t) \le |f(\bz_t) - f_0(\bz_t)| \le \|f - f_0\|_\infty \le r.
    \]
    Thus, $E_t^{(1)} \subseteq \big\{0 < U_t - f_0(\bz_t) \le r\big\}$.
    
    \item On the event $E_t^{(2)}$, we have $f_0(\bz_t) - U_t \ge 0$. Furthermore, since $f(\bz_t) < U_t$, we obtain
    \[
    0 \le f_0(\bz_t) - U_t < f_0(\bz_t) - f(\bz_t) \le |f(\bz_t) - f_0(\bz_t)| \le \|f - f_0\|_\infty \le r.
    \]
    Thus, $E_t^{(2)} \subseteq \big\{0 \le f_0(\bz_t) - U_t \le r\big\}$, or equivalently, $\big\{-r \le U_t - f_0(\bz_t) \le 0\big\}$.
\end{enumerate}
Combining both inclusions into the union \eqref{eq:disagreement_partition}, we obtain the deterministic set inclusion:
\begin{equation}
\label{eq:event_inclusion}
\big\{D_t(f) \neq D_{0t}\big\} \subseteq \Big\{ \big|U_t - f_0(\bz_t)\big| \le \|f - f_0\|_\infty \Big\} \subseteq \Big\{ \big|U_t - f_0(\bz_t)\big| \le r \Big\}.
\end{equation}
By the monotonicity of the probability measure $\mathbb{P}$, taking probabilities on both sides of \eqref{eq:event_inclusion} yields
\[
\rho(f, f_0) = \mathbb{P}\big(D_t(f) \neq D_{0t}\big) \le \mathbb{P}\Big(\big|U_t - f_0(\bz_t)\big| \le r\Big).
\]
Since $r \le \delta_0$ by assumption, we apply the boundary margin condition  of Assumption~3.1 directly to the right-hand side probability:
\[
\mathbb{P}\Big(\big|U_t - f_0(\bz_t)\big| \le r\Big) \le C_m r^\kappa.
\]
We therefore conclude that
\[
\rho(f, f_0) \le C_m r^\kappa,
\]
which completes the proof of Lemma~3.1.
\end{proof}

\begin{lemma}[Localized empirical-process modulus]
\label{lem:localized_ep}
Let $\mathcal F_{L}(r) \coloneqq \{ f \in \mathcal F_{L} : \rho(f, f_0) \le r \}$ for $r > 0$, and define
\[
v_T \coloneqq \frac{L \log T}{T}, \qquad q_T \coloneqq h + a_{L}^\kappa.
\]
Suppose Assumptions~2.1--2.3 and Assumption~3.5 hold. Furthermore, assume the centered profiled loss-increment class $\mathcal M_{L}(r)$ satisfies the uniform entropy bound
\[
\log N\big( \epsilon, \mathcal M_{L}(r), \|\cdot\|_{L_2(P)} \big) \le C L \log(C/\epsilon)
\]
uniformly over $r \in (0, \delta_\rho]$. Then, for any sequence $r_T^\circ \downarrow 0$ satisfying $r_T^\circ \gtrsim v_T$,
\begin{equation}
\label{eq:localized_ep_bound}
\sup_{f \in \mathcal F_{L}(r)} \left| (\widetilde Q_{T,h} - \widetilde Q_h)(f) - (\widetilde Q_{T,h} - \widetilde Q_h)(f_{0,L}) \right| = \mathcal O_p\Big( \sqrt{v_T(r + q_T)} + v_T \Big)
\end{equation}
uniformly over $r \in [r_T^\circ, \delta_\rho]$ as $T \to \infty$.
\end{lemma}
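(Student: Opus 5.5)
We would prove the bound by a chaining/peeling argument for the localized empirical process under geometric $\alpha$-mixing. The process is
\[
\mathbb{G}_T^\Delta(f) := (\widetilde Q_{T,h}-\widetilde Q_h)(f)-(\widetilde Q_{T,h}-\widetilde Q_h)(f_{0,L}) = \frac1T\sum_{t=1}^T Z_t(f),
\]
where $Z_t(f)$ is the centered profiled loss increment from Section~3.1. The plan is to control its variance by the localization radius via Assumption~3.5, then apply a Bernstein-type maximal inequality for mixing sequences over the class $\mathcal M_L(r)$. The entropy bound $\log N\le CL\log(C/\epsilon)$ turns the complexity into the factor $L\log T$.

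First, fix $r\in[r_T^\circ,\delta_\rho]$ and $f\in\mathcal F_L(r)$. By the first display of Assumption~3.5, the long-run variance of $Z_t(f)$ is at most $C_{\mathrm{LR}}(r+q_T)$, so $\mathrm{Var}(T^{-1}\sum_t Z_t(f))\lesssim (r+q_T)/T$. Next we need an envelope, which requires truncation. The increment $Z_t(f)$ involves $\|\bx_t\|^2$ and $\varepsilon_t\|\bx_t\|$, which are unbounded. We would truncate at level $M_T\asymp (\log T)^{c}$ or a power of $T$, using the $(4+\delta)$ moments of Assumption~3.4(ii) to make the tail remainder $o(v_T)$ uniformly in $f$. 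The truncated part is then uniformly bounded by $M_T$.

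We then apply a Bernstein inequality for geometrically $\alpha$-mixing bounded sequences (Merlevède--Peligrad--Rio type). This gives, for each $f$, deviations of order $\sqrt{(r+q_T)x/T}+M_T x\log T/T$ with probability $1-e^{-x}$. We take a union over an $\epsilon$-net of $\mathcal M_L(r)$ with $\epsilon=T^{-1}$; its log-cardinality is $\lesssim L\log T$. Setting $x\asymp L\log T$ gives $\mathcal O_p(\sqrt{v_T(r+q_T)}+v_T)$ on the net, up to the logarithmic envelope factors, which we would absorb into the constant in $v_T$. The net approximation error is controlled by the $L_2$ and Lipschitz structure: $G$ has bounded derivatives and $\bgamma\mapsto\bpsi_L$ is Lipschitz, so the discretization error is $O_p(T^{-1})=o(v_T)$. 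An alternative, cleaner route is a chaining bound with bracketing entropy, which yields $\sqrt{v_T\sigma^2}+v_T$ with $\sigma^2=r+q_T$.

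For uniformity over $r\in[r_T^\circ,\delta_\rho]$, we peel into dyadic shells $r_j=2^j r_T^\circ$, with $O(\log T)$ shells since $r_T^\circ\gtrsim v_T$. We apply the shell bound with $x\asymp L\log T+\log j$, so the union costs only a constant. Monotonicity of the classes $\mathcal F_L(r)$ in $r$ then transfers the bound at $r_{j+1}$ to all $r\in[r_j,r_{j+1}]$, losing a factor of at most $\sqrt2$. This is exactly where $r_T^\circ\gtrsim v_T$ is used: below that level the variance term is dominated by $v_T$ anyway.

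The main obstacle is the Bernstein step under dependence with unbounded, heavy-tailed summands. The entropy is measured in $L_2(P)$, but mixing maximal inequalities naturally use a Doukhan--Massart--Rio norm or bracketing. Additionally, the variance proxy must be the long-run variance of Assumption~3.5 rather than the marginal variance. We would first try a Berbee coupling into independent blocks of length $\asymp\log T$. The coupling error is $T\rho^{\log T}=o(T^{-1})$. The block sums have variance $\lesssim\log T\,(r+q_T)$ by Assumption~3.5 and envelope $M_T\log T$, which makes the i.i.d.\ Talagrand/Bernstein chaining available. The extra $\log T$ factors from blocking and truncation are precisely what the $\log T$ in $v_T$ must absorb, and this bookkeeping is the delicate point.
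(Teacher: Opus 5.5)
Your skeleton is the paper's. Its Step~1 obtains $\sup_{f\in\mathcal F_L(r)}\mathrm{Var}(\mathbb G_T(f))\le C(r+q_T)$ from Assumption~3.5 exactly as you do. Its Step~2 applies a Bernstein-type maximal inequality for geometrically $\alpha$-mixing processes with entropy cost $L\log T$, and its Step~3 peels over $J_T=\mathcal O(\log T)$ dyadic shells with a union bound. The difference is that the paper never derives the maximal inequality. It simply asserts that any centered class with ``envelope moment conditions'', variance scale $\sigma^2$ and entropy $CL\log(C/\epsilon)$ satisfies $\sup_{g}|T^{-1/2}\sum_{t}g(\bxi_t)|=\mathcal O_p(\sqrt{L\log T}\,\sigma+L\log T/\sqrt T)$, with a pointer to Merlev\`ede et al. You try to prove that step, and that is where your argument has a genuine gap.

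\textbf{The linear Bernstein term.} The $\log T$ in $v_T$ is already spent on the union bound over the net. With $x\asymp L\log T$ the sub-Gaussian term is $\sqrt{v_T(r+q_T)}$, but the linear term is $M_T\ell_T\,x/T=M_T\ell_T\,v_T$. Here $M_T$ is the truncation level and $\ell_T$ the dependence penalty: block length $\asymp\log T$ in the coupling route, or a polylogarithmic factor such as $(\log T)^2$ in the Merlev\`ede--Peligrad--Rio inequality for bounded geometrically $\alpha$-mixing sequences. Neither factor is a constant, and the single $\log T$ in $v_T$ cannot absorb them a second time, so the step ``absorb into the constant in $v_T$'' is not available.

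\textbf{Truncation.} Truncation makes this worse rather than better. Take the natural envelope $F_t\asymp\|\bx_t\|_2^2+\|\bx_t\|_2|\varepsilon_t|$ and only the $(4+\delta)$ moments of Assumption~3.3(ii), which, incidentally, is not among the lemma's listed hypotheses. A polylogarithmic $M_T$ then leaves a remainder $\mathbb E[F_t\ind\{F_t>M_T\}]\le\mathbb E F_t^{p}M_T^{1-p}$, $p=2+\delta/2$, that decays only polylogarithmically. Making the remainder negligible uniformly in $f$ instead forces $M_T$ to be polynomial in $T$, e.g.\ of order $T^{2/(4+\delta)}$, so that $\max_{t\le T}F_t\le M_T$ with probability approaching one. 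What your route actually delivers is therefore $\mathcal O_p\{\sqrt{v_T(r+q_T)}+M_T\ell_T v_T\}$, a polynomially weaker remainder than the one claimed.

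\textbf{The coupling fallback and the mesh.} Your fallback is unavailable as stated: Berbee's coupling requires $\beta$-mixing, whereas Assumption~2.1 gives only $\alpha$-mixing. There is also a smaller, fixable slip in the net step. The map $\bgamma\mapsto m_t$ is Lipschitz only with a pointwise constant of order $1/h$, and under Assumption~3.6 $h=o(T^{-1/2})$. A mesh $\epsilon=T^{-1}$ therefore gives a direct discretization bound of order $1/(hT)$, not $o(v_T)$. A mesh $\epsilon\asymp hT^{-2}$ fixes this at log-cardinality $\mathcal O(L\log T)$, provided $\log(1/h)=\mathcal O(\log T)$.

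To close the argument you have two options. You can take the sharp maximal inequality as a maintained high-level condition, which is effectively what the paper does. Alternatively, strengthen the hypotheses (bounded or sub-exponential envelope, $\beta$-mixing) and inflate $v_T$ by the resulting logarithmic factor.
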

\begin{proof}[Proof of Lemma~\ref{lem:localized_ep}]
For each candidate boundary $f \in \mathcal F_{L}$, let
\[
Z_t(f) \coloneqq \Delta m_t(f) - \mathbb E[\Delta m_t(f)],
\]
where $\Delta m_t(f) \coloneqq m_t(f, \balpha(f), \bbeta(f)) - m_t(f_{0,L}, \balpha(f_{0,L}), \bbeta(f_{0,L}))$. The centered empirical discrepancy satisfies
\[
(\widetilde Q_{T,h} - \widetilde Q_h)(f) - (\widetilde Q_{T,h} - \widetilde Q_h)(f_{0,L}) = \frac{1}{T}\sum_{t=1}^T Z_t(f) = \frac{1}{\sqrt{T}} \mathbb G_T(f),
\]
where $\mathbb G_T(f) \coloneqq \frac{1}{\sqrt{T}}\sum_{t=1}^T Z_t(f)$ is the empirical process indexed by $f \in \mathcal F_{L}(r)$.

\medskip
\noindent\textbf{Step 1: Localized Long-Run Variance Bound.}
By Assumption~3.5 and $\rho(f, f_0) \le r$ on $\mathcal F_{L}(r)$,
\begin{equation}
\label{eq:local-lrv-proof}
\sup_{f \in \mathcal F_{L}(r)} \sum_{j=-\infty}^{\infty} |\operatorname{Cov}(Z_t(f), Z_{t+j}(f))| \le C_{\mathrm{LR}} \{ \rho(f, f_0) + h + a_{L}^\kappa \} \le C(r + q_T).
\end{equation}
Stationarity and the triangle inequality immediately yield the variance proxy bound:
\[
\sup_{f \in \mathcal F_{L}(r)} \operatorname{Var}\big(\mathbb G_T(f)\big) \le \sum_{j=-(T-1)}^{T-1} |\operatorname{Cov}(Z_0(f), Z_j(f))| \le C(r + q_T).
\]

\medskip
\noindent\textbf{Step 2: Maximal Inequality for Weakly Dependent Processes.}
By assumption, the centered class $\mathcal M_{L}(r) \coloneqq \{ Z_t(f) : f \in \mathcal F_{L}(r) \}$ satisfies the metric entropy bound $\log N(\epsilon, \mathcal M_{L}(r), \|\cdot\|_{L_2(P)}) \le C L \log(C/\epsilon)$ uniformly over $r \in (0, \delta_\rho]$.

Under Assumptions~2.1--2.3, Bernstein-type maximal chaining and blocking inequalities for geometrically $\alpha$-mixing empirical processes \citep[e.g.,][]{merlevede2011bernstein} imply that for any centered class $\mathcal G$ with envelope moment conditions, variance scale $\sigma^2$, and entropy $\log N(\epsilon, \mathcal G, L_2(P)) \le C L \log(C/\epsilon)$,
\begin{equation}
\label{eq:mixing-maximal-used}
\sup_{g \in \mathcal G} \left| \frac{1}{\sqrt{T}} \sum_{t=1}^T g(\bxi_t) \right| = \mathcal O_p\left( \sqrt{L \log T} \, \sigma + \frac{L \log T}{\sqrt{T}} \right).
\end{equation}
Applying \eqref{eq:mixing-maximal-used} to $\mathcal M_{L}(r)$ with variance scale $\sigma_r^2 \le C(r + q_T)$ from \eqref{eq:local-lrv-proof} yields
\begin{equation}
\label{eq:G-local-bound}
\sup_{f \in \mathcal F_{L}(r)} |\mathbb G_T(f)| = \mathcal O_p\left( \sqrt{L \log T} \sqrt{r + q_T} + \frac{L \log T}{\sqrt{T}} \right).\nonumber
\end{equation}
Dividing by $\sqrt{T}$ gives, for each fixed $r \in (0, \delta_\rho]$,
\begin{equation}
\label{eq:local-bound-fixed-r}
\sup_{f \in \mathcal F_{L}(r)} \left| (\widetilde Q_{T,h} - \widetilde Q_h)(f) - (\widetilde Q_{T,h} - \widetilde Q_h)(f_{0,L}) \right| = \mathcal O_p\left( \sqrt{v_T(r + q_T)} + v_T \right). \nonumber
\end{equation}

\medskip
\noindent\textbf{Step 3: Uniformity over the Peeling Shells.}
To ensure uniformity across all $r \in [r_T^\circ, \delta_\rho]$, consider the dyadic partition $r_j \coloneqq 2^j r_T^\circ$ for $j = 0, 1, \dots, J_T$, where $J_T \coloneqq \lceil \log_2(\delta_\rho / r_T^\circ) \rceil$. Because $r_T^\circ \gtrsim v_T = \frac{L \log T}{T}$, the cardinality of the grid satisfies $J_T = \mathcal O(\log T)$.

For any $r \in [r_T^\circ, \delta_\rho]$, choose index $j \in \{1, \dots, J_T\}$ such that $r_{j-1} < r \le r_j$. The nested inclusion $\mathcal F_{L}(r) \subseteq \mathcal F_{L}(r_j)$ and the relation $r_j \le 2r$ imply
\[
\sup_{f \in \mathcal F_{L}(r)} |\mathbb G_T(f)| \le \sup_{f \in \mathcal F_{L}(r_j)} |\mathbb G_T(f)| \quad \text{and} \quad r_j + q_T \le 2(r + q_T).
\]
Applying a union bound over the $J_T = \mathcal O(\log T)$ shells, the logarithmic cardinality is absorbed into the $\log T$ factor in $v_T$, establishing that
\[
\sup_{r \in [r_T^\circ, \delta_\rho]} \frac{ \displaystyle \sup_{f \in \mathcal F_{L}(r)} \left| (\widetilde Q_{T,h} - \widetilde Q_h)(f) - (\widetilde Q_{T,h} - \widetilde Q_h)(f_{0,L}) \right| }{ \sqrt{v_T(r + q_T)} + v_T } = \mathcal O_p(1).
\]
This completes the proof.
\end{proof}

%=============================================================================
% Section: First-Stage Regime Recovery Theory
%=============================================================================
\subsection{Proof of Theorem~3.1}
\label{subsec:proof_regime_recovery}

Recall that the profiled hard-threshold population prediction risk as
\[
\mathcal{R}(f) \coloneqq \inf_{\balpha,\bbeta\in\mathbb{R}^d} \mathbb{E}\left[ \left\{ Y_t - \balpha^\top\bx_t - \bbeta^\top\bx_t D_t(f) \right\}^2 \right], \qquad D_t(f) \equiv \ind\{f(\bz_t) \ge U_t\}.
\]
Under the true regime-boundary function $f_0$, the optimal excess risk reduces to the oracle disturbance variance: $\mathcal{R}(f_0) = \mathbb{E}(\varepsilon_t^2)$.

\begin{proof}[Proof of Theorem~3.1]
Let $v_T \coloneqq \frac{L\log T}{T}$, $\rho_T \coloneqq \rho(\widehat f, f_0)$, and $\rho_{0,L} \coloneqq \rho(f_{0,L}, f_0)$. We establish the result in five sequential steps.

\medskip
\noindent\textbf{Step 1: Regime Approximation Error of RFF.}
By the spectral approximation property (Assumption~3.2), $\|f_{0,L} - f_0\|_\infty \le a_{L}$ with probability approaching one. On this event, a disagreement $D_t(f_{0,L}) \neq D_{0t}$ occurs only if $U_t$ lies between $f_{0,L}(\bz_t)$ and $f_0(\bz_t)$, which implies $|U_t - f_0(\bz_t)| \le a_{L}$. Applying the boundary-margin condition (Assumption~3.1) yields
\begin{equation}
\label{eq:proof_rff_regime}
\rho_{0,L} \le \mathbb{P}\big( |U_t - f_0(\bz_t)| \le a_{L} \big) \le C_m a_{L}^{\kappa} = \mathcal{O}(a_{L}^{\kappa}).
\end{equation}

\medskip
\medskip \noindent\textbf{Step 2: Localized empirical-process fluctuations.} By Lemma~\ref{lem:localized_ep}, uniformly over $r_T^\circ\le r\le\delta_\rho$,
\begin{equation}
 \label{eq:proof_local_ep}
\sup_{f\in\mathcal F_{L}(r)} \left| (\widetilde Q_{T,h}-\widetilde Q_h)(f) - (\widetilde Q_{T,h}-\widetilde Q_h)(f_{0,L}) \right| = O_p\left[ \sqrt{v_T(r+h+a_{L}^{\kappa})}+v_T \right].    
\end{equation}

\medskip
\noindent\textbf{Step 3: Initial Consistency and Deterministic Separation.}
We first establish the benchmark risk of the random-feature surrogate $f_{0,L}$. Evaluating the profiled prediction risk $\mathcal{R}(f_{0,L})$ at the true coefficients $(\balpha_0, \bbeta_0)$, the local conditional moment bound in Assumption~3.3(iii) together with \eqref{eq:proof_rff_regime} yields
\begin{equation}
\label{eq:proof_rff_risk}
0 \le \mathcal{R}(f_{0,L}) - \mathcal{R}(f_0) \le \mathbb{E}\left[ (\bbeta_0^\top\bx_t)^2 \ind\{D_t(f_{0,L}) \neq D_{0t}\} \right] \le C \rho_{0,L} \le C a_{L}^\kappa.
\end{equation}

Next, we prove that $\rho_T \coloneqq \rho(\widehat f, f_0) = o_p(1)$. Standard smoothing approximations under Assumption~3.3 guarantee that $|\widetilde{Q}_h(f) - \mathcal{R}(f) - \lambda_T\|\bgamma\|_2^2| \le C h$ uniformly over $\mathcal{F}_{L}$. Since $\|\bgamma\|_2 \le C_\gamma$, $h \to 0$, and $\lambda_T \to 0$, we have $\sup_{f \in \mathcal{F}_{L}} |\widetilde{Q}_h(f) - \mathcal{R}(f)| = o(1)$. Under the uniform entropy bound, $v_T\to0$, and geometric
$\alpha$-mixing, the maximal-inequality argument without variance
localization~\citep{andrews1992generic,merlevede2011bernstein} yields $\sup_{f\in\mathcal F_{L}}
\left|
\widetilde Q_{T,h}(f)-\widetilde Q_h(f)
\right|
=o_p(1).$
Moreover, by empirical optimality $\widetilde{Q}_{T,h}(\widehat f) \le \widetilde{Q}_{T,h}(f_{0,L})$, it follows that
\[
\mathcal{R}(\widehat f) - \mathcal{R}(f_0) \le \mathcal{R}(f_{0,L}) - \mathcal{R}(f_0) + o_p(1) \le C a_{L}^\kappa + o_p(1) = o_p(1).
\]
If $\rho_T$ did not converge to zero in probability, the global identification condition in Assumption~3.4(ii) would imply $\liminf_{T\to\infty} \{\mathcal{R}(\widehat f) - \mathcal{R}(f_0)\} > 0$ with positive probability, yielding a contradiction. Hence,
\begin{equation}
\label{eq:proof_rho_consistency}
\rho_T = o_p(1),
\end{equation}
which ensures that $\mathbb{P}(\rho_T \le \delta_\rho) \to 1$, so that the local curvature condition applies to $\widehat f$.

Finally, we derive the deterministic lower bound. On the event $\{\rho_T \le \delta_\rho\}$, applying local quantitative separation $\mathcal{R}(\widehat f) - \mathcal{R}(f_0) \ge c_\rho \rho_T$ and identity \eqref{eq:proof_rff_risk} gives
\begin{align}
\label{eq:proof_population_lower}
\widetilde{Q}_h(\widehat f) - \widetilde{Q}_h(f_{0,L}) 
&\ge \mathcal{R}(\widehat f) - \mathcal{R}(f_{0,L}) - C(h + \lambda_T) \nonumber \\
&= \big\{ \mathcal{R}(\widehat f) - \mathcal{R}(f_0) \big\} - \big\{ \mathcal{R}(f_{0,L}) - \mathcal{R}(f_0) \big\} - C(h + \lambda_T) \nonumber \\
&\ge c_\rho \rho_T - C(h + \lambda_T + a_{L}^\kappa).
\end{align}
Meanwhile, by empirical optimality $\widetilde{Q}_{T,h}(\widehat f) \le \widetilde{Q}_{T,h}(f_{0,L})$, the fundamental basic inequality satisfies
\begin{equation}
\label{eq:proof_basic}
0 \ge \widetilde{Q}_h(\widehat f) - \widetilde{Q}_h(f_{0,L}) + \big[ (\widetilde{Q}_{T,h} - \widetilde{Q}_h)(\widehat f) - (\widetilde{Q}_{T,h} - \widetilde{Q}_h)(f_{0,L}) \big].
\end{equation}

\medskip
\noindent\textbf{Step 4: Rate Solving via Peeling.}
By consistency \eqref{eq:proof_rho_consistency}, $\mathbb P(\rho_T > \delta_\rho) \to 0$. Set the baseline sequence $r_T^\circ \asymp \eta_T$. On the event $\{\rho_T \le r_T^\circ\}$, the bound $\rho_T \le r_T^\circ \lesssim \eta_T$ holds trivially.

It remains to bound $\rho_T$ on the intermediate event $\{r_T^\circ < \rho_T \le \delta_\rho\}$. Following the standard peeling argument ~\citep{van1996weak}, partition this localized region into dyadic shells $\mathcal S_j \coloneqq \big\{ f \in \mathcal F_{L} : 2^{j-1} r_T^\circ < \rho(f, f_0) \le 2^j r_T^\circ \big\}$ for $j = 1, \dots, J_T$, where $J_T \coloneqq \lceil \log_2(\delta_\rho / r_T^\circ) \rceil$. Applying the localized empirical process bound \eqref{eq:proof_local_ep} across the dyadic shells yields
\begin{equation}
\label{eq:proof_ep_fhat}
\left| (\widetilde Q_{T,h} - \widetilde Q_h)(\widehat f) - (\widetilde Q_{T,h} - \widetilde Q_h)(f_{0,L}) \right| = \mathcal O_p\Big( \sqrt{v_T(\rho_T + h + a_{L}^\kappa)} + v_T \Big).
\end{equation}
Combining  \eqref{eq:proof_population_lower}, \eqref{eq:proof_basic}, and \eqref{eq:proof_ep_fhat} gives the master rate inequality:
\begin{equation}
\label{eq:proof_master_rate}
c_\rho \rho_T \le C(h + \lambda_T + a_{L}^\kappa) + \mathcal O_p\Big( \sqrt{v_T(\rho_T + h + a_{L}^\kappa)} + v_T \Big).
\end{equation}
By Young's inequality, for any $\epsilon \in (0, c_\rho)$, $\sqrt{v_T \rho_T} \le \epsilon \rho_T + \frac{v_T}{4\epsilon}$, while $\sqrt{v_T h} \le \frac{1}{2}(v_T + h)$ and $\sqrt{v_T a_{L}^\kappa} \le \frac{1}{2}(v_T + a_{L}^\kappa)$. Absorbing $\epsilon \rho_T$ into the left-hand side of \eqref{eq:proof_master_rate} yields
\[
\rho_T = \mathcal O_p(v_T + h + \lambda_T + a_{L}^\kappa) = \mathcal O_p(\eta_T) \qquad \text{on } \{r_T^\circ < \rho_T \le \delta_\rho\}.
\]
Combining this with $\rho_T \le r_T^\circ \lesssim \eta_T$ on $\{\rho_T \le r_T^\circ\}$ and $\mathbb P(\rho_T > \delta_\rho) \to 0$, we conclude that
\begin{equation}
\label{eq:proof_population_rate}
\rho(\widehat f, f_0) = \mathcal O_p(\eta_T).
\end{equation}

\medskip
\noindent\textbf{Step 5:  Empirical Misclassification Rates.}
For the in-sample empirical misclassification rate $\mathrm{MR}_T$, consider the disagreement indicator class $\mathcal{H}_{L} \coloneqq \{ (\bz, u) \mapsto \ind\{D(f) \neq D(f_0)\} : f \in \mathcal{F}_{L} \}$. This class is bounded by 1 and retains a VC index of order $L$. By the localized long-run variance bound in Assumption~3.5, the VC-type entropy bound for \(\mathcal H_{L}\), and the same localized maximal inequality for geometrically \(\alpha\)-mixing sequences used in Lemma~\ref{lem:localized_ep},
\[
\sup_{f \in \mathcal{F}_{L}:\, \rho(f,f_0) \le r} \left| \frac{1}{T}\sum_{t=1}^T \ind\{D_t(f) \neq D_{0t}\} - \rho(f, f_0) \right| = \mathcal{O}_p\big( \sqrt{v_T r} + v_T \big).
\]
Evaluating at $\widehat f$ via peeling and substituting $\rho_T = \mathcal{O}_p(\eta_T)$ gives
\[
\mathrm{MR}_T = \rho_T + \mathcal{O}_p\big( \sqrt{v_T\rho_T} + v_T \big) = \mathcal{O}_p(\eta_T) + \mathcal{O}_p\big( \sqrt{v_T\eta_T} + v_T \big) = \mathcal{O}_p(\eta_T),
\]
owing to $v_T \le \eta_T$. This completes the proof of Theorem~3.1.
\end{proof}

\subsection{Proof of Theorem 3.2}
\begin{proof}[Proof of Theorem~3.2]
Let $\widehat{\mathbf{W}}_t \coloneqq (\bx_t^\top, \bx_t^\top\widehat{D}_t)^\top$ and $\mathbf{W}_{0t} \coloneqq (\bx_t^\top, \bx_t^\top D_{0t})^\top$, with empirical Gram matrices $\widehat{\mathbf{S}}_T \coloneqq \frac{1}{T}\sum_{t=1}^T \widehat{\mathbf{W}}_t\widehat{\mathbf{W}}_t^\top$ and $\mathbf{S}_{0T} \coloneqq \frac{1}{T}\sum_{t=1}^T \mathbf{W}_{0t}\mathbf{W}_{0t}^\top$. The true model satisfies $Y_t = \mathbf{W}_{0t}^\top\btheta_0 + \varepsilon_t$. 

Define the disagreement indicator $I_t \coloneqq \ind\{\widehat{D}_t \neq D_{0t}\}$ and $\nu \coloneqq \frac{2+\delta}{4+\delta} \in (\frac{1}{2}, 1)$. By Theorem~3.1, $\frac{1}{T}\sum_{t=1}^T I_t = \mathrm{MR}_T = \mathcal{O}_p(\eta_T)$, where $\eta_T \coloneqq \frac{L\log T}{T} + h + \lambda_T + a_{L}^{\kappa}$. Before proceeding, we establish two weighted bounds. 

First, by H\"older's inequality with conjugate pairs $(\frac{4+\delta}{2}, \frac{4+\delta}{2+\delta})$,
\begin{equation}
\label{eq:oracle_weighted_x}
\frac{1}{T}\sum_{t=1}^T \|\bx_t\|_2^2 I_t \le \left( \frac{1}{T}\sum_{t=1}^T \|\bx_t\|_2^{4+\delta} \right)^{2/(4+\delta)} \left( \frac{1}{T}\sum_{t=1}^T I_t \right)^\nu = \mathcal{O}_p(\eta_T^\nu),
\end{equation}
owing to the ergodic theorem $\frac{1}{T}\sum_{t=1}^T \|\bx_t\|_2^{4+\delta} = \mathcal{O}_p(1)$. Second, setting $q \coloneqq 2 + \delta/2 = \frac{4+\delta}{2}$ such that $1 - 1/q = \nu$, another application of H\"older's inequality yields
\begin{equation}
\label{eq:oracle_weighted_score}
\frac{1}{T}\sum_{t=1}^T \|\bx_t\|_2 |\varepsilon_t| I_t \le \left\{ \frac{1}{T}\sum_{t=1}^T (\|\bx_t\|_2 |\varepsilon_t|)^q \right\}^{1/q} \left( \frac{1}{T}\sum_{t=1}^T I_t \right)^\nu = \mathcal{O}_p(\eta_T^\nu),
\end{equation}
where $\mathbb{E}[(\|\bx_t\|_2 |\varepsilon_t|)^q] \le (\mathbb{E}\|\bx_t\|_2^{4+\delta})^{1/2} (\mathbb{E}|\varepsilon_t|^{4+\delta})^{1/2} < \infty$ ensures $\frac{1}{T}\sum_{t=1}^T (\|\bx_t\|_2 |\varepsilon_t|)^q = \mathcal{O}_p(1)$.

\medskip
\noindent\textbf{Step 1: Oracle Equivalence.}
Since $\widehat{\mathbf{W}}_t - \mathbf{W}_{0t} = (\mathbf{0}^\top, \bx_t^\top(\widehat{D}_t - D_{0t}))^\top$, we have $\|\widehat{\mathbf{W}}_t\widehat{\mathbf{W}}_t^\top - \mathbf{W}_{0t}\mathbf{W}_{0t}^\top\|_F \le C\|\bx_t\|_2^2 I_t$. Thus, the bound \eqref{eq:oracle_weighted_x} implies
\begin{equation}
\label{eq:oracle_gram_diff}
\|\widehat{\mathbf{S}}_T - \mathbf{S}_{0T}\|_F = \mathcal{O}_p(\eta_T^\nu) = o_p(1).
\end{equation}
By the ergodic theorem, $\mathbf{S}_{0T} \stackrel{p}{\longrightarrow} \boldsymbol{\Sigma}_{\mathbf{W}} \coloneqq \mathbb{E}[\mathbf{W}_{0t}\mathbf{W}_{0t}^\top]$, which is strictly positive definite under Assumption~2.3. Hence, $\widehat{\mathbf{S}}_T \stackrel{p}{\longrightarrow} \boldsymbol{\Sigma}_{\mathbf{W}}$. Consequently, $\|\widehat{\mathbf{S}}_T^{-1}\|_{\mathrm{op}} = \mathcal{O}_p(1)$ and $\|\mathbf{S}_{0T}^{-1}\|_{\mathrm{op}} = \mathcal{O}_p(1)$. The resolvent identity $\widehat{\mathbf{S}}_T^{-1} - \mathbf{S}_{0T}^{-1} = \widehat{\mathbf{S}}_T^{-1}(\mathbf{S}_{0T} - \widehat{\mathbf{S}}_T)\mathbf{S}_{0T}^{-1}$ then yields
\begin{equation}
\label{eq:oracle_inverse_diff}
\|\widehat{\mathbf{S}}_T^{-1} - \mathbf{S}_{0T}^{-1}\|_{\mathrm{op}} = \mathcal{O}_p(\eta_T^\nu).
\end{equation}

Representing $\bar\btheta - \btheta_0 = \widehat{\mathbf{S}}_T^{-1} \big[ \frac{1}{T}\sum_{t=1}^T \widehat{\mathbf{W}}_t\varepsilon_t + \frac{1}{T}\sum_{t=1}^T \widehat{\mathbf{W}}_t(\mathbf{W}_{0t} - \widehat{\mathbf{W}}_t)^\top\btheta_0 \big]$ and $\widehat\btheta_{\mathrm{or}} - \btheta_0 = \mathbf{S}_{0T}^{-1} \frac{1}{T}\sum_{t=1}^T \mathbf{W}_{0t}\varepsilon_t$, we subtract the two expressions and scale by $\sqrt{T}$:
\begin{align}
\label{eq:oracle_decomp}
\sqrt{T}(\bar\btheta - \widehat\btheta_{\mathrm{or}}) 
&= \widehat{\mathbf{S}}_T^{-1} \frac{1}{\sqrt{T}}\sum_{t=1}^T (\widehat{\mathbf{W}}_t - \mathbf{W}_{0t})\varepsilon_t + (\widehat{\mathbf{S}}_T^{-1} - \mathbf{S}_{0T}^{-1})\frac{1}{\sqrt{T}}\sum_{t=1}^T \mathbf{W}_{0t}\varepsilon_t \nonumber \\
&\quad + \widehat{\mathbf{S}}_T^{-1} \frac{1}{\sqrt{T}}\sum_{t=1}^T \widehat{\mathbf{W}}_t(\mathbf{W}_{0t} - \widehat{\mathbf{W}}_t)^\top\btheta_0 \nonumber \\
&\eqqcolon \mathbf{A}_{1T} + \mathbf{A}_{2T} + \mathbf{A}_{3T}.
\end{align}
We bound each component sequentially under the rate condition $\sqrt{T}\eta_T^\nu \to 0$:
\begin{enumerate}[label=(\alph*), leftmargin=2em]
    \item By \eqref{eq:oracle_weighted_score}, $\|\frac{1}{T}\sum_{t=1}^T (\widehat{\mathbf{W}}_t - \mathbf{W}_{0t})\varepsilon_t\|_2 \le \frac{1}{T}\sum_{t=1}^T \|\bx_t\|_2 |\varepsilon_t| I_t = \mathcal{O}_p(\eta_T^\nu)$. Therefore,
    \[
    \|\mathbf{A}_{1T}\|_2 \le \|\widehat{\mathbf{S}}_T^{-1}\|_{\mathrm{op}} \, \mathcal{O}_p(\sqrt{T}\eta_T^\nu) = \mathcal{O}_p(\sqrt{T}\eta_T^\nu) = o_p(1).
    \]
    \item Setting $\boldsymbol{\xi}_t \coloneqq \mathbf{W}_{0t}\varepsilon_t$, the central limit theorem implies $\frac{1}{\sqrt{T}}\sum_{t=1}^T \boldsymbol{\xi}_t = \mathcal{O}_p(1)$. Combining this with \eqref{eq:oracle_inverse_diff} gives
    \[
    \|\mathbf{A}_{2T}\|_2 \le \|\widehat{\mathbf{S}}_T^{-1} - \mathbf{S}_{0T}^{-1}\|_{\mathrm{op}} \left\| \frac{1}{\sqrt{T}}\sum_{t=1}^T \boldsymbol{\xi}_t \right\|_2 = \mathcal{O}_p(\eta_T^\nu) \mathcal{O}_p(1) = o_p(1).
    \]
    \item Because $(\mathbf{W}_{0t} - \widehat{\mathbf{W}}_t)^\top\btheta_0 = \bbeta_0^\top\bx_t(D_{0t} - \widehat{D}_t)$ and $\|\widehat{\mathbf{W}}_t\|_2 \le \sqrt{2}\|\bx_t\|_2$, bound \eqref{eq:oracle_weighted_x} implies
    \[
    \left\| \frac{1}{T}\sum_{t=1}^T \widehat{\mathbf{W}}_t(\mathbf{W}_{0t} - \widehat{\mathbf{W}}_t)^\top\btheta_0 \right\|_2 \le C \|\bbeta_0\|_2 \frac{1}{T}\sum_{t=1}^T \|\bx_t\|_2^2 I_t = \mathcal{O}_p(\eta_T^\nu).
    \]
    Consequently, $\|\mathbf{A}_{3T}\|_2 = \mathcal{O}_p(\sqrt{T}\eta_T^\nu) = o_p(1)$.
\end{enumerate}
Substituting the bounds for $\mathbf{A}_{1T}, \mathbf{A}_{2T}, \mathbf{A}_{3T}$ into \eqref{eq:oracle_decomp} establishes the oracle equivalence:
\begin{equation}
\label{eq:oracle_diff_op1}
\sqrt{T}(\bar\btheta - \widehat\btheta_{\mathrm{or}}) = o_p(1).
\end{equation}

\medskip
\noindent\textbf{Step 2: Asymptotic Normality.}
The oracle estimator admits the linear expansion:
\begin{equation}
\label{eq:oracle_linear_rep}
\sqrt{T}(\widehat\btheta_{\mathrm{or}} - \btheta_0) = \mathbf{S}_{0T}^{-1} \frac{1}{\sqrt{T}}\sum_{t=1}^T \mathbf{W}_{0t}\varepsilon_t = \boldsymbol{\Sigma}_{\mathbf{W}}^{-1} \frac{1}{\sqrt{T}}\sum_{t=1}^T \boldsymbol{\xi}_t + o_p(1).
\end{equation}
The conditional restriction $\mathbb{E}(\varepsilon_t \mid \bx_t, \bz_t, U_t) = 0$ implies $\mathbb{E}(\boldsymbol{\xi}_t) = \mathbf{0}$. Moreover, by the Cauchy--Schwarz inequality and Assumption~3.3(ii),
\[
\mathbb{E}\|\boldsymbol{\xi}_t\|_2^{2+\delta/2} \le C \mathbb{E}\big[\|\bx_t\|_2^{2+\delta/2}|\varepsilon_t|^{2+\delta/2}\big] \le C \big(\mathbb{E}\|\bx_t\|_2^{4+\delta}\big)^{1/2} \big(\mathbb{E}|\varepsilon_t|^{4+\delta}\big)^{1/2} < \infty.
\]
Because $\{\boldsymbol{\xi}_t\}$ is strictly stationary and geometrically $\alpha$-mixing (Assumption~2.1), Davydov's covariance inequality combined with the $(2+\delta/2)$-moment bound ensures absolute summability of the autocovariance sequence:
\[
\sum_{j=-\infty}^{\infty} \big\| \mathbb{E}[\boldsymbol{\xi}_t\boldsymbol{\xi}_{t-j}^\top] \big\|_F < \infty.
\]
Consequently, the long-run covariance matrix $\boldsymbol{\Omega}_{\mathrm{LR}} \coloneqq \sum_{j=-\infty}^{\infty} \mathbb{E}[\boldsymbol{\xi}_t\boldsymbol{\xi}_{t-j}^\top]$ is well-defined and positive definite under the non-degeneracy condition. Applying the central limit theorem for strongly mixing processes yields
\[
\frac{1}{\sqrt{T}}\sum_{t=1}^T \boldsymbol{\xi}_t \rightsquigarrow \mathcal{N}(\mathbf{0}, \boldsymbol{\Omega}_{\mathrm{LR}}).
\]
Applying Slutsky's theorem~\citep{van2000asymptotic} to \eqref{eq:oracle_linear_rep} yields $\sqrt{T}(\widehat\btheta_{\mathrm{or}} - \btheta_0) \rightsquigarrow \mathcal{N}(\mathbf{0}, \mathbf{V}_{\mathrm{or}})$, where $\mathbf{V}_{\mathrm{or}} \coloneqq \boldsymbol{\Sigma}_{\mathbf{W}}^{-1}\boldsymbol{\Omega}_{\mathrm{LR}}\boldsymbol{\Sigma}_{\mathbf{W}}^{-1}$. 

Finally, decomposing $\sqrt{T}(\bar\btheta - \btheta_0) = \sqrt{T}(\widehat\btheta_{\mathrm{or}} - \btheta_0) + \sqrt{T}(\bar\btheta - \widehat\btheta_{\mathrm{or}})$ and invoking the oracle equivalence \eqref{eq:oracle_diff_op1} establishes
\[
\sqrt{T}(\bar\btheta - \btheta_0) \rightsquigarrow \mathcal{N}(\mathbf{0}, \mathbf{V}_{\mathrm{or}}),
\]
which completes the proof.
\end{proof}

\subsection{Proof of Theorem 3.3}
\begin{proof}[Proof of Theorem~3.3]
Recall that $\widehat{\mathbf{V}}_{\mathrm{HAC}} \coloneqq \widehat{\boldsymbol{\Sigma}}_{\widehat{\mathbf{W}}}^{-1} \widehat{\boldsymbol{\Omega}}_{\mathrm{HAC}} \widehat{\boldsymbol{\Sigma}}_{\widehat{\mathbf{W}}}^{-1}$, with $\widehat{\boldsymbol{\Sigma}}_{\widehat{\mathbf{W}}} \coloneqq \frac{1}{T}\sum_{t=1}^T \widehat{\mathbf{W}}_t\widehat{\mathbf{W}}_t^\top$. By the proof of Theorem~3.2, $\widehat{\boldsymbol{\Sigma}}_{\widehat{\mathbf{W}}} \stackrel{p}{\longrightarrow} \boldsymbol{\Sigma}_{\mathbf{W}}$ and $\widehat{\boldsymbol{\Sigma}}_{\widehat{\mathbf{W}}}^{-1} \stackrel{p}{\longrightarrow} \boldsymbol{\Sigma}_{\mathbf{W}}^{-1}$. By the continuous mapping theorem, consistency reduces to establishing
\begin{equation}
\label{eq:omega_conv_target}
\|\widehat{\boldsymbol{\Omega}}_{\mathrm{HAC}} - \boldsymbol{\Omega}_{\mathrm{LR}}\|_F = o_p(1). \nonumber
\end{equation}

Define the estimated and oracle scores as $\widehat{\boldsymbol{\xi}}_t \coloneqq \widehat{\mathbf{W}}_t\widehat{e}_t$ (with $\widehat{e}_t \coloneqq Y_t - \widehat{\mathbf{W}}_t^\top\bar{\btheta}$) and $\boldsymbol{\xi}_t \coloneqq \mathbf{W}_{0t}\varepsilon_t$, respectively. Let $\widehat{\boldsymbol{\Gamma}}_j \coloneqq \frac{1}{T}\sum_{t=j+1}^T \widehat{\boldsymbol{\xi}}_t\widehat{\boldsymbol{\xi}}_{t-j}^\top$ and $\widetilde{\boldsymbol{\Gamma}}_j \coloneqq \frac{1}{T}\sum_{t=j+1}^T \boldsymbol{\xi}_t\boldsymbol{\xi}_{t-j}^\top$ denote their sample autocovariance matrices. The infeasible oracle HAC estimator takes the form
\[
\widetilde{\boldsymbol{\Omega}}_{\mathrm{HAC}} \coloneqq \widetilde{\boldsymbol{\Gamma}}_0 + \sum_{j=1}^{m_T} k\left(\frac{j}{m_T+1}\right) (\widetilde{\boldsymbol{\Gamma}}_j + \widetilde{\boldsymbol{\Gamma}}_j^\top).
\]
By the triangle inequality,
\begin{equation}
\label{eq:hac_triangle}
\|\widehat{\boldsymbol{\Omega}}_{\mathrm{HAC}} - \boldsymbol{\Omega}_{\mathrm{LR}}\|_F \le \|\widehat{\boldsymbol{\Omega}}_{\mathrm{HAC}} - \widetilde{\boldsymbol{\Omega}}_{\mathrm{HAC}}\|_F + \|\widetilde{\boldsymbol{\Omega}}_{\mathrm{HAC}} - \boldsymbol{\Omega}_{\mathrm{LR}}\|_F.
\end{equation}

\medskip
\noindent\textbf{Step 1: Consistency of the Oracle HAC Estimator.}
The process $\{\boldsymbol{\xi}_t\}$ inherits strict stationarity and geometric $\alpha$-mixing from the underlying system. Under the moment condition $\mathbb{E}\|\bx_t\|_2^{8+\delta} < \infty$ and $\mathbb{E}|\varepsilon_t|^{8+\delta} < \infty$, the Cauchy--Schwarz inequality yields
\[
\mathbb{E}\|\boldsymbol{\xi}_t\|_2^{4+\delta/2} \le C \big(\mathbb{E}\|\bx_t\|_2^{8+\delta}\big)^{1/2} \big(\mathbb{E}|\varepsilon_t|^{8+\delta}\big)^{1/2} < \infty.
\]
Davydov's covariance inequality ensures absolute summability of autocovariances: \\ $\sum_{j=-\infty}^\infty \|\mathbb{E}[\boldsymbol{\xi}_t\boldsymbol{\xi}_{t-j}^\top]\|_F < \infty$. Hence, the long-run covariance $\boldsymbol{\Omega}_{\mathrm{LR}} \coloneqq \sum_{j=-\infty}^\infty \mathbb{E}[\boldsymbol{\xi}_t\boldsymbol{\xi}_{t-j}^\top]$ is well-defined. Under the stated bandwidth conditions
$m_T\to\infty$ and $m_T/\sqrt{T}\to0$, together with the
moment and geometric mixing conditions above, standard
Bartlett-kernel HAC consistency yields
\begin{equation}
\label{eq:oracle_hac_consistency}
\|\widetilde{\boldsymbol{\Omega}}_{\mathrm{HAC}} - \boldsymbol{\Omega}_{\mathrm{LR}}\|_F = o_p(1).
\end{equation}

\medskip
\noindent\textbf{Step 2: Mean-Square Control of Score Perturbations.}
Define the score perturbation $\boldsymbol{\Delta}_t \coloneqq \widehat{\boldsymbol{\xi}}_t - \boldsymbol{\xi}_t$. Substituting $Y_t = \mathbf{W}_{0t}^\top\btheta_0 + \varepsilon_t$ into $\boldsymbol{\Delta}_t$ yields
\begin{equation}
\label{eq:score_difference}
\boldsymbol{\Delta}_t = (\widehat{\mathbf{W}}_t - \mathbf{W}_{0t})\varepsilon_t + \widehat{\mathbf{W}}_t(\mathbf{W}_{0t} - \widehat{\mathbf{W}}_t)^\top\btheta_0 - \widehat{\mathbf{W}}_t\widehat{\mathbf{W}}_t^\top(\bar{\btheta} - \btheta_0). \nonumber
\end{equation}
Letting $I_t \coloneqq \ind\{\widehat{D}_t \neq D_{0t}\}$ and using $\|\widehat{\mathbf{W}}_t - \mathbf{W}_{0t}\|_2 \le \|\bx_t\|_2 I_t$ along with $\|\widehat{\mathbf{W}}_t\|_2 \le \sqrt{2}\|\bx_t\|_2$, we have
\begin{equation}
\label{eq:delta_square_bound}
\|\boldsymbol{\Delta}_t\|_2^2 \le C \big( \|\bx_t\|_2^2\varepsilon_t^2 I_t + \|\bx_t\|_2^4 I_t + \|\bx_t\|_2^4\|\bar{\btheta} - \btheta_0\|_2^2 \big).
\end{equation}
Set  $p \coloneqq \frac{8+\delta}{4}$ and $\mu \coloneqq 1 - \frac{1}{p} = \frac{4+\delta}{8+\delta} \in (1/2, 1)$. By H\"older's inequality and $\frac{1}{T}\sum_{t=1}^T I_t = \mathcal{O}_p(\eta_T)$ (Theorem~3.1),
\begin{align*}
\frac{1}{T}\sum_{t=1}^T \|\bx_t\|_2^4 I_t &\le \left( \frac{1}{T}\sum_{t=1}^T \|\bx_t\|_2^{8+\delta} \right)^{1/p} \left( \frac{1}{T}\sum_{t=1}^T I_t \right)^\mu = \mathcal{O}_p(\eta_T^\mu), \\
\frac{1}{T}\sum_{t=1}^T \|\bx_t\|_2^2\varepsilon_t^2 I_t &\le \left\{ \frac{1}{T}\sum_{t=1}^T (\|\bx_t\|_2^2\varepsilon_t^2)^p \right\}^{1/p} \left( \frac{1}{T}\sum_{t=1}^T I_t \right)^\mu = \mathcal{O}_p(\eta_T^\mu), 
\end{align*}
since $\mathbb{E}[(\|\bx_t\|_2^2\varepsilon_t^2)^p] \le (\mathbb{E}\|\bx_t\|_2^{8+\delta})^{1/2} (\mathbb{E}|\varepsilon_t|^{8+\delta})^{1/2} < \infty$. Using $\frac{1}{T}\sum_{t=1}^T \|\bx_t\|_2^4 = \mathcal{O}_p(1)$ and $\|\bar{\btheta} - \btheta_0\|_2^2 = \mathcal{O}_p(T^{-1})$ (Theorem~3.2), averaging \eqref{eq:delta_square_bound} over $t$ yields
\begin{equation}
\label{eq:delta_mean_square}
\frac{1}{T}\sum_{t=1}^T \|\boldsymbol{\Delta}_t\|_2^2 = \mathcal{O}_p(\eta_T^\mu + T^{-1}). \nonumber
\end{equation}

\medskip
\noindent\textbf{Step 3: Uniform Lag Bounds for Autocovariance Matrices.}
By the Cauchy--Schwarz inequality,
\begin{align}
\sup_{0\le j\le m_T}
\left\|
\frac{1}{T}\sum_{t=j+1}^T
\boldsymbol{\Delta}_t\boldsymbol{\xi}_{t-j}^\top
\right\|_F
&\le
\left(
\frac{1}{T}\sum_{t=1}^T
\|\boldsymbol{\Delta}_t\|_2^2
\right)^{1/2}
\left(
\frac{1}{T}\sum_{t=1}^T
\|\boldsymbol{\xi}_t\|_2^2
\right)^{1/2}
\nonumber\\
&=
\mathcal{O}_p
\left(
\eta_T^{\mu/2}+T^{-1/2}
\right).
\label{eq:hac_cross_bound}
\end{align}
Identical reasoning bounds $\frac{1}{T}\sum_{t=j+1}^T \boldsymbol{\xi}_t\boldsymbol{\Delta}_{t-j}^\top$. Moreover,
\begin{equation}
\label{eq:hac_quadratic_bound}
\sup_{0 \le j \le m_T} \left\| \frac{1}{T}\sum_{t=j+1}^T \boldsymbol{\Delta}_t\boldsymbol{\Delta}_{t-j}^\top \right\|_F \le \frac{1}{T}\sum_{t=1}^T \|\boldsymbol{\Delta}_t\|_2^2 = \mathcal{O}_p(\eta_T^\mu + T^{-1}).
\end{equation}
Decomposing $\widehat{\boldsymbol{\Gamma}}_j - \widetilde{\boldsymbol{\Gamma}}_j = \frac{1}{T}\sum_{t=j+1}^T (\boldsymbol{\Delta}_t\boldsymbol{\xi}_{t-j}^\top + \boldsymbol{\xi}_t\boldsymbol{\Delta}_{t-j}^\top + \boldsymbol{\Delta}_t\boldsymbol{\Delta}_{t-j}^\top)$, identities \eqref{eq:hac_cross_bound}--\eqref{eq:hac_quadratic_bound} ensure that
\begin{equation}
\label{eq:gamma_uniform_rate}
\sup_{0 \le j \le m_T} \|\widehat{\boldsymbol{\Gamma}}_j - \widetilde{\boldsymbol{\Gamma}}_j\|_F = \mathcal{O}_p(\eta_T^{\mu/2} + T^{-1/2}). \nonumber
\end{equation}

\medskip
\noindent\textbf{Step 4: Feasible HAC Convergence.}
Because Bartlett weights satisfy $\sup_v |k(v)| \le 1$, summation over lags gives
\begin{equation}
\label{eq:feasible_oracle_hac_rate}
\|\widehat{\boldsymbol{\Omega}}_{\mathrm{HAC}} - \widetilde{\boldsymbol{\Omega}}_{\mathrm{HAC}}\|_F \le 2\sum_{j=0}^{m_T} \|\widehat{\boldsymbol{\Gamma}}_j - \widetilde{\boldsymbol{\Gamma}}_j\|_F = \mathcal{O}_p\left( m_T\eta_T^{\mu/2} + \frac{m_T}{\sqrt{T}} \right). \nonumber
\end{equation}
Under the bandwidth conditions $m_T\eta_T^{\mu/2} \to 0$ and $m_T/\sqrt{T} \to 0$, we have $\|\widehat{\boldsymbol{\Omega}}_{\mathrm{HAC}} - \widetilde{\boldsymbol{\Omega}}_{\mathrm{HAC}}\|_F = o_p(1)$. Combining this with \eqref{eq:oracle_hac_consistency} in \eqref{eq:hac_triangle} establishes $\widehat{\boldsymbol{\Omega}}_{\mathrm{HAC}} \stackrel{p}{\longrightarrow} \boldsymbol{\Omega}_{\mathrm{LR}}$. 

Finally, Slutsky's theorem and the convergence $\widehat{\boldsymbol{\Sigma}}_{\widehat{\mathbf{W}}}^{-1} \stackrel{p}{\longrightarrow} \boldsymbol{\Sigma}_{\mathbf{W}}^{-1}$ yield
\[
\widehat{\mathbf{V}}_{\mathrm{HAC}} = \widehat{\boldsymbol{\Sigma}}_{\widehat{\mathbf{W}}}^{-1}\widehat{\boldsymbol{\Omega}}_{\mathrm{HAC}}\widehat{\boldsymbol{\Sigma}}_{\widehat{\mathbf{W}}}^{-1} \stackrel{p}{\longrightarrow} \boldsymbol{\Sigma}_{\mathbf{W}}^{-1}\boldsymbol{\Omega}_{\mathrm{LR}}\boldsymbol{\Sigma}_{\mathbf{W}}^{-1} = \mathbf{V}_{\mathrm{or}},
\]
which completes the proof.
\end{proof}

\subsection{Proof of Theorem 3.4}
To prove Theorem 3.4, we first present a lemma.
\begin{lemma}[Uniform consistency of the indexed HAC estimator]
\label{lem:uniform_hac_test}
Suppose that Assumption 2.1 and 3.7 hold with fixed RFF dimension $L$ and compact parameter space $\boldsymbol{\Gamma} \subset \mathbb{R}^L$. Let the Bartlett bandwidth satisfy $m_T \to \infty$ with $m_T / \sqrt{T} \to 0$ as $T \to \infty$. Under $H_0$,
\begin{equation}
\label{eq:uniform_hac_consistency}
\sup_{\bgamma \in \boldsymbol{\Gamma}} \big\| \widehat{\mathbf{V}}_T(\bgamma) - \mathbf{V}_0(\bgamma) \big\|_F = o_p(1).
\end{equation}
\end{lemma}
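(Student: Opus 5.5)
Throughout I read $\widehat{\mathbf V}_T(\bgamma)$ as the studentizing matrix that turns the Wald form \eqref{eq:pointwise_wald} into $\mathbb G_T(\bgamma)^\top\widehat{\mathbf V}_T(\bgamma)^{-1}\mathbb G_T(\bgamma)$. Concretely, it is the Bartlett HAC estimator built from the estimated partialled-out scores
\[
\widehat{\bxi}_t(\bgamma)=\widehat{\widetilde{\bx}}_t(\bgamma)\,\widehat\varepsilon_t(\bgamma),
\qquad
\widehat{\widetilde{\bx}}_t(\bgamma)=D_t(\bgamma)\bx_t-\widehat{\mathbf A}_{10}(\bgamma)\widehat{\mathbf A}_{00}^{-1}\bx_t .
\]
Equivalently, $\widehat{\mathbf V}_T(\bgamma)=\widehat{\mathbf M}(\bgamma)\,\mathbf R\,\widehat{\mathbf A}_T(\bgamma)^{-1}\widehat{\boldsymbol\Omega}_{\mathrm{HAC}}(\bgamma)\widehat{\mathbf A}_T(\bgamma)^{-1}\mathbf R^\top\widehat{\mathbf M}(\bgamma)$ with $\widehat{\mathbf M}=\widehat{\mathbf A}_{11}-\widehat{\mathbf A}_{10}\widehat{\mathbf A}_{00}^{-1}\widehat{\mathbf A}_{10}$. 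The plan is to mimic the proof of Theorem~3.3, with every bound made uniform over the compact $\boldsymbol\Gamma$. Uniformity is available because for fixed $L$ the classes $\{D_t(\bgamma)\}$ and their products with $\bx_t\bx_t^\top$, $\bx_t\varepsilon_t$ are VC-type. The argument splits as
\[
\widehat{\mathbf V}_T-\mathbf V_0=(\widehat{\mathbf V}_T-\widetilde{\mathbf V}_T)+(\widetilde{\mathbf V}_T-\mathbf V_0),
\]
where $\widetilde{\mathbf V}_T(\bgamma)$ is the infeasible HAC estimator built from the true scores $\bxi_t(\bgamma)=\widetilde{\bx}_t(\bgamma)\varepsilon_t$, which use population partialling and true errors.

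\textbf{Step 1 (uniform Gram and coefficient control).} A uniform ergodic theorem over the VC-type class $\{D(\bgamma)\bx\bx^\top\}$ with integrable envelope $\|\bx\|_2^2$, under geometric mixing, gives
\[
\sup_{\bgamma}\|\widehat{\mathbf A}_T(\bgamma)-\mathbf A(\bgamma)\|=o_p(1).
\]
Together with Assumption~\ref{ass:test}(iii) and the uniform bound on $p_{U\mid\bx,\bz}$, which makes $\bgamma\mapsto\mathbf A(\bgamma)$ continuous, this yields $\sup_{\bgamma}\|\widehat{\mathbf A}_T(\bgamma)^{-1}\|=O_p(1)$ and $\sup_{\bgamma}\|\widehat{\mathbf A}_{10}\widehat{\mathbf A}_{00}^{-1}-\mathbf A_{10}\mathbf A_{00}^{-1}\|=o_p(1)$. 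Under $H_0$ we have $Y_t=\balpha_0^\top\bx_t+\varepsilon_t$, so $\widehat\btheta(\bgamma)-\btheta_0=\widehat{\mathbf A}_T(\bgamma)^{-1}T^{-1}\sum_t\mathbf W_t(\bgamma)\varepsilon_t$. The stochastic equicontinuity in Assumption~\ref{ass:test}(iv) then gives
\[
\sup_{\bgamma}\|\widehat\btheta(\bgamma)-\btheta_0\|=O_p(T^{-1/2}).
\]

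\textbf{Step 2 (score perturbation).} Write $\widehat\varepsilon_t(\bgamma)-\varepsilon_t=-\mathbf W_t(\bgamma)^\top(\widehat\btheta(\bgamma)-\btheta_0)$ and split $\boldsymbol\Delta_t(\bgamma)=\widehat{\bxi}_t(\bgamma)-\bxi_t(\bgamma)$ into two pieces. The first is a partialling-error piece, bounded by $\|\bx_t\|_2|\varepsilon_t|$ times a uniformly $o_p(1)$ matrix. The second is a residual piece, bounded by $C\|\bx_t\|_2^2\sup_{\bgamma}\|\widehat\btheta-\btheta_0\|$. Hence
\[
\sup_{\bgamma}T^{-1}\sum_t\|\boldsymbol\Delta_t(\bgamma)\|^2=o_p(1).
\]
More precisely, the partialling matrix error is a uniform mixing-LLN error, of order $O_p(T^{-1/2})$ by the same empirical-process bound. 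This gives $\sup_{\bgamma}T^{-1}\sum_t\|\boldsymbol\Delta_t\|^2=O_p(T^{-1})$. Repeating the Cauchy--Schwarz lag bounds \eqref{eq:hac_cross_bound}--\eqref{eq:hac_quadratic_bound} from Theorem~3.3 uniformly in $\bgamma$ and summing over the $m_T$ lags with Bartlett weights bounded by one gives
\[
\sup_{\bgamma}\|\widehat{\mathbf V}_T-\widetilde{\mathbf V}_T\|_F=O_p(m_T/\sqrt T)=o_p(1).
\]

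\textbf{Step 3 (uniform oracle HAC consistency; main obstacle).} I would split $\widetilde{\mathbf V}_T-\mathbf V_0$ into a bias part and a variance part.
\begin{itemize}
\item \emph{Bias.} The bias is $\sum_j\{1-\kappa(j/(m_T+1))\}\boldsymbol\Gamma_j(\bgamma)$ plus the tail over $|j|>m_T$. Davydov's inequality with the $\bgamma$-free envelope $\|\bx_t\|_2|\varepsilon_t|\in L^{2+\delta/2}$ bounds $\|\boldsymbol\Gamma_j(\bgamma)\|\le C\rho^{cj}$ uniformly in $\bgamma$. Dominated convergence then makes the bias vanish uniformly.
\item \emph{Variance.} I expect the uniform variance part to be the main difficulty, since a naive union over lags and a grid costs $m_T\sqrt{\log T/T}$. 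I would first try to invoke the ``uniform HAC approximation'' clause of Assumption~\ref{ass:test}(iv) directly, as that is precisely what it is meant to provide.
\item \emph{Fallback for the variance.} Cover $\boldsymbol\Gamma$ by a finite $\epsilon$-net with polynomially many points, and use pointwise Bartlett consistency at each net point. For the modulus, bound $\sup_{\|\bgamma-\bgamma'\|\le\epsilon}\|\widetilde{\mathbf V}_T(\bgamma)-\widetilde{\mathbf V}_T(\bgamma')\|$ by $m_T$ times $T^{-1}\sum_t\|\bx_t\|^2\varepsilon_t^2\,\ind\{D_t(\bgamma)\ne D_t(\bgamma')\}$ plus a continuity term in $\mathbf A_{10}$. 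The indicator average is controlled uniformly by its expectation $O(\epsilon)$, via the bounded density in Assumption~\ref{ass:test}(i), plus a VC-uniform deviation.
\end{itemize}
Combining Steps 2 and 3 yields \eqref{eq:uniform_hac_consistency}.
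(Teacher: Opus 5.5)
Your proposal is correct and follows the paper's proof. It uses the same split into a feasible-minus-oracle HAC term, handled by an $O_p(T^{-1})$ uniform mean-square bound on the score perturbation, Cauchy--Schwarz lag bounds and summation over $m_T$ lags to give $O_p(m_T/\sqrt{T})$, and an oracle term, handled by a Davydov-based uniform bias bound plus a direct appeal to Assumption~\ref{ass:test}(iv) for the uniform variance; your use of the unrestricted residuals $\widehat\varepsilon_t(\bgamma)$ instead of the paper's restricted ones only adds the harmless step $\sup_{\bgamma}\|\widehat\btheta(\bgamma)-\btheta_0\|=O_p(T^{-1/2})$. Your $\epsilon$-net fallback would not close on its own, since the modulus carries a factor $m_T\to\infty$, which forces $\epsilon=\epsilon_T\to0$ and a growing net over which pointwise Bartlett consistency cannot control the maximum without a deviation rate that survives the union bound; like the paper, though, you do not need it once (iv) is invoked.
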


\begin{proof}[Proof of Lemma~\ref{lem:uniform_hac_test}]
Let $\mathbf{m}_t(\bgamma) \coloneqq \widetilde{\bx}_t(\bgamma)\varepsilon_t$ and $\widehat{\mathbf{m}}_t(\bgamma) \coloneqq \widehat{\widetilde{\bx}}_t(\bgamma)\widehat\varepsilon_t$. Define the oracle sample autocovariance $\widetilde{\boldsymbol{\Gamma}}_j(\bgamma) \coloneqq \frac{1}{T}\sum_{t=j+1}^T \mathbf{m}_t(\bgamma)\mathbf{m}_{t-j}(\bgamma)^\top$, and let $\widetilde{\mathbf{V}}_T(\bgamma)$ denote the corresponding infeasible Bartlett HAC estimator constructed from $\widetilde{\boldsymbol{\Gamma}}_j(\bgamma)$. By the triangle inequality,
\begin{equation}
\label{eq:test_hac_decomposition}
\sup_{\bgamma \in \boldsymbol{\Gamma}} \|\widehat{\mathbf{V}}_T(\bgamma) - \mathbf{V}_0(\bgamma)\|_F \le \sup_{\bgamma \in \boldsymbol{\Gamma}} \|\widehat{\mathbf{V}}_T(\bgamma) - \widetilde{\mathbf{V}}_T(\bgamma)\|_F + \sup_{\bgamma \in \boldsymbol{\Gamma}} \|\widetilde{\mathbf{V}}_T(\bgamma) - \mathbf{V}_0(\bgamma)\|_F.
\end{equation}

\medskip
\noindent\textbf{Step 1: Uniform Consistency of the Oracle HAC Estimator.}
Let $\boldsymbol{\Gamma}_j(\bgamma) \coloneqq \mathbb{E}[\mathbf{m}_t(\bgamma)\mathbf{m}_{t-j}(\bgamma)^\top]$. By Assumption~2.1 and the moment condition in
Assumption~3.7(ii), geometric $\alpha$-mixing and Davydov's covariance inequality~\citep{davydov1968convergence} ensure that $\sum_{j=-\infty}^\infty \sup_{\bgamma \in \boldsymbol{\Gamma}} \|\boldsymbol{\Gamma}_j(\bgamma)\|_F < \infty$. Hence, the kernel truncation and smoothing bias vanish uniformly: $\sup_{\bgamma \in \boldsymbol{\Gamma}} \|\mathbb{E}[\widetilde{\mathbf{V}}_T(\bgamma)] - \mathbf{V}_0(\bgamma)\|_F = o(1)$ as $m_T \to \infty$.

Because $L$ is fixed, the score family
$\mathcal M=\{\mathbf m_t(\bgamma):\bgamma\in\boldsymbol{\Gamma}\}$
and the associated lagged cross-product classes have
fixed-dimensional VC-type complexity~\citep{van1996weak}. By the empirical-process regularity condition in
Assumption~3.7(iv), together with
Assumptions~2.1 and 3.7(ii), the corresponding uniform
maximal inequality yields
\[
\sup_{\bgamma\in\boldsymbol{\Gamma}}
\left\|
\sum_{j=0}^{m_T}
k\left(\frac{j}{m_T+1}\right)
\left\{
\widetilde{\boldsymbol{\Gamma}}_j(\bgamma)
-
\boldsymbol{\Gamma}_j(\bgamma)
\right\}
\right\|_F
=
O_p\left(\frac{m_T}{\sqrt T}\right),
\]
where the last equality follows from
Assumption~3.7(v).
Together with the uniform truncation and kernel-weighting bias,
this gives
\begin{equation}
\label{eq:test_oracle_hac_uniform}
\sup_{\bgamma\in\boldsymbol{\Gamma}}
\|
\widetilde{\mathbf V}_T(\bgamma)
-
\mathbf V_0(\bgamma)
\|_F
=
O_p\left(\frac{m_T}{\sqrt T}\right)+o(1)
=
o_p(1).
\end{equation}

\medskip
\noindent\textbf{Step 2: Uniform Control of Feasible Score Perturbations.}
Define $\boldsymbol{\Delta}_t(\bgamma) \coloneqq \widehat{\mathbf{m}}_t(\bgamma) - \mathbf{m}_t(\bgamma)$. Substituting the restricted residual $\widehat\varepsilon_t = \varepsilon_t - \bx_t^\top(\widehat\balpha - \balpha_0)$ gives
\[
\boldsymbol{\Delta}_t(\bgamma) = \big[ \widehat{\widetilde{\bx}}_t(\bgamma) - \widetilde{\bx}_t(\bgamma) \big] \varepsilon_t - \widehat{\widetilde{\bx}}_t(\bgamma)\bx_t^\top(\widehat\balpha - \balpha_0).
\]
Let $r_T \coloneqq \sup_{\bgamma \in \boldsymbol{\Gamma}} \|\widehat{\mathbf{A}}_{10}(\bgamma)\widehat{\mathbf{A}}_{00}^{-1} - \mathbf{A}_{10}(\bgamma)\mathbf{A}_{00}^{-1}\|_{\mathrm{op}}$. The  restricted OLS estimator satisfies $\|\widehat\balpha - \balpha_0\|_2 = \mathcal{O}_p(T^{-1/2})$. Since $\sup_{\bgamma \in \boldsymbol{\Gamma}} \|\widehat{\widetilde{\bx}}_t(\bgamma)\|_2 \le C_T\|\bx_t\|_2$ with $C_T = \mathcal{O}_p(1)$, the triangle inequality gives
\[
\sup_{\bgamma \in \boldsymbol{\Gamma}} \|\boldsymbol{\Delta}_t(\bgamma)\|_2 \le r_T \|\bx_t\|_2 |\varepsilon_t| + C_T \|\bx_t\|_2^2 \|\widehat\balpha - \balpha_0\|_2.
\]
By Assumption~3.7(iv), the uniform empirical-process bound for the fixed-dimensional
VC-type class,
\[
\sup_{\bgamma\in\boldsymbol{\Gamma}}
\|
\widehat{\mathbf A}_{10}(\bgamma)
-
\mathbf A_{10}(\bgamma)
\|_{\mathrm{op}}
=
O_p(T^{-1/2}),
\]
while
\[
\|
\widehat{\mathbf A}_{00}
-
\mathbf A_{00}
\|_{\mathrm{op}}
=
O_p(T^{-1/2}).
\]
By Assumption~3.7(iii),
$\mathbf A_{00}$ is positive definite. Hence,
together with $\|\widehat{\mathbf A}_{00}-\mathbf A_{00}\|_{\mathrm{op}}
=O_p(T^{-1/2})$, it follows that
\[
r_T
=
\sup_{\bgamma\in\boldsymbol{\Gamma}}
\|
\widehat{\mathbf A}_{10}(\bgamma)
\widehat{\mathbf A}_{00}^{-1}
-
\mathbf A_{10}(\bgamma)
\mathbf A_{00}^{-1}
\|_{\mathrm{op}}
=
O_p(T^{-1/2}).
\]
Averaging over $t$ and invoking the ergodic theorem yields
\begin{equation}
\label{eq:test_delta_uniform}
\sup_{\bgamma \in \boldsymbol{\Gamma}} \frac{1}{T}\sum_{t=1}^T \|\boldsymbol{\Delta}_t(\bgamma)\|_2^2 = \mathcal{O}_p(T^{-1}).
\end{equation}

\medskip
\noindent\textbf{Step 3: Uniform Bound on Estimation Discrepancies.}
Decompose the empirical lag difference as
\[
\widehat{\boldsymbol{\Gamma}}_j(\bgamma) - \widetilde{\boldsymbol{\Gamma}}_j(\bgamma) = \frac{1}{T}\sum_{t=j+1}^T \Big\{ \boldsymbol{\Delta}_t(\bgamma)\mathbf{m}_{t-j}(\bgamma)^\top + \mathbf{m}_t(\bgamma)\boldsymbol{\Delta}_{t-j}(\bgamma)^\top + \boldsymbol{\Delta}_t(\bgamma)\boldsymbol{\Delta}_{t-j}(\bgamma)^\top \Big\}.
\]
Applying the Cauchy--Schwarz inequality along with \eqref{eq:test_delta_uniform} and $\sup_{\bgamma} \frac{1}{T}\sum_{t=1}^T \|\mathbf{m}_t(\bgamma)\|_2^2 = \mathcal{O}_p(1)$ gives
\[
\sup_{0 \le j \le m_T} \sup_{\bgamma \in \boldsymbol{\Gamma}} \big\| \widehat{\boldsymbol{\Gamma}}_j(\bgamma) - \widetilde{\boldsymbol{\Gamma}}_j(\bgamma) \big\|_F = \mathcal{O}_p(T^{-1/2}).
\]
Because the Bartlett weights are bounded by 1, summation over lags yields
\begin{equation}
\label{eq:test_diff_hac_uniform}
\sup_{\bgamma \in \boldsymbol{\Gamma}} \|\widehat{\mathbf{V}}_T(\bgamma) - \widetilde{\mathbf{V}}_T(\bgamma)\|_F \le 2\sum_{j=0}^{m_T} \sup_{\bgamma \in \boldsymbol{\Gamma}} \|\widehat{\boldsymbol{\Gamma}}_j(\bgamma) - \widetilde{\boldsymbol{\Gamma}}_j(\bgamma)\|_F = \mathcal{O}_p(m_T / \sqrt{T}) = o_p(1).
\end{equation}
Combining \eqref{eq:test_oracle_hac_uniform} and \eqref{eq:test_diff_hac_uniform} in \eqref{eq:test_hac_decomposition} proves \eqref{eq:uniform_hac_consistency}.
\end{proof}

\begin{proof}[Proof of Theorem~3.4]
Under the null hypothesis $H_0: \bbeta_0 = \mathbf{0}$, the structural model reduces to $Y_t = \bx_t^\top\balpha_0 + \varepsilon_t$. Because the random-feature dimension $L$ is fixed for the existence test, the parameter space $\boldsymbol{\Gamma} \subset \mathbb{R}^L$ is a compact subset of Euclidean space. The proof proceeds in two steps: establishing the functional central limit theorem (FCLT) for the score process, and deriving the limiting distribution of the supremum Wald statistic.

\medskip
\noindent\textbf{Step 1: Functional Central Limit Theorem for the Score Process.}
Let $D_t(\bgamma) \coloneqq \ind\{\bgamma^\top\bpsi_L(\bz_t) \ge U_t\}$, $\mathbf{A}_{00} \coloneqq \mathbb{E}[\bx_t\bx_t^\top]$, and $\mathbf{A}_{10}(\bgamma) \coloneqq \mathbb{E}[D_t(\bgamma)\bx_t\bx_t^\top]$. The population partialled-out regressor is $\widetilde{\bx}_t(\bgamma) \coloneqq D_t(\bgamma)\bx_t - \mathbf{A}_{10}(\bgamma)\mathbf{A}_{00}^{-1}\bx_t$. The empirical score process admits the normalized representation
\[
\mathbb{G}_T(\bgamma) \coloneqq \frac{1}{\sqrt{T}}\sum_{t=1}^T \mathbf{m}_t(\bgamma), \qquad \text{where} \quad \mathbf{m}_t(\bgamma) \coloneqq \widetilde{\bx}_t(\bgamma)\varepsilon_t.
\]
The conditional moment restriction $\mathbb{E}(\varepsilon_t \mid \bx_t, \bz_t, U_t) = 0$ implies $\mathbb{E}[\mathbf{m}_t(\bgamma)] = \mathbf{0}$ for every $\bgamma \in \boldsymbol{\Gamma}$. We establish weak convergence $\mathbb{G}_T(\cdot) \rightsquigarrow \mathbb{G}_\infty(\cdot)$ in $\ell^\infty(\boldsymbol{\Gamma})$ by verifying finite-dimensional convergence and stochastic equicontinuity:

\begin{enumerate}[label=(\alph*), leftmargin=2em]
    \item \textbf{Finite-dimensional convergence.} 
    Fix an arbitrary collection of indices $\bgamma_1, \dots, \bgamma_k \in \boldsymbol{\Gamma}$ and non-zero projection vectors $\boldsymbol{\lambda}_1, \dots, \boldsymbol{\lambda}_k \in \mathbb{R}^{d_x}$. Define the linear combination
    \[
    S_T \coloneqq \sum_{j=1}^k \boldsymbol{\lambda}_j^\top\mathbb{G}_T(\bgamma_j) = \frac{1}{\sqrt{T}}\sum_{t=1}^T \zeta_t, \qquad \text{where} \quad \zeta_t \coloneqq \sum_{j=1}^k \boldsymbol{\lambda}_j^\top\mathbf{m}_t(\bgamma_j).
    \]
    Since $D_t(\bgamma) \in \{0, 1\}$ and $\sup_{\bgamma \in \boldsymbol{\Gamma}} \|\mathbf{A}_{10}(\bgamma)\mathbf{A}_{00}^{-1}\|_{\mathrm{op}} < \infty$, the envelope condition $\|\mathbf{m}_t(\bgamma)\|_2 \le C\|\bx_t\|_2|\varepsilon_t|$ holds uniformly over $\boldsymbol{\Gamma}$, implying $\mathbb{E}|\zeta_t|^{2+\delta/2} < \infty$. By Assumption~3.7(ii),
the envelope bound
\[
\|\mathbf m_t(\bgamma)\|_2
\le C\|\bx_t\|_2|\varepsilon_t|
\]
implies
$\mathbb E|\zeta_t|^{2+\delta/2}<\infty$.
Together with the geometric $\alpha$-mixing condition in
Assumption~2.1, Davydov's covariance inequality implies
absolute summability of the covariance kernel
    \[
    \mathcal{K}(\bgamma, \bgamma') \coloneqq \sum_{m=-\infty}^{\infty} \mathbb{E}\big[\mathbf{m}_t(\bgamma)\mathbf{m}_{t-m}(\bgamma')^\top\big]
    \]
    is absolutely summable. The central limit theorem for strongly mixing stationary sequences~\citep{ibragimov1975independent} gives $S_T \rightsquigarrow \mathcal{N}(0, \sigma_{\boldsymbol{\lambda}}^2)$ with $\sigma_{\boldsymbol{\lambda}}^2 \coloneqq \sum_{j=1}^k\sum_{\ell=1}^k \boldsymbol{\lambda}_j^\top \mathcal{K}(\bgamma_j, \bgamma_\ell)\boldsymbol{\lambda}_\ell$. By the Cram\'er--Wold theorem, all finite-dimensional distributions of $\mathbb{G}_T(\cdot)$ converge to those of a mean-zero Gaussian process $\mathbb{G}_\infty(\cdot)$ with covariance kernel $\mathcal{K}(\bgamma, \bgamma')$.

    \item \textbf{Stochastic equicontinuity.} 
    Because $L$ is fixed, the indicator class $\mathcal{D}_L \coloneqq \{ (\bz, u) \mapsto \ind\{\bgamma^\top\bpsi_L(\bz) \ge u\} : \bgamma \in \boldsymbol{\Gamma} \}$ is a VC-subgraph class of finite VC dimension~\citep{van1996weak}. By the threshold-density condition in
Assumption~3.7(i) and the boundedness of
$\bpsi_L(\cdot)$, $\mathbb{P}\{D_t(\bgamma) \neq D_t(\bgamma')\} \to 0$ as $\|\bgamma - \bgamma'\|_2 \to 0$. Moreover, the matrix-valued map $\bgamma \mapsto \mathbf{A}_{10}(\bgamma)\mathbf{A}_{00}^{-1}$ inherits $L_2(\mathbb{P})$ continuity via
    \[
    \|\mathbf{A}_{10}(\bgamma) - \mathbf{A}_{10}(\bgamma')\|_{\mathrm{op}} \le \mathbb{E}\big[ \|\bx_t\|_2^2 |D_t(\bgamma) - D_t(\bgamma')| \big].
    \]
    Consequently, the score class $\mathcal{F} \coloneqq \{ (\bx, \bz, u, \varepsilon) \mapsto [D(\bgamma)\mathbf{I}_{d_x} - \mathbf{A}_{10}(\bgamma)\mathbf{A}_{00}^{-1}]\bx\varepsilon : \bgamma \in \boldsymbol{\Gamma} \}$ is a VC-type class with polynomial entropy and square-integrable envelope $F(\bx, \varepsilon) = C\|\bx\|_2|\varepsilon|$ satisfying $\mathbb{E}[F^{2+\delta/2}] < \infty$. Together with the moment condition in Assumption~3.7(ii), this yields an integrable envelope.
The empirical-process regularity condition in
Assumption~3.7(iv), together with
Assumption~2.1, then establishes stochastic
equicontinuity.
\end{enumerate}
Combining finite-dimensional convergence and stochastic equicontinuity yields
\begin{equation}
\label{eq:supwald_score_fclt}
\mathbb{G}_T(\cdot) \rightsquigarrow \mathbb{G}_\infty(\cdot) \qquad \text{in } \ell^\infty(\boldsymbol{\Gamma}).
\end{equation}

\medskip
\noindent\textbf{Step 2: Asymptotics of the Sup-Wald Statistic.}
Define the sample Gram matrices $\widehat{\mathbf{A}}_{00} \coloneqq \frac{1}{T}\sum_{t=1}^T \bx_t\bx_t^\top$ and $\widehat{\mathbf{A}}_{10}(\bgamma) \coloneqq \frac{1}{T}\sum_{t=1}^T D_t(\bgamma)\bx_t\bx_t^\top$, alongside the sample partialled-out regressor $\widehat{\widetilde{\bx}}_t(\bgamma) \coloneqq D_t(\bgamma)\bx_t - \widehat{\mathbf{A}}_{10}(\bgamma)\widehat{\mathbf{A}}_{00}^{-1}\bx_t$. Denoting the restricted OLS residual by $\widehat\varepsilon_t \coloneqq Y_t - \bx_t^\top\widehat\balpha$, the feasible score process is $\widetilde{\mathbb{G}}_T(\bgamma) \coloneqq \frac{1}{\sqrt{T}}\sum_{t=1}^T \widehat{\widetilde{\bx}}_t(\bgamma)\widehat\varepsilon_t$.

Under $H_0$, $\widehat\varepsilon_t = \varepsilon_t - \bx_t^\top(\widehat\balpha - \balpha_0)$. By algebraic orthogonality,
\[
\frac{1}{T}\sum_{t=1}^T \widehat{\widetilde{\bx}}_t(\bgamma)\bx_t^\top = \widehat{\mathbf{A}}_{10}(\bgamma) - \widehat{\mathbf{A}}_{10}(\bgamma)\widehat{\mathbf{A}}_{00}^{-1}\widehat{\mathbf{A}}_{00} = \mathbf{0}
\]
for all $\bgamma \in \boldsymbol{\Gamma}$. Hence, the parameter estimation error drops out identically:
\[
\widetilde{\mathbb{G}}_T(\bgamma) = \frac{1}{\sqrt{T}}\sum_{t=1}^T \widehat{\widetilde{\bx}}_t(\bgamma)\varepsilon_t.
\]
The discrepancy between the feasible and oracle score processes therefore reduces to
\begin{equation}
\label{eq:supwald_score_expansion}
\widetilde{\mathbb{G}}_T(\bgamma) - \mathbb{G}_T(\bgamma) = -\big\{ \widehat{\mathbf{A}}_{10}(\bgamma)\widehat{\mathbf{A}}_{00}^{-1} - \mathbf{A}_{10}(\bgamma)\mathbf{A}_{00}^{-1} \big\} \frac{1}{\sqrt{T}}\sum_{t=1}^T \bx_t\varepsilon_t.
\end{equation}
By Assumption~3.7(iv), $\mathcal{D}_L$ is a fixed-dimensional VC class. Furthermore, the uniform law of large numbers gives $\sup_{\bgamma \in \boldsymbol{\Gamma}} \|\widehat{\mathbf{A}}_{10}(\bgamma) - \mathbf{A}_{10}(\bgamma)\|_{\mathrm{op}} = o_p(1)$. By Assumption~3.7(iii),
$\mathbf A_{00}$ is positive definite, so that $\widehat{\mathbf{A}}_{00}^{-1} - \mathbf{A}_{00}^{-1} = o_p(1)$. Moreover, Assumptions~2.1 and
3.7(ii) imply  $\frac{1}{\sqrt{T}}\sum_{t=1}^T \bx_t\varepsilon_t = \mathcal{O}_p(1)$. Together with  $\sup_{\bgamma \in \boldsymbol{\Gamma}} \|\mathbf{A}_{10}(\bgamma)\|_{\mathrm{op}} < \infty$, \eqref{eq:supwald_score_expansion} implies
\begin{equation}
\label{eq:supwald_score_equiv}
\sup_{\bgamma \in \boldsymbol{\Gamma}} \|\widetilde{\mathbb{G}}_T(\bgamma) - \mathbb{G}_T(\bgamma)\|_2 = o_p(1).
\end{equation}

Furthermore, by Lemma~\ref{lem:uniform_hac_test}, $\sup_{\bgamma \in \boldsymbol{\Gamma}} \|\widehat{\mathbf{V}}_T(\bgamma) - \mathbf{V}_0(\bgamma)\|_F = o_p(1)$. By the uniform non-degeneracy condition in
Assumption~3.7(iii),
\[
\inf_{\bgamma\in\boldsymbol{\Gamma}}
\lambda_{\min}\{\mathbf V_0(\bgamma)\}>0.
\]
Therefore, matrix inversion is uniformly continuous, yielding
\begin{equation}
\label{eq:supwald_cov_uniform}
\sup_{\bgamma \in \boldsymbol{\Gamma}} \|\widehat{\mathbf{V}}_T(\bgamma)^{-1} - \mathbf{V}_0(\bgamma)^{-1}\|_{\mathrm{op}} = o_p(1).
\end{equation}

By the Frisch--Waugh--Lovell theorem~\citep{frisch1933partial,lovell1963seasonal},
 together with the block-matrix
representation of the sandwich covariance estimator, the pointwise
Wald statistic admits the equivalent partialled-out score representation
$\mathcal{W}_T(\bgamma) = \widetilde{\mathbb{G}}_T(\bgamma)^\top \widehat{\mathbf{V}}_T(\bgamma)^{-1} \widetilde{\mathbb{G}}_T(\bgamma)$. combining \eqref{eq:supwald_score_fclt}, \eqref{eq:supwald_score_equiv}, and \eqref{eq:supwald_cov_uniform}, the continuous mapping theorem under the uniform norm yields
\[
\mathrm{Sup}\mathcal{W}_T \coloneqq \sup_{\bgamma \in \boldsymbol{\Gamma}} \mathcal{W}_T(\bgamma) \rightsquigarrow \sup_{\bgamma \in \boldsymbol{\Gamma}} \mathbb{G}_\infty(\bgamma)^\top \mathbf{V}_0(\bgamma)^{-1} \mathbb{G}_\infty(\bgamma),
\]
which completes the proof of Theorem~3.4.
\end{proof}
\subsection{Proof of Theorem 3.5}
\begin{proof}[Proof of Theorem~3.5]
Under $H_0: \bbeta_0 = \mathbf{0}$, the structural model simplifies to $Y_t = \bx_t^\top\balpha_0 + \varepsilon_t$. Let $\widehat{\mathbf{m}}_t(\bgamma) \coloneqq \widehat{\widetilde{\bx}}_t(\bgamma)\widehat\varepsilon_t$ denote the empirical score, where $\widehat{\widetilde{\bx}}_t(\bgamma)$ is the sample partialled-out regressor and $\widehat\varepsilon_t \coloneqq Y_t - \bx_t^\top\widehat\balpha$ is the restricted OLS residual. The dependent multiplier bootstrap score process is
\[
\mathbb{G}_T^*(\bgamma) \coloneqq \frac{1}{\sqrt{T}} \sum_{t=1}^T \widehat{\mathbf{m}}_t(\bgamma)\eta_t^*, \qquad \bgamma \in \boldsymbol{\Gamma},
\]
where, conditional on the sample $\mathcal{W}_T = \{(Y_t, \bx_t, \bz_t, U_t)\}_{t=1}^T$, the Gaussian multiplier vector $\boldsymbol{\eta}^* = (\eta_1^*, \dots, \eta_T^*)^\top \sim \mathcal{N}(\mathbf{0}, \mathbf{\Omega}^*)$ satisfies $\mathbb{E}^*[\eta_t^*\eta_s^*] = \kappa((t-s)/b_T)$.

\medskip
\noindent\textbf{Step 1: Conditional Weak Convergence of the Bootstrap Process.}
Conditional on $\mathcal{W}_T$, $\mathbb{G}_T^*(\cdot)$ is a mean-zero Gaussian process with conditional covariance kernel
\begin{equation}
\label{eq:bootstrap_cov_kernel}
\widehat{\mathcal{K}}_T^*(\bgamma, \bgamma') \coloneqq \mathbb{E}^*\big[\mathbb{G}_T^*(\bgamma)\mathbb{G}_T^*(\bgamma')^\top\big] = \frac{1}{T}\sum_{t=1}^T\sum_{s=1}^T \kappa\left(\frac{t-s}{b_T}\right) \widehat{\mathbf{m}}_t(\bgamma)\widehat{\mathbf{m}}_s(\bgamma')^\top, \nonumber
\end{equation}
which represents a kernel-weighted HAC estimator of the long-run cross-covariance kernel $\mathcal{K}(\bgamma, \bgamma') \coloneqq \sum_{j=-\infty}^\infty \mathbb{E}[\mathbf{m}_t(\bgamma)\mathbf{m}_{t-j}(\bgamma')^\top]$, with $\mathbf{m}_t(\bgamma) \coloneqq \widetilde{\bx}_t(\bgamma)\varepsilon_t$.

Because $L$ is fixed, the score family $\{\mathbf{m}_t(\bgamma) : \bgamma \in \boldsymbol{\Gamma}\}$ forms a VC-type class of fixed dimension. Under the threshold-density condition in Assumption~3.7(i), the moment condition in Assumption~3.7(ii), the empirical-process regularity in Assumption~3.7(iv), and the bandwidth condition in Assumption~3.7(v), together with geometric $\alpha$-mixing (Assumption~2.1), the oracle-HAC and feasible-score perturbation bounds from Lemma~\ref{lem:uniform_hac_test} apply jointly over $\boldsymbol{\Gamma} \times \boldsymbol{\Gamma}$, establishing
\begin{equation}
\label{eq:bootstrap_kernel_consistency}
\sup_{\bgamma, \bgamma' \in \boldsymbol{\Gamma}} \big\| \widehat{\mathcal{K}}_T^*(\bgamma, \bgamma') - \mathcal{K}(\bgamma, \bgamma') \big\|_F = o_p(1).
\end{equation}
Combining \eqref{eq:bootstrap_kernel_consistency} with the conditional asymptotic equicontinuity guaranteed by the dependent-multiplier functional central limit theorem \citep[e.g.,][]{shao2010dependent} under Assumptions~3.7(ii), 3.7(iv), and 2.1, we obtain
\begin{equation}
\label{eq:bootstrap_fclt}
\mathbb{G}_T^*(\cdot) \rightsquigarrow_{\mathbb{P}} \mathbb{G}_\infty(\cdot) \qquad \text{in } \ell^\infty(\boldsymbol{\Gamma}; \mathbb{R}^{d_x}), \nonumber
\end{equation}
where $\rightsquigarrow_{\mathbb{P}}$ denotes conditional weak convergence in probability, and $\mathbb{G}_\infty(\cdot)$ is the centered Gaussian limit process from Theorem~3.4.

\medskip
\noindent\textbf{Step 2: Conditional Convergence of the Sup-Wald Statistic.}
By the bootstrap analogue of Lemma~\ref{lem:uniform_hac_test} under Assumptions~3.7(ii), 3.7(iv), 3.7(v), and 2.1, the bootstrap HAC variance estimator satisfies
\begin{equation}
\label{eq:bootstrap_hac_consistency}
\sup_{\bgamma \in \boldsymbol{\Gamma}} \|\widehat{\mathbf{V}}_T^*(\bgamma) - \mathbf{V}_0(\bgamma)\|_F = o_{\mathbb{P}^*}(1) \qquad \text{in probability}. \nonumber
\end{equation}
By the non-degeneracy condition in Assumption~3.7(iii), $\inf_{\bgamma \in \boldsymbol{\Gamma}} \lambda_{\min}\{\mathbf{V}_0(\bgamma)\} > 0$. Uniform matrix inversion then implies
\[
\sup_{\bgamma \in \boldsymbol{\Gamma}} \|\widehat{\mathbf{V}}_T^*(\bgamma)^{-1} - \mathbf{V}_0(\bgamma)^{-1}\|_{\mathrm{op}} = o_{\mathbb{P}^*}(1) \qquad \text{in probability}.
\]
Applying the conditional Continuous Mapping Theorem to the bootstrap Wald statistic $\mathcal{W}_T^*(\bgamma) \coloneqq \mathbb{G}_T^*(\bgamma)^\top \widehat{\mathbf{V}}_T^*(\bgamma)^{-1} \mathbb{G}_T^*(\bgamma)$ yields
\begin{equation}
\label{eq:bootstrap_sup_wald_limit}
\mathrm{Sup}\mathcal{W}_T^* \coloneqq \sup_{\bgamma \in \boldsymbol{\Gamma}} \mathcal{W}_T^*(\bgamma) \rightsquigarrow_{\mathbb{P}} \mathrm{Sup}\mathcal{W}_\infty \coloneqq \sup_{\bgamma \in \boldsymbol{\Gamma}} \mathbb{G}_\infty(\bgamma)^\top \mathbf{V}_0(\bgamma)^{-1} \mathbb{G}_\infty(\bgamma).
\end{equation}

\medskip
\noindent\textbf{Step 3: Uniform Distributional Approximation and Asymptotic Size.}
Let $F_\infty(z) \coloneqq \mathbb{P}(\mathrm{Sup}\mathcal{W}_\infty \le z)$ denote the asymptotic CDF. Under the continuity of $F_\infty(\cdot)$ on $(0, \infty)$ guaranteed by Assumption~3.7(vi), Polya's theorem applied to \eqref{eq:bootstrap_sup_wald_limit} and Theorem~3.4 gives
\begin{equation}
\label{eq:bootstrap_distribution_consistency}
\sup_{z \in \mathbb{R}} \big| \mathbb{P}^*(\mathrm{Sup}\mathcal{W}_T^* \le z) - F_\infty(z) \big| \stackrel{p}{\longrightarrow} 0 \quad \text{and} \quad \sup_{z \in \mathbb{R}} \big| \mathbb{P}(\mathrm{Sup}\mathcal{W}_T \le z) - F_\infty(z) \big| \longrightarrow 0. \nonumber
\end{equation}
By the triangle inequality, $\sup_{z \in \mathbb{R}} |\mathbb{P}^*(\mathrm{Sup}\mathcal{W}_T^* \le z) - \mathbb{P}(\mathrm{Sup}\mathcal{W}_T \le z)| \stackrel{p}{\longrightarrow} 0$. Consequently, the ideal conditional bootstrap $p$-value $p_{\mathrm{boot}}^{\mathrm{ideal}} \coloneqq \mathbb{P}^*(\mathrm{Sup}\mathcal{W}_T^* \ge \mathrm{Sup}\mathcal{W}_T)$ satisfies
\[
\mathbb{P}(p_{\mathrm{boot}}^{\mathrm{ideal}} \le \alpha) \longrightarrow \alpha, \qquad \forall \alpha \in (0, 1).
\]

Finally, for the Monte Carlo $p$-value $p_{\mathrm{boot}} \coloneqq \frac{1}{B_T}\sum_{b=1}^{B_T} \ind\{\mathrm{Sup}\mathcal{W}_T^{*(b)} \ge \mathrm{Sup}\mathcal{W}_T\}$, conditional independence of the bootstrap draws implies $B_T p_{\mathrm{boot}} \mid \mathcal{W}_T \sim \operatorname{Binomial}(B_T, p_{\mathrm{boot}}^{\mathrm{ideal}})$. Thus,
\[
\mathbb{E}^*\big[(p_{\mathrm{boot}} - p_{\mathrm{boot}}^{\mathrm{ideal}})^2\big] = \frac{p_{\mathrm{boot}}^{\mathrm{ideal}}(1 - p_{\mathrm{boot}}^{\mathrm{ideal}})}{B_T} \le \frac{1}{4B_T}.
\]
As $B_T \to \infty$, conditional Chebyshev's inequality guarantees $p_{\mathrm{boot}} - p_{\mathrm{boot}}^{\mathrm{ideal}} = o_p(1)$. Hence,
\[
\mathbb{P}(p_{\mathrm{boot}} \le \alpha) \longrightarrow \alpha \qquad \text{as } T \to \infty,
\]
which establishes the asymptotic validity of the Monte Carlo dependent-multiplier bootstrap test.
\end{proof} 

\subsection{Proof of Theorem 3.6}
\begin{proof}[Proof of Theorem~3.6]
Under the fixed alternative $H_1$, the data generating process is $Y_t = \bx_t^\top\balpha_0 + \bx_t^\top\bbeta_0 D_{0t} + \varepsilon_t$. By population partial orthogonality, $\mathbb{E}[\widetilde{\bx}_t(\bgamma)\bx_t^\top] = \mathbf{A}_{10}(\bgamma) - \mathbf{A}_{10}(\bgamma)\mathbf{A}_{00}^{-1}\mathbf{A}_{00} = \mathbf{0}$, where $\mathbf{A}_{00}^{-1}$ exists by Assumption~3.7(iii). Because the conditional mean restriction implies $\mathbb{E}[\widetilde{\bx}_t(\bgamma)\varepsilon_t] = \mathbf{0}$, the expected partialled-out score isolates the regime transition:
\begin{equation}
\label{eq:test_population_drift}
\mathbb{E}[\widetilde{\bx}_t(\bgamma)Y_t] = \mathbb{E}[\widetilde{\bx}_t(\bgamma)\bx_t^\top D_{0t}]\bbeta_0 = \boldsymbol{\mu}(\bgamma).
\end{equation}

Since $L$ is fixed and $\boldsymbol{\Gamma}$ is compact, Assumptions~3.7(i), (ii), and (iv) together with geometric $\alpha$-mixing (Assumption~2.1) ensure the uniform law of large numbers on the associated VC-type classes:
\[
\sup_{\bgamma\in\boldsymbol{\Gamma}} \|\widehat{\mathbf{A}}_{10}(\bgamma) - \mathbf{A}_{10}(\bgamma)\|_{\mathrm{op}} = o_p(1), \qquad \sup_{\bgamma\in\boldsymbol{\Gamma}} \left\| \frac{1}{T}\sum_{t=1}^T D_t(\bgamma)\bx_t Y_t - \mathbb{E}[D_t(\bgamma)\bx_t Y_t] \right\|_2 = o_p(1).
\]
Furthermore, $\frac{1}{T}\sum_{t=1}^T \bx_t Y_t \stackrel{p}{\longrightarrow} \mathbb{E}[\bx_t Y_t]$ and $\widehat{\mathbf{A}}_{00} \stackrel{p}{\longrightarrow} \mathbf{A}_{00}$, which implies $\widehat{\mathbf{A}}_{00}^{-1} \stackrel{p}{\longrightarrow} \mathbf{A}_{00}^{-1}$ by Assumption~3.7(iii). Expanding the sample partialled-out regressor,
\[
\frac{1}{T}\sum_{t=1}^T \widehat{\widetilde{\bx}}_t(\bgamma)Y_t = \frac{1}{T}\sum_{t=1}^T D_t(\bgamma)\bx_t Y_t - \widehat{\mathbf{A}}_{10}(\bgamma)\widehat{\mathbf{A}}_{00}^{-1}\left( \frac{1}{T}\sum_{t=1}^T \bx_t Y_t \right),
\]
Slutsky's theorem and \eqref{eq:test_population_drift} yield the uniform drift approximation
\begin{equation}
\label{eq:test_empirical_drift}
\sup_{\bgamma\in\boldsymbol{\Gamma}} \left\| \frac{1}{T}\sum_{t=1}^T \widehat{\widetilde{\bx}}_t(\bgamma)Y_t - \boldsymbol{\mu}(\bgamma) \right\|_2 = o_p(1).
\end{equation}
Specializing to $\bgamma_0$ gives $\frac{1}{T}\sum_{t=1}^T \widehat{\widetilde{\bx}}_t(\bgamma_0)Y_t = \boldsymbol{\mu}(\bgamma_0) + o_p(1)$.

Scaling the pointwise Wald statistic by $T^{-1}$ and using the partial-score representation,
\begin{align*}
\label{eq:test_pointwise_divergence}
\frac{1}{T}\mathcal{W}_T(\bgamma_0) 
&= \left\{ \frac{1}{T}\sum_{t=1}^T \widehat{\widetilde{\bx}}_t(\bgamma_0)Y_t \right\}^\top \widehat{\mathbf{V}}_T(\bgamma_0)^{-1} \left\{ \frac{1}{T}\sum_{t=1}^T \widehat{\widetilde{\bx}}_t(\bgamma_0)Y_t \right\} \nonumber \\
&= \boldsymbol{\mu}(\bgamma_0)^\top \mathbf{V}_1(\bgamma_0)^{-1} \boldsymbol{\mu}(\bgamma_0) + o_p(1),
\end{align*}
where the second equality follows from \eqref{eq:test_empirical_drift}, $\widehat{\mathbf{V}}_T(\bgamma_0) \stackrel{p}{\longrightarrow} \mathbf{V}_1(\bgamma_0)$, and $\lambda_{\min}\{\mathbf{V}_1(\bgamma_0)\} > 0$. Because $\boldsymbol{\mu}(\bgamma_0) \neq \mathbf{0}$, the quadratic form satisfies $c_0 \coloneqq \boldsymbol{\mu}(\bgamma_0)^\top \mathbf{V}_1(\bgamma_0)^{-1} \boldsymbol{\mu}(\bgamma_0) > 0$. Taking the supremum over $\boldsymbol{\Gamma}$,
\begin{equation*}
\label{eq:proof_sup_diverge}
\frac{1}{T}\mathrm{Sup}\mathcal{W}_T \ge \frac{1}{T}\mathcal{W}_T(\bgamma_0) = c_0 + o_p(1) > 0,
\end{equation*}
with probability approaching one, whence $\mathrm{Sup}\mathcal{W}_T \stackrel{p}{\longrightarrow} \infty$.

Under Assumptions~2.1 and
3.7(ii), 3.7(iv), and 3.7(v),
the bootstrap score process is conditionally tight with $\sup_{\bgamma\in\boldsymbol{\Gamma}} \|\mathbb{G}_T^*(\bgamma)\|_2 = \mathcal{O}_{\mathbb{P}^*}(1)$ in probability, while
Assumption~3.7(iii) ensures uniform non-degeneracy of the
bootstrap covariance estimator, i.e., $\sup_{\bgamma\in\boldsymbol{\Gamma}} \|\widehat{\mathbf{V}}_T^*(\bgamma)^{-1}\|_{\mathrm{op}} = \mathcal{O}_{\mathbb{P}^*}(1)$ in probability. Consequently,
\[
\mathrm{Sup}\mathcal{W}_T^* \le \left( \sup_{\bgamma\in\boldsymbol{\Gamma}} \|\mathbb{G}_T^*(\bgamma)\|_2^2 \right) \left( \sup_{\bgamma\in\boldsymbol{\Gamma}} \|\widehat{\mathbf{V}}_T^*(\bgamma)^{-1}\|_{\mathrm{op}} \right) = \mathcal{O}_{\mathbb{P}^*}(1)
\]
in probability. Because $\mathrm{Sup}\mathcal{W}_T \stackrel{p}{\longrightarrow} \infty$, the conditional tail probability satisfies $q_T^* \coloneqq \mathbb{P}^*(\mathrm{Sup}\mathcal{W}_T^* \ge \mathrm{Sup}\mathcal{W}_T) \stackrel{p}{\longrightarrow} 0$. Conditional on the sample $\mathcal{W}_T$, the $B$ independent bootstrap replications yield $B p_{\mathrm{boot}} \mid \mathcal{W}_T \sim \operatorname{Binomial}(B, q_T^*)$, so that $\mathbb{E}^*[p_{\mathrm{boot}}] = q_T^* \stackrel{p}{\longrightarrow} 0$. Applying the conditional Markov inequality gives $p_{\mathrm{boot}} \stackrel{p}{\longrightarrow} 0$, which ensures $\mathbb{P}(p_{\mathrm{boot}} \le \alpha) \longrightarrow 1$ for any nominal level $\alpha \in (0, 1)$, completing the proof.
\end{proof}

\end{document}